\newtheorem{theorem}{Theorem}
\newtheorem{corollary}[theorem]{Corollary}
\newtheorem{lemma}[theorem]{Lemma}
\newtheorem{proposition}[theorem]{Proposition}
\newenvironment{proof}[1][Proof]{\noindent\textbf{#1.} }{\hfill $\square$}
\begin{document}

\title{Smooth Calibration, Leaky Forecasts, Finite Recall, and Nash Dynamics%
\thanks{%
Previous versions: July 2012, February 2015, March 2017. Research of the
second author was partially supported by a European Research Council (ERC)
Advanced Investigator grant. The authors thank Yakov Babichenko for useful
comments, and the editor, associate editor, and referees for their very
careful reading and helpful suggestions.}}
\author{Dean P. Foster\thanks{%
Amazon Inc, New York City, and University of Pennsylvania, Philadelphia. 
\emph{e-mail}: \texttt{dean@foster.net} \ \emph{web page}: \texttt{%
http://deanfoster.net/}} \and Sergiu Hart\thanks{%
Institute of Mathematics, Department of Economics, and Center for the Study
of Rationality, The Hebrew University of Jerusalem. \emph{e-mail}: \texttt{%
hart@huji.ac.il} \ \emph{web page}: \texttt{http://www.ma.huji.ac.il/hart}}}
\date{January 12, 2018}
\maketitle

\begin{abstract}
We propose to smooth out the calibration score, which measures how good a
forecaster is, by combining nearby forecasts. While regular calibration can
be guaranteed only by randomized forecasting procedures, we show that \emph{%
smooth calibration} can be guaranteed by \emph{deterministic} procedures. As
a consequence, it does not matter if the forecasts are \emph{leaked}, i.e.,
made known in advance: smooth calibration can nevertheless be guaranteed
(while regular calibration cannot). Moreover, our procedure has finite
recall, is stationary, and all forecasts lie on a finite grid. To construct
the procedure, we deal also with the related setups of online linear
regression and weak calibration. Finally, we show that smooth calibration
yields uncoupled finite-memory \emph{dynamics} in $n$-person
games---\textquotedblleft smooth calibrated learning"---in which the players
play approximate \emph{Nash equilibria} in almost all periods (by contrast,
calibrated learning, which uses regular calibration, yields only that the
time-averages of play are approximate correlated equilibria).
\end{abstract}

\tableofcontents

%TCIMACRO{%
%\TeXButton{References Without Numbers}{\def\@biblabel#1{#1\hfill}
%\def\thebibliography#1{\section*{References}
%\addcontentsline{toc}{section}{References}
%\list
%{}{
%\labelwidth 0pt
%\leftmargin 1.8em
%\itemindent -1.8em
%\usecounter{enumi}}
%\def\newblock{\hskip .11em plus .33em minus .07em}
%\sloppy\clubpenalty4000\widowpenalty4000
%\sfcode`\.=1000\relax\def\baselinestretch{1}\large \normalsize}
%\let\endthebibliography=\endlist}}%
%BeginExpansion
\def\@biblabel#1{#1\hfill}
\def\thebibliography#1{\section*{References}
\addcontentsline{toc}{section}{References}
\list
{}{
\labelwidth 0pt
\leftmargin 1.8em
\itemindent -1.8em
\usecounter{enumi}}
\def\newblock{\hskip .11em plus .33em minus .07em}
\sloppy\clubpenalty4000\widowpenalty4000
\sfcode`\.=1000\relax\def\baselinestretch{1}\large \normalsize}
\let\endthebibliography=\endlist%
%EndExpansion
%TCIMACRO{%
%\TeXButton{define shfrac}{\newcommand{\shfrac}[2]{\ensuremath{{}^{#1} \hspace{-0.04in}/_{\hspace{-0.03in}#2}}}}}%
%BeginExpansion
\newcommand{\shfrac}[2]{\ensuremath{{}^{#1} \hspace{-0.04in}/_{\hspace{-0.03in}#2}}}%
%EndExpansion
%TCIMACRO{%
%\TeXButton{def \T \B}{\newcommand\T{\rule{0pt}{2.6ex}}
%\newcommand\B{\rule[-1.2ex]{0pt}{0pt}}}}%
%BeginExpansion
\newcommand\T{\rule{0pt}{2.6ex}}
\newcommand\B{\rule[-1.2ex]{0pt}{0pt}}%
%EndExpansion

\section{Introduction\label{s:introduction}}

How good is a forecaster? Assume for concreteness that every day the
forecaster issues a forecast of the type \textquotedblleft the chance of
rain tomorrow is $30\%.$" A simple test one may conduct is to calculate the
proportion of rainy days out of those days for which the forecast was $30\%,$
and compare it to $30\%;$ and do the same for all other forecasts. A
forecaster is said to be \emph{calibrated }if, in the long run, the
differences between the actual proportions of rainy days and the forecasts
are small---no matter what the weather really was (see Dawid 1982).

What if rain is replaced by an event that is under the control of another
agent? If the forecasts are made public before the agent decides on his
action---we refer to this setup as \emph{\textquotedblleft leaky forecasts"}%
---then calibration \emph{cannot} be guaranteed; for example, the agent can
make the event happen if and only if the forecast is less than $50\%$, and
so the forecasting error (that is, the \textquotedblleft calibration score")
is always at least $50\%.$ However, if in each period the forecast and the
agent's decision are made \textquotedblleft simultaneously"---which means
that neither one knows the other's decision before making his own---then
calibration \emph{can} be guaranteed; see Foster and Vohra (1998). The
procedure that yields calibration no matter what the agent's decisions are
requires the use of \emph{randomizations} (e.g., with probability $1/2$ the
forecaster announces $30\%,$ and with probability $1/2$ he announces $60\%).$
Indeed, as the discussion at the beginning of this paragraph suggests, one
cannot have a deterministic procedure that is calibrated (see Dawid 1985 and
Oakes 1985).

Now the standard calibration score is overly fastidious: the days when the
forecast was, say, $30.01\%$ are considered separately from the days when
the forecast was $29.99\%$ (formally, the calibration score is a highly
discontinuous function of the data, i.e., the forecasts and the actions).
This suggests that one first combines all days when the forecast was \emph{%
close to} $30\%,$ and only then compares the $30\%$ with the proportion of
rainy days. If, say, there were $200$ days with a forecast of $30.01\%,$ out
of which $10$ were rainy, and another $100$ days with a forecast of $29.99\%$%
, out of which $80$ were rainy, then the forecaster is very far from being
calibrated; however, he is smoothly calibrated, as his forecasts were all
close to $30\%,$ and there were $90/300=30\%$ rainy days. Undershooting at $%
29.99\%$ and overshooting at $30.01\%$ is now balanced out. Formally, what
this amounts to is applying a so-called \textquotedblleft smoothing"
operation to the forecasting errors (which makes smooth calibration easier
to obtain than calibration).\footnote{%
Corollary \ref{c:K-to-K-lambda} in Section \ref{s:weak calibration} will
formally show that regular calibration implies smooth calibration.}

Perhaps surprisingly, once we consider smooth calibration, there is no
longer a need for randomization when making the forecasts: we will show that
there exist \emph{deterministic} procedures that guarantee smooth
calibration, no matter what the agent does. In particular, it follows that
it does not matter if the forecasts are made known to the agent before his
decision, and so smooth calibration can be guaranteed even when forecasts
may be leaked.\footnote{%
When the forecasting procedure is deterministic it can be simulated by the
agent, and so it is irrelevant whether the agent \emph{observes} the
forecasts, or just \emph{computes} them by himself, before taking his action.%
} This may come as a surprise, because, as pointed out above, an agent who
knows the forecast \emph{before} deciding on the weather will choose rain
when the forecast is less than $50\%$ and no rain otherwise, giving a
calibration error of $50\%$ or more, no matter what the forecaster does.
However, against such an agent one can easily be \emph{smoothly} calibrated,
by forecasting $50.01\%$ on odd days and $49.99\%$ on even days (the
resulting weather will then alternate between rain and no rain, and so half
the days will be rainy days---and all the forecasts are indeed close to $%
50\% $). What this proves is only that one can be smoothly calibrated
against this specific strategy of the agent (this is the strategy that shows
that it is impossible to have calibration with deterministic leaky
procedures); our result shows that one can in fact \emph{guarantee} smooth
calibration with a deterministic strategy, against \emph{any} strategy of
the agent.

The forecasting procedure that we construct and that guarantees smooth
calibration has moreover finite recall (i.e., only the forecasts and actions
of the last $R$ periods are taken into account, for some fixed finite $R$),
and is stationary (i.e., independent of \textquotedblleft calendar time":
the forecast is the same any time that the \textquotedblleft window" of the
past $R\ $periods is the same).\footnote{%
Another, seemingly less elegant, way to obtain this is by restarting the
procedure once in a while; see, e.g., Lehrer and Solan (2009).} Finally, we
can have all the forecasts lie on some finite fixed grid.

The construction starts with the \textquotedblleft online linear regression"
problem, introduced by Foster (1991), where one wants to generate every
period a good linear estimator based only on the data up to that point. We
provide a finite-recall stationary algorithm for this problem; see Section %
\ref{s:linear regression}. We then use this algorithm, together with a
fixed-point argument, to obtain \textquotedblleft weak calibration," a
concept introduced by Kakade and Foster (2004) and Foster and Kakade (2006);
see Section \ref{s:weak calibration}. Section \ref{s:smooth calibration}
shows that weak and smooth calibration are essentially equivalent, which
yields the existence of smoothly calibrated procedures. Finally, these
procedures are used to obtain dynamics (\textquotedblleft smoothly
calibrated learning") that are uncoupled, have finite memory, and are close
to Nash equilibria most of the time (while the similar dynamics that are
based on regular calibration yield only the time average becoming close to
correlated equilibria; see Foster and Vohra 1997).

\subsection{Literature\label{sus:literature}}

The \emph{calibration problem} has been extensively studied, starting with
Dawid (1982), Oakes (1985), and Foster and Vohra (1998); see Olszewski
(2015) for a comprehensive survey of the literature. Kakade and Foster
(2004) and Foster and Kakade (2006) introduced the notion of \emph{weak
calibration}, which shares many properties with smooth calibration. In
particular, both can be guaranteed by deterministic procedures, and both are
of the \textquotedblleft general fixed point" variety: they can find fixed
points of arbitrary continuous functions (see for instance the last
paragraph in Section \ref{sus:leaky}).\footnote{\label%
{ftn:fixed-pt-calibration}They are thus more \textquotedblleft powerful"
than the standard calibration procedures (such as those based on Blackwell's
approachability), which find \emph{linear }fixed points (such as
eigenvectors and invariant probabilities).} However, while weak calibration
may be at times technically more convenient to work with, smooth calibration
is the more natural concept, easier to interpret and understand; it is,
after all, just a standard smoothing of regular calibration.

The \emph{online regression problem}---see Section \ref{s:linear regression}
for details---was introduced by Foster (1991); for further improvements, see
J. Foster (1999), Vovk (2001), Azoury and Warmuth (2001), and the book of
Cesa-Bianchi and Lugosi (2006).

\section{Calibration: Model and Result\label{s:model}}

In this section we present the calibration game in its standard and
\textquotedblleft leaky" versions, introduce the notion of smooth
calibration, and state our main results.

\subsection{The Calibration Game\label{sus:calibration game}}

Let\footnote{%
We denote by $\mathbb{R}^{m}$ the $m$-dimensional Euclidean space, with the
usual $\ell _{2}$-norm $||\cdot ||$.} $C\subseteq \mathbb{R}^{m}$ be a
compact convex set, and let $A\subseteq C$ (for example, $C$ could be the
set of probability distributions $\Delta (A)$ over a finite set $A,$ which
is identified with the set of unit vectors in $C,$ or a product of such
sets). The \emph{calibration game} has two players: the \textquotedblleft
action" player---the \textquotedblleft A-player" for short---and the
\textquotedblleft conjecture" (or \textquotedblleft calibrating")
player---the \textquotedblleft C-player" for short. At each time period $%
t=1,2,...,$ the C-player chooses $c_{t}\in C$ and the A-player chooses $%
a_{t}\in A.$ There is full monitoring and perfect recall: at time $t$ both
players know the realized history $h_{t-1}=(c_{1},a_{1},...,c_{t-1},a_{t-1})%
\in (C\times A)^{t-1}.$

In the \emph{standard} calibration game, $c_{t}$ and $a_{t}$ are chosen
simultaneously (perhaps in a mixed, i.e., randomized, way). In the \emph{%
leaky} calibration game, $a_{t}$ is chosen after $c_{t}$ has been chosen and
revealed; thus, $c_{t}$ is a function of $h_{t-1},$ whereas $a_{t}$ is a
function of $h_{t-1}$ \emph{and }$c_{t}.$ Formally, a pure strategy of the
C-player is $\sigma :\cup _{t\geq 1}(C\times A)^{t-1}\rightarrow C,$ and a
pure strategy of the A-player is $\tau :\cup _{t\geq 1}(C\times
A)^{t-1}\rightarrow A$ in the standard game, and $\tau :\cup _{t\geq
1}(C\times A)^{t-1}\times C\rightarrow A$ in the leaky game. A pure strategy
of the C-player will also be referred to as \emph{deterministic.}

The calibration score---which the C-player wants to minimize---is defined%
\emph{\ }at time $T\geq 1$ as follows. For every forecast $c$ in $C$ let $%
n(c)\equiv n_{T}(c):=|\{1\leq t\leq T:c_{t}=c\}$ be the number of times that
it has been used, and let 
\[
\bar{a}(c)\equiv \bar{a}_{T}(c):=\frac{1}{n(c)}\sum_{s=1}^{T}\mathbf{1}%
_{c_{s}=c}\,a_{s} 
\]%
be the average of the actions in the periods when the forecast was $c,$
where we write $\mathbf{1}_{x=y}$ for the indicator that $x=y$ (i.e., $%
\mathbf{1}_{x=y}=1$ when $x=y$ and $\mathbf{1}_{x=y}=0$ otherwise); $\bar{a}%
(c)$ is defined only when $c$ appears in the sequence $c_{1},...,c_{T},$
i.e., $n(c)>0$. The \emph{calibration score} at time $T$ is then defined as%
\footnote{%
The summation is over all $c$ with $n(c)>0,$ and we use the Euclidean norm
(the squared distance $||\bar{a}(c)-c||^{2}$ may well be used instead, in
line with standard statistics usage).}%
\[
K_{T}:=\sum_{c\in C}\frac{n(c)}{T}||\bar{a}(c)-c||. 
\]%
Thus $K_{T}$ is the mean distance between the forecast $c$ and the average $%
\bar{a}(c)$ of the actions $a$ in those periods where the forecast was $c,$
weighted proportionately to how often each forecast $c$ has been used in
those $T$ periods. An alternative formulation is easily seen to be 
\begin{equation}
K_{T}=\frac{1}{T}\sum_{t=1}^{T}||\bar{a}_{t}-c_{t}||,  \label{eq:KT}
\end{equation}%
where $\bar{a}_{t}:=\bar{a}(c_{t}),$ i.e.,%
\[
\bar{a}_{t}:=\frac{\sum_{s=1}^{T}\mathbf{1}_{c_{s}=c_{t}}\,a_{s}}{%
\sum_{s=1}^{T}\mathbf{1}_{c_{s}=c_{t}}}; 
\]%
indeed, for each $c$ there are $n(c)$ identical terms in (\ref{eq:KT}) that
each equal $||\bar{a}(c)-c||.$

\subsection{Smooth Calibration\label{sus:smooth calibration}}

We introduce the notion of \textquotedblleft smooth calibration." A \emph{%
smoothing function} is a function $\Lambda :C\times C\rightarrow \mathbb{[}%
0,1]$ with $\Lambda (c,c)=1$ for every $c.$ Its interpretation is that $%
\Lambda (c^{\prime },c)$ gives the weight that we assign to $c^{\prime }$
when we are at $c.$ We will use $\Lambda (c^{\prime },c)$ instead of the
indicator $\mathbf{1}_{c^{\prime }=c}$ to \textquotedblleft smooth" out the
forecasts and the average actions. Specifically, put%
\[
\bar{a}_{t}^{\Lambda }:=\frac{\sum_{s=1}^{T}\Lambda (c_{s},c_{t})\,a_{s}}{%
\sum_{s=1}^{T}\Lambda (c_{s},c_{t})}\text{ \ \ and\ \ \ }c_{t}^{\Lambda }:=%
\frac{\sum_{s=1}^{T}\Lambda (c_{s},c_{t})\,c_{s}}{\sum_{s=1}^{T}\Lambda
(c_{s},c_{t})}. 
\]%
The $\Lambda $\emph{-smoothed calibration score} at time $T$ is then defined
as

\begin{equation}
K_{T}^{\Lambda }=\frac{1}{T}\sum_{t=1}^{T}||\bar{a}_{t}^{\Lambda
}-c_{t}^{\Lambda }||.  \label{eq:K-def}
\end{equation}

A standard (and useful) assumption is a Lipschitz condition: there exists $%
L<\infty $ such that \TEXTsymbol{\vert}$\Lambda (c^{\prime },c)-\Lambda
(c^{\prime \prime },c)|\leq L||c^{\prime }-c^{\prime \prime }||$ for all $%
c,c^{\prime },c^{\prime \prime }\in C.$ Thus, the functions $\Lambda (\cdot
,c)$ are uniformly Lipschitz: $\mathcal{L}(\Lambda (\cdot ,c))\leq L$ for
every $c\in C,$ where $\mathcal{L}(f):=\sup \{\left\Vert
f(x)-f(y)\right\Vert /\left\Vert x-y\right\Vert :x,y\in X,~x\neq y\}$
denotes the \emph{Lipschitz constant }of the function $f$ (if $f$ is not a
Lipschitz function then $\mathcal{L}(f)=+\infty ;$ when $\mathcal{L}(f)\leq
L $ we say that $f$ is $L$\emph{-Lipschitz}).

Two classic examples of Lipschitz smoothing functions are: (i) the so-called 
\emph{tent function }$\Lambda (c^{\prime },c)=[1-||c^{\prime }-c||/\delta
]_{+}$ for $\delta >0$, where $[z]_{+}:=\max \{z,0\}$; thus, only points $%
c^{\prime }$ within distance $\delta $ of $c$ are considered, and their
weight is proportional to the distance from $c$ 
%TCIMACRO{\TeXButton{BeginFigure}{\begin{figure}[htbp] \centering}}%
%BeginExpansion
\begin{figure}[htbp] \centering%
%EndExpansion
%TCIMACRO{%
%\TeXButton{Figure}{\begin{pspicture}(-0.05,0)(12,4)
% \psset{unit=0.5cm}
% \psline[linewidth=1pt]{->}(0,1)(10,1)
% \psline[linewidth=1pt]{->}(0,1)(0,7)
% \psline[linewidth=3pt,linecolor=red]{-c}(0,1)(3.8,1)
% \psline[linewidth=3pt,linecolor=red]{c-}(4.2,1)(9.5,1)
% \pscircle*[linecolor=red](4,5.6){0.2}
% \psline[linewidth=1pt]{}(4,0.8)(4,1.2)
% \rput(4,0.5){$c$}
% \rput(10.5,1){$c'$}
% \rput(-0.5,5.6){$1$}
% \psline[linewidth=1pt]{}(-0.1,5.6)(0.1,5.6)
% \psset{origin={14,0}}
% \psline[linewidth=1pt]{->}(0,1)(10,1)
% \psline[linewidth=1pt]{->}(0,1)(0,7)
% \psline[linewidth=1pt]{}(4,0.8)(4,1.2)
% \rput(18,0.5){$c$}
% \rput(24.5,1){$c'$}
% \rput(13.5,5.6){$1$}
% \psline[linewidth=1pt]{}(-0.1,5.6)(0.1,5.6)
% \psline[linewidth=3pt,linecolor=blue]{}(0,1)(2,1)(4,5.6)(6,1)(9.5,1)
% \psline[linewidth=1pt]{}(2,0.8)(2,1.2)
% \psline[linewidth=1pt]{}(6,0.8)(6,1.2)
% \rput(16,0.5){$c-\delta$}
% \rput(20,0.5){$c+\delta$}
%\end{pspicture}}}%
%BeginExpansion
\begin{pspicture}(-0.05,0)(12,4)
 \psset{unit=0.5cm}
 \psline[linewidth=1pt]{->}(0,1)(10,1)
 \psline[linewidth=1pt]{->}(0,1)(0,7)
 \psline[linewidth=3pt,linecolor=red]{-c}(0,1)(3.8,1)
 \psline[linewidth=3pt,linecolor=red]{c-}(4.2,1)(9.5,1)
 \pscircle*[linecolor=red](4,5.6){0.2}
 \psline[linewidth=1pt]{}(4,0.8)(4,1.2)
 \rput(4,0.5){$c$}
 \rput(10.5,1){$c'$}
 \rput(-0.5,5.6){$1$}
 \psline[linewidth=1pt]{}(-0.1,5.6)(0.1,5.6)
 \psset{origin={14,0}}
 \psline[linewidth=1pt]{->}(0,1)(10,1)
 \psline[linewidth=1pt]{->}(0,1)(0,7)
 \psline[linewidth=1pt]{}(4,0.8)(4,1.2)
 \rput(18,0.5){$c$}
 \rput(24.5,1){$c'$}
 \rput(13.5,5.6){$1$}
 \psline[linewidth=1pt]{}(-0.1,5.6)(0.1,5.6)
 \psline[linewidth=3pt,linecolor=blue]{}(0,1)(2,1)(4,5.6)(6,1)(9.5,1)
 \psline[linewidth=1pt]{}(2,0.8)(2,1.2)
 \psline[linewidth=1pt]{}(6,0.8)(6,1.2)
 \rput(16,0.5){$c-\delta$}
 \rput(20,0.5){$c+\delta$}
\end{pspicture}%
%EndExpansion
\caption{\emph{Left}: The indicator function
$\textbf{1}_{x=c}$. 
\emph{Right}: The 
$\delta$-tent smoothing function
$\Lambda(x,c)=[1-||x-c||/\delta]_+$ with Lipschitz bound
$L=1/\delta$.\label{fg:Lambda}}%
%TCIMACRO{\TeXButton{EndFigure}{\end{figure}}}%
%BeginExpansion
\end{figure}
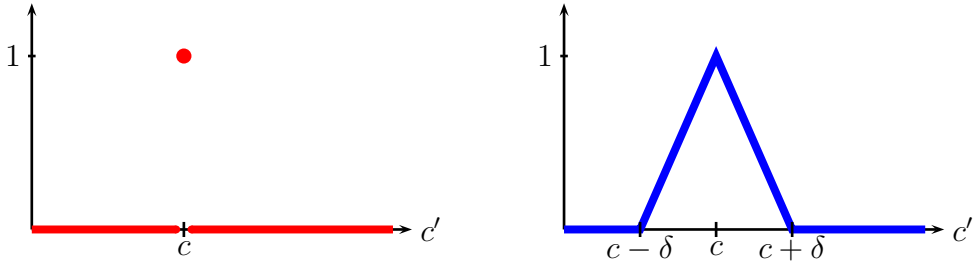%
%EndExpansion
(see Figure \ref{fg:Lambda} \emph{Right} for this function $\Lambda ,$ and
compare it with the indicator function in Figure \ref{fg:Lambda} \emph{Left}%
); and (ii) the so-called \emph{Guassian} (or \emph{normal}) smoothing
function $\Lambda (c^{\prime },c)=\exp (-||c^{\prime }-c||^{2}/(2\sigma
^{2}))$.

\bigskip

\noindent \textbf{Remarks. }\emph{(a)} The original calibration score $K_{T}$
is obtained when $\Lambda $ is the indicator function, i.e., $\Lambda
(c^{\prime },c)=\mathbf{1}_{c^{\prime }=c}$ for all $c,c^{\prime }\in C.$

\emph{(b)} The normalization $\Lambda (c,c)=1$ pins down the Lipschitz
constant (otherwise one could replace $\Lambda $ with $\alpha \Lambda $ for
small $\alpha >0,$ and so lower the Lipschitz constant without affecting the
score).

\emph{(c)} Smoothing both $\bar{a}_{t}$ and $c_{t}$ and then taking the
difference is the same as smoothing the difference: $\bar{a}_{t}^{\Lambda
}-c_{t}^{\Lambda }=(\bar{a}_{t}-c_{t})^{\Lambda }.$ Moreover, smoothing $%
a_{t}$ is the same as smoothing $\bar{a}_{t},$ i.e., $\bar{a}_{t}^{\Lambda
}=a_{t}^{\Lambda }.$

\emph{(d)} An alternative score smoothes only the average action $\bar{a}%
_{t} $, but not the forecast $c_{t}$:%
\[
\tilde{K}_{T}^{\Lambda }=\frac{1}{T}\sum_{t=1}^{T}||\bar{a}_{t}^{\Lambda
}-c_{t}||. 
\]%
If the smoothing function puts positive weight only in small neighborhoods,
i.e., there is $\delta >0$ such that $\Lambda (c^{\prime },c)>0$ only when $%
||c^{\prime }-c||\leq \delta ,$ then the difference between $K_{T}^{\Lambda
} $ and $\tilde{K}_{T}^{\Lambda }$ is at most $\delta $ (because in this
case $||c_{t}^{\Lambda }-c_{t}||\leq \delta $ for every $t).$ More
generally, $|K_{T}^{\Lambda }-\tilde{K}_{T}^{\Lambda }|\leq \delta $ when $%
(1/T)\sum_{t=1}^{T}||c_{t}^{\Lambda }-c_{t}||\leq \delta $ for any
collection of points $c_{1},...,c_{T}\in C,$ which is indeed the case, for
instance, for the Gaussian smoothing with small enough $\sigma ^{2}.$ The
reason that we prefer to use $K^{\Lambda }$ rather than $\tilde{K}^{\Lambda
} $ is that $K^{\Lambda }$ vanishes when there is perfect calibration (i.e., 
$\bar{a}_{t}=c_{t}$ for all $t),$ whereas $\tilde{K}^{\Lambda }$ remains
positive; clean statements such as $K_{T}^{\Lambda }\leq \varepsilon $
become $\tilde{K}_{T}^{\Lambda }\leq \varepsilon +\delta .$

\bigskip

Finally, given $\varepsilon >0$ and $L<\infty $, we will say that a strategy
of the C-player---which is also called a \textquotedblleft procedure"---is $%
(\varepsilon ,L)$\emph{-smoothly calibrated} if there is $T_{0}\equiv
T_{0}(\varepsilon ,L)$ such that%
\begin{equation}
K_{T}^{\Lambda }=\frac{1}{T}\sum_{t=1}^{T}\left\Vert \bar{a}_{t}^{\Lambda
}-c_{t}^{\Lambda }\right\Vert \leq \varepsilon  \label{eq:smooth-calib}
\end{equation}%
holds almost surely, for every strategy of the A-player, every $T>T_{0},$
and every smoothing function $\Lambda :C\times C\rightarrow \lbrack 0,1]$
that is $L$-Lipschitz in the first coordinate$.$ Unlike standard
calibration, which can be guaranteed only with high probability, smooth
calibration may be obtained by deterministic procedures---as will be shown
below---in which case we may well require (\ref{eq:smooth-calib}) to always
hold (rather than just almost surely).\textbf{\ }

\subsection{Leaky Forecasts\label{sus:leaky}}

We will say that a procedure (i.e., a strategy of the C-player) is \emph{%
leaky }(smoothly) calibrated if it is (smoothly) calibrated also in the
leaky setup, that is, against an A-player who may choose his action $a_{t}$
at time $t$ depending on the forecast $c_{t}$ made by the C-player at time $%
t $ (i.e., the A-player moves after the C-player). While, as we saw in the
Introduction, there are no leaky calibrated procedures, we will show that
there are \emph{leaky smoothly calibrated} procedures.

Deterministic procedures (i.e., pure strategies of the C-player) are clearly
leaky: the A-player can use the procedure at each period $t$ to compute $%
c_{t}$ as a function of the history $h_{t-1},$ and only then determine his
action $a_{t}.$ Thus, in particular, there cannot be deterministic
calibrated procedures (because there are no leaky calibrated procedures);
see Dawid (1985) and Oakes (1985).

In the case of smooth calibration, the procedure that we construct is
deterministic, and thus leaky smoothly calibrated. However, there are also
randomized leaky smooothly calibrated procedures. One example is the simple
calibrated procedure of D. Foster (1999) in the one-dimensional case (where $%
A=\{$\textquotedblleft rain"~, \textquotedblleft no rain"$\}$ and $C=[0,1]$%
): the forecast there is \textquotedblleft almost deterministic," in the
sense that the randomization is only between two very close forecasts (such
as $29.99\%$ and $30.01\%)$, and so can be shown to be leaky smoothly
calibrated. For another example, see footnote \ref{ftn:random-fixedpoint} in
Section \ref{s:weak calibration} below.

A particular instance of the leaky setup is one where the A-player uses a
fixed reaction function $g:C\rightarrow A$ that is a continuous mapping of
forecasts to actions; thus, $a_{t}=g(c_{t})$ (independently of time $t$ and
history $h_{t-1}$). In this case, leaky smooth calibration implies that most
of the forecasts that are used must be approximate fixed points of $g;$
indeed, in every period in which the forecast is $c$ the action is the same,
namely, $g(c),$ and so the average of the actions in all the periods where
the forecast is (close to) $c$ is (close to) $g(c)$ (use the continuity of $%
g $ here); formally, see the arguments in part (iv) of the proof of Theorem %
\ref{th:finite} in Section \ref{s:Nash}). Thus, leaky procedures find
(approximate) fixed points for arbitrary continuous functions $g$, and so
must in general be more complex than the procedures that yield calibration
(such as those obtained by Blackwell's approachability); cf. the complexity
class \textsc{PPAD} (Papadimitriou 1994) in the computer science literature
(see also Hazan and Kakade 2012 for the connection to calibration).

\subsection{Result\label{sus:results}}

A strategy $\sigma $ has \emph{finite recall }and is\emph{\ stationary} if
there exists a finite integer $R\geq 1$ and a function $\tilde{\sigma}%
:(C\times A)^{R}\rightarrow C$ such that 
\[
\sigma (h_{T-1})=\tilde{\sigma}%
(c_{T-R},a_{T-R},c_{T-R+1},a_{T-R+1},...,c_{T-1},a_{T-1}) 
\]%
for every $T>R$ and history $h_{T-1}=(c_{t},a_{t})_{1\leq t\leq T-1}.$ Thus,
only the \textquotedblleft window" consisting of the last $R$ periods
matters; the rest of the history, as well as the calendar time $T,$ do not.
Finally, a finite set $D\subseteq C$ is a $\delta $\emph{-grid} for $C$ if
for every $c\in C$ there is $d\equiv d(c)\in D$ such that $||d-c||\leq
\delta .$

Our result is:

\begin{theorem}
\label{th:smooth-c}For every $\varepsilon >0$ and $L<\infty $ there is an $%
(\varepsilon ,L)$-smoothly calibrated procedure. Moreover, the procedure may
be taken to be:

\begin{itemize}
\item deterministic;

\item leaky;

\item with finite recall and stationary; and

\item with all the forecasts lying on a finite grid.\footnote{%
The sizes $R$ of the recall and $\delta $ of the grid depend on $\varepsilon 
$, $L,$ the dimension $m,$ and the bound on the compact set $C.$}
\end{itemize}
\end{theorem}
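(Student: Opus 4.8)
The plan is to build the smoothly calibrated procedure in three stages, each reducing to the previous one, exactly as the introduction outlines. First I would solve the \emph{online linear regression} problem of Section~\ref{s:linear regression} with a finite-recall stationary algorithm: at each period, produce a linear predictor whose accumulated squared error matches (up to $o(T)$) that of the best fixed linear predictor in hindsight, but using only the last $R$ data points and no dependence on calendar time. The standard regret bounds for online regression are not stationary or finite-recall on their own, so the work here is to discretize the coefficient space onto a finite grid, replay the algorithm on a sliding window of length $R$, and check that the windowing error and the discretization error are each $O(\varepsilon)$ once $R$ is large enough (depending on $\varepsilon$, $L$, $m$, and $\mathrm{diam}(C)$) and the grid fine enough.

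Second, I would pass from online regression to \emph{weak calibration} (the Kakade--Foster / Foster--Kakade notion of Section~\ref{s:weak calibration}) by the usual fixed-point argument. One partitions $C$ (or a finite grid on $C$) and, for each grid point, runs an instance of the regression/approachability algorithm; then at each period one has a map from candidate forecasts to a ``direction of improvement'' vector, and a fixed point of this map (which exists by Brouwer, since the map is continuous — this is where the Lipschitz smoothing and the use of a \emph{weak} rather than exact target is essential) gives the deterministic forecast $c_t$. Because each component algorithm has finite recall and is stationary, and the fixed-point selection is a fixed function of the current window, the resulting weak-calibration procedure is again finite-recall and stationary, and all its forecasts lie on the chosen finite grid. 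The deterministic character is exactly what makes it automatically leaky, as already noted in Section~\ref{sus:leaky}.

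Third, I would invoke the equivalence established in Section~\ref{s:smooth calibration} between weak calibration and $\Lambda$-smooth calibration: a procedure that is weakly calibrated is, for the appropriate parameters, $(\varepsilon, L)$-smoothly calibrated, uniformly over all $L$-Lipschitz smoothing functions $\Lambda$. Since this equivalence is a deterministic, ``pointwise in the history'' statement (one bounds $K_T^{\Lambda}$ by a weak-calibration quantity plus error terms controlled by $L$, $\delta$, and $m$), it preserves finite recall, stationarity, determinism, leakiness, and the finite grid. Combining the three stages yields a procedure with all four listed properties, with $R$ and $\delta$ depending only on $\varepsilon$, $L$, $m$, and the bound on $C$.

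The main obstacle I expect is the \emph{reconciliation of finite recall with the fixed-point step}. A naive sliding window breaks the delicate cancellations that approachability-type arguments rely on (the ``drift'' of the average is $O(1/t)$ over the whole history but only $O(1/R)$ over a window), and simultaneously the Brouwer fixed point must depend only on the current window in a way that does not introduce discontinuities or instabilities that destroy the calibration guarantee. Getting the three error budgets — windowing error, grid-discretization error, and the weak-vs-smooth slack — to add up to less than $\varepsilon$ with a single finite $R$ and single finite $\delta$, uniformly over the adversary and over all $L$-Lipschitz $\Lambda$, is the technical heart of the argument; everything else is assembling results already set up in the earlier sections.
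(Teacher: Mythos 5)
Your plan follows essentially the same route as the paper: a finite-recall, stationary online linear regression algorithm (Theorem \ref{th:regression-R}, which the paper obtains by discounting and windowing the Azoury--Warmuth forward algorithm rather than by discretizing the coefficient space), then deterministic weak calibration by feeding the forecast itself into the regression as input and taking a Brouwer fixed point, with a final rounding of forecasts onto a finite grid (Theorem \ref{th:weak-calib}), and finally the weak-to-smooth conversion (Proposition \ref{p:weak->smooth}), with determinism yielding leakiness. So the proposal is correct in approach and matches the paper's proof; what remains is precisely the error bookkeeping you identify.
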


The proof will proceed as follows. First, we construct deterministic
finite-recall algorithms for the \emph{online linear regression} problem
(cf. Foster 1991, Azoury and Warmuth 2001); see Theorem \ref{th:regression-R}
in Section \ref{s:linear regression}. Next, we use these algorithms to get
deterministic finite-recall \emph{weakly calibrated} procedures (cf. Foster
and Kakade 2004, 2006); see Theorem \ref{th:weak-calib} in Section \ref%
{s:weak calibration}. Finally, we obtain smooth calibration from weak
calibration; see Section \ref{s:smooth calibration}.

\section{Online Linear Regression\label{s:linear regression}}

Classical linear regression tries to predict a variable $y$ from a vector $x$
of $d$ variables (and so $y\in \mathbb{R}$ and $x\in \mathbb{R}^{d}).$ There
are observations $(x_{t},y_{t})_{t},$ and one typically assumes that%
\footnote{%
Vectors are viewed as column vectors, and $\theta ^{\prime }$ denotes the
transpose of $\theta $ (thus $\theta ^{\prime }x$ is the scalar product $%
\theta \cdot x$ of $\theta $ and $x).$} $y_{t}=\theta ^{\prime
}x_{t}+\epsilon _{t},$ where $\epsilon _{t}$ are (zero-mean normally
distributed) error terms. The optimal estimator for $\theta $ is then given
by the least squares method; i.e., $\theta $ minimizes $(1/T)\sum_{t=1}^{T}%
\psi _{t}(\theta )$ with 
\[
\psi _{t}(\theta ):=(y_{t}-\theta ^{\prime }x_{t})^{2} 
\]%
for every $t.$

In the \emph{online linear regression} problem (Foster 1991; see Section \ref%
{sus:literature}), the observations arrive sequentially, and at each time
period $t$ we want to determine $\theta _{t}$ given the information at that
time, namely, $(x_{1},y_{1}),...,(x_{t-1},y_{t-1})$ and $x_{t}$ only. The
goal is to bound the difference between the mean square errors in the online
case and the offline case (i.e., \textquotedblleft in hindsight"); namely,%
\[
\frac{1}{T}\sum_{t=1}^{T}\psi _{t}(\theta _{t})-\frac{1}{T}%
\sum_{t=1}^{T}\psi _{t}(\theta ). 
\]

Thus, an online linear-regression \emph{algorithm }takes as input a sequence 
$(x_{t},y_{t})_{t\geq 1}$ in $\mathbb{R}^{d}\times \mathbb{R}$ and gives as
output a sequence $(\theta _{t})_{t\geq 1}$ in $\mathbb{R}^{d},$ such that $%
\theta _{t}$ is a function only of $x_{1},y_{1},...,x_{t-1},y_{t-1},x_{t},$
for each $t.$

Our result is:

\begin{theorem}
\label{th:regression-R}Let $X,Y$ be positive reals, and $\varepsilon >0.$
Then there exists a positive integer $R_{0}\equiv R_{0}(\varepsilon ,X,Y,d)$
such that for every $R>R_{0}$ there is an $R$-recall stationary
deterministic algorithm that gives $(\theta _{t})_{t\geq 1}$ , such that%
\begin{eqnarray}
\frac{1}{R}\sum_{t=T-R+1}^{T}\left[ \psi _{t}(\theta _{t})-\psi _{t}(\theta )%
\right] &\leq &\varepsilon (1+\left\Vert \theta \right\Vert ^{2})\text{\ \
and}  \label{eq:regr-R-R} \\
\frac{1}{T}\sum_{t=1}^{T}\left[ \psi _{t}(\theta _{t})-\psi _{t}(\theta )%
\right] &\leq &\varepsilon (1+\left\Vert \theta \right\Vert ^{2})
\label{eq:regression-R}
\end{eqnarray}%
hold for every $T\geq R,$ every $\theta \in \mathbb{R}^{d},$ and every
sequence $(x_{t},y_{t})_{t\geq 1}$ in $\mathbb{R}^{d}\times \mathbb{R}$ with 
$\left\Vert x_{t}\right\Vert \leq X$ and $\left\vert y_{t}\right\vert \leq Y$
for all $t.$
\end{theorem}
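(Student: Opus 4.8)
The plan is to use a \emph{discounted, window-truncated} variant of the classical ridge / Foster--Azoury--Warmuth online-regression rule. Fix a ridge weight $\lambda>0$ (e.g.\ $\lambda=\max\{X^{2},1\}$) and a discount factor $\gamma\in(0,1)$, and at time $t$ predict with $\theta_{t}=A_{t}^{-1}b_{t-1}$, where $A_{t}=\lambda I+\sum_{s}\gamma^{t-1-s}x_{s}x_{s}'$ and $b_{t-1}=\sum_{s}\gamma^{t-1-s}x_{s}y_{s}$, both sums running over $\max\{1,t-R\}\le s\le t-1$; this is visibly $R$-recall, stationary and deterministic (it is ``follow the leader'' for a discounted ridge-regularised squared loss, and one may optionally add the Azoury--Warmuth ``forward'' term $x_{t}x_{t}'$ to $A_{t}$ to sharpen constants). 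Discounting is not cosmetic: the plain sliding-window least-squares rule, which weights the last $R$ observations \emph{equally}, can suffer $\Omega(1)$ average excess loss over a window $[T-R+1,T]$, because an adversary may ``poison'' it by presenting, in the $R$ periods just before that window, observations that disagree sharply with everything that follows; only a rule whose \emph{effective} memory is short relative to $R$ escapes this, and the exponential weights $\gamma^{t-1-s}$ (with $1-\gamma\to 0$ slowly enough that $R(1-\gamma)\to\infty$) do exactly that.

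The first substantive step is a \emph{weighted} regret bound for the ``freshly restarted at time $a$'' version of the rule (the one whose sums run over $a\le s\le t-1$). Writing $g_{s}:=\psi_{s}(\theta_{s})-\psi_{s}(\theta)$ and $\Sigma_{t}:=\sum_{s=a}^{t}\gamma^{t-s}g_{s}$, I would show $\Sigma_{t}\le\lambda\|\theta\|^{2}+\beta$ for \emph{every} $t\ge a$, where $\beta$ depends only on $d,X,Y,\lambda,\gamma$ --- of order $dY^{2}\log(1+X^{2}/(\lambda(1-\gamma)))$ --- and in particular does not involve $\theta$ or $t$. This is the standard ``be-the-leader''/FTRL potential argument adapted to the discounted setting: the discounted regularised-loss functional satisfies the recursion $F_{t}=\gamma F_{t-1}+(1-\gamma)\lambda\|\cdot\|^{2}+\psi_{t}$, its running minima $P_{t}:=\min_{\theta}F_{t}(\theta)$ obey $P_{t}\ge\gamma P_{t-1}$ and (for the quadratic loss) the one-step inequality $\psi_{t}(\theta_{t})\le(P_{t}-\gamma P_{t-1})+Y^{2}x_{t}'A_{t}^{-1}x_{t}$, and weighting by $\gamma^{t-s}$ telescopes the $P$-differences to $P_{t}\le F_{t}(\theta)=\lambda\|\theta\|^{2}+\sum_{s=a}^{t}\gamma^{t-s}\psi_{s}(\theta)$; the residual stability sum $Y^{2}\sum_{s=a}^{t}\gamma^{t-s}x_{s}'A_{s}^{-1}x_{s}$ is controlled by a \emph{discounted} log-determinant estimate ($x_{s}'A_{s}^{-1}x_{s}\le\log\det A_{s}-\log\det A_{s-1}+d\log(1/\gamma)$), the geometric weights ensuring that the $d\log(1/\gamma)$ contributions do not accumulate, so the whole sum stays $O(\beta/Y^{2})$ uniformly in $t$.

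The second step converts this into the finite-recall statements (\ref{eq:regr-R-R})--(\ref{eq:regression-R}). From the identity $g_{t}=\Sigma_{t}-\gamma\Sigma_{t-1}$ (with $\Sigma_{a-1}:=0$) one telescopes to $\sum_{t=a}^{b}g_{t}=(1-\gamma)\sum_{t=a}^{b-1}\Sigma_{t}+\Sigma_{b}\le((b-a)(1-\gamma)+1)(\lambda\|\theta\|^{2}+\beta)$, so the \emph{fresh} rule has average excess loss at most $((1-\gamma)+1/\ell)(\lambda\|\theta\|^{2}+\beta)$ over any interval $[a,b]$ of length $\ell$. Passing from the fresh rule to the \emph{actual} (non-restarted, $R$-truncated) rule costs only an $\ell$-independent amount depending on $\gamma,\lambda,X,Y$ alone: the pre-$a$ observations and the dropped tail enter $\theta_{t}$ with total discounted weight $O(\gamma^{t-a}+\gamma^{R})$, and a routine perturbation estimate for $A_{t}^{-1}b_{t-1}$ turns this into an $\ell$-independent (and, in its $\gamma^{R}$ part, exponentially small) contribution to $\sum_{t=a}^{b}g_{t}$. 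Taking $[a,b]=[T-R+1,T]$, so $\ell=R$, gives (\ref{eq:regr-R-R}); taking $[a,b]=[1,T]$, so $\ell=T\ge R$ and $a=1$ (hence no pre-$a$ observations at all), gives (\ref{eq:regression-R}). In both cases one first chooses $\gamma$ close enough to $1$ that $(1-\gamma)(\lambda\|\theta\|^{2}+\beta)\le(\varepsilon/2)(1+\|\theta\|^{2})$, and then $R_{0}$ large enough that the leftover $O(1/\ell)$, $O(1/(\ell(1-\gamma)))$ and $O(\gamma^{R})$ terms are at most $\varepsilon/2$; this makes $R_{0}$ a function of $\varepsilon,X,Y,d$ only, as required.

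The step I expect to be the real obstacle is the discounted potential argument of the second paragraph in the regime where $1-\gamma$ is small but fixed: one must check that the time-varying log-determinant telescoping leaves the stability sum not only of size $O(\log(1/(1-\gamma)))$ but, crucially, \emph{bounded in $t$} --- a careless application produces an extra term linear in the number of steps, which would be fatal once multiplied by $(1-\gamma)\ell$ in the third step --- and, relatedly, one must keep the perturbation estimates that move from the fresh rule to the truncated rule uniform in the interval length. The rest is the by-now-routine bookkeeping of online ridge regression.
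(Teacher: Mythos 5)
Your proposal is correct in outline and follows the same overall architecture as the paper: a discounted forward/ridge rule truncated to an $R$-window, a uniform-in-$t$ bound on the discounted regret, a conversion of that bound into window averages, a matrix-perturbation estimate for the truncation, and then the choice of the discount first and $R$ second (which is exactly why $R_{0}$ depends only on $\varepsilon,X,Y,d$). Where you differ is in how the two key internal lemmas are obtained. For the uniform discounted bound $\Sigma_{t}\le\lambda\|\theta\|^{2}+\beta$ the paper does not redo the FTRL/potential analysis: it embeds the discounted problem into an undiscounted one by inserting, before each real observation, $d$ fictitious observations $(\lambda^{-t/2}\sqrt{a(1-\lambda)}\,e^{(i)},0)$, applies Azoury--Warmuth's Theorem 5.6 as a black box, and then bounds the resulting log-determinant-type sum (its Propositions \ref{p:DFA} and \ref{p:DFA1}); your direct discounted potential argument is the step you rightly flag as delicate, and it does go through (the $d\log(1/\gamma)$ increments are damped geometrically and the oscillation of $\log\det A_{s}$ is bounded by $d\log(1+X^{2}/(\lambda(1-\gamma)))$, giving your $\beta$, which matches the paper's $D_{1}$), but if it gets messy the embedding trick is an off-the-shelf substitute. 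For the window conversion, the paper sums the discounted inequality at the $R$ endpoints $T-R+1,\dots,T$ with weights $1$ and $1-\lambda$ and disposes of the pre-window terms by the crude bound $\tilde g_{t}\ge-(\|\theta\|+1)^{2}$, which is the source of its $(\|\theta\|+1)^{2}/(R(1-\lambda))$ term; your restart-at-$a$ telescoping plus a fresh-versus-actual sensitivity estimate achieves the same with a $\theta$-free error term, and your direct treatment of $[1,T]$ replaces the paper's summation over disjoint $R$-blocks --- both marginally cleaner. One small bookkeeping slip: for $[a,b]=[1,T]$ the dropped-tail ($R$-truncation) contribution to $\sum_{t}g_{t}$ is of order $T\gamma^{R}$ (per-step $O(\gamma^{R})$, exactly the paper's Lemma \ref{p:theta-theta(R)} bound $D_{2}\lambda^{R}$), not interval-length-independent as you state; this is harmless, since your final accounting correctly charges it as an $O(\gamma^{R})$ term in the average.
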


When in addition $\theta $ is bounded,\footnote{%
For example, $\theta $ lies in the unit simplex of $\mathbb{R}^{d}.$} say, $%
\left\Vert \theta \right\Vert \leq M,$ the mean square error of our online
algorithm is guaranteed not to exceed the optimal offline mean square error
by more than $\varepsilon (1+M^{2}).$

The proof of Theorem \ref{th:regression-R} in the remainder of this section
proceeds as follows.

\begin{description}
\item \emph{(i) Forward algorithm} (Section \ref{sus:FA}). We start with the
\textquotedblleft forward algorithm" of Azoury and Warmouth (2001) and the
resulting bound on the sum of regrets $\psi _{t}(\theta _{t})-\psi
_{t}(\theta )$ (Theorem \ref{th:A&W}).

\item \emph{(ii) Discounted forward algorithm }(Section \ref{sus:DFA}).\emph{%
\ }We modify the procedure by introducing a $\lambda $-discount factor,
which gives bounds on the discounted sum of regrets (Propositions \ref{p:DFA}
and \ref{p:DFA1}).

\item \emph{(iii) Windowed discounted forward algorithm} (Section \ref%
{sus:WDFA}). We further modify the procedure by restricting the history to a
window consisting only of the last $R$ periods, which gives bounds on the
sum of regrets over that window (Proposition \ref{p:WDFA}).

\item \emph{(iv) Conclusion} (Section \ref{sus:WDFA}).\emph{\ }One of the
useful properties of discounting is that the weight of the initial segment
from $1$ up to $T-R$ is small relative to the whole sum from $1$ to $T,$ and
so dropping that initial segment has little effect on the procedure and the
resulting estimates. We can thus choose an appropriate discount factor $%
\lambda $ and a window size $R$ yielding the desired bounds on the windowed
sum of regrets, and thus also on the time average of the regrets
(Proposition \ref{p:WDFA2}, which yields Theorem \ref{th:regression-R}).
\end{description}

\subsection{Forward Algorithm\label{sus:FA}}

The starting point is the following algorithm of Azoury and Warmuth (2001,
Section 5.4). For each $a>0,$ the $a$\emph{-forward algorithm} gives%
\footnote{$Z_{t}^{-1}$ is the inverse of the $d\times d$ matrix $Z_{t}$
(which is invertible because $a>0),$ and $I$ denotes the identity matrix.} $%
\theta _{t}=Z_{t}^{-1}v_{t},$ where%
\begin{equation}
Z_{t}=aI+\sum_{q=1}^{t}x_{q}x_{q}^{\prime }\;\;\;\;\text{and\ \ \ \ }%
v_{t}=\sum_{q=1}^{t-1}y_{q}x_{q}.  \label{eq:FA}
\end{equation}

\begin{theorem}[Azoury and Warmuth 2001]
\label{th:A&W}For every $a>0,$ the $a$-forward algorithm yields 
\begin{equation}
\sum_{t=1}^{T}\left[ \psi _{t}(\theta _{t})-\psi _{t}(\theta )\right] \leq
a\left\Vert \theta \right\Vert ^{2}+\sum_{t=1}^{T}y_{t}^{2}\left( 1-\frac{%
\det (Z_{t-1})}{\det (Z_{t})}\right)  \label{eq:5.17}
\end{equation}%
for every $T\geq 1,$ every $\theta \in \mathbb{R}^{d},$ and every sequence $%
(x_{t},y_{t})_{t\geq 1}$ in $\mathbb{R}^{d}\times \mathbb{R}.$
\end{theorem}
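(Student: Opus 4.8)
The plan is to prove Theorem~\ref{th:A&W} by the standard ``follow-the-regularized-leader'' telescoping argument, specialized to the quadratic losses $\psi_t$. Introduce the regularized cumulative loss $L_t(\theta):=a\|\theta\|^2+\sum_{q=1}^{t}\psi_q(\theta)$ and its minimum $\Phi_t:=\min_{\theta\in\mathbb{R}^d}L_t(\theta)$, attained at the ridge estimator $\hat\theta_t=Z_t^{-1}(v_t+y_tx_t)$. Two observations drive the argument: $\Phi_0=\min_\theta a\|\theta\|^2=0$, and since $v_t=\sum_{q=1}^{t-1}y_qx_q$ one has $\hat\theta_{t-1}=Z_{t-1}^{-1}v_t$, which yields the exact quadratic identity $L_{t-1}(\theta)=\Phi_{t-1}+(\theta-\hat\theta_{t-1})'Z_{t-1}(\theta-\hat\theta_{t-1})$. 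Because $a\|\theta\|^2+\sum_{t=1}^{T}\psi_t(\theta)=L_T(\theta)\ge\Phi_T=\sum_{t=1}^{T}(\Phi_t-\Phi_{t-1})$, subtracting $\sum_t\psi_t(\theta)$ from both sides reduces the theorem to the per-period estimate
\[
\psi_t(\theta_t)-(\Phi_t-\Phi_{t-1})\ \le\ y_t^2\left(1-\frac{\det(Z_{t-1})}{\det(Z_t)}\right)\qquad\text{for every }t.
\]

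To prove this per-period estimate I would first obtain $\Phi_t-\Phi_{t-1}$ in closed form. Adding $\psi_t(\theta)=(y_t-\theta'x_t)^2$ to the quadratic expression for $L_{t-1}(\theta)$ above, the resulting quadratic in $\theta$ has Hessian $Z_{t-1}+x_tx_t'=Z_t$ and linear coefficient $-2(v_t+y_tx_t)$, so its minimum is $\Phi_t-\Phi_{t-1}=y_t^2+v_t'Z_{t-1}^{-1}v_t-(v_t+y_tx_t)'Z_t^{-1}(v_t+y_tx_t)$. Substituting $\theta_t=Z_t^{-1}v_t$ into $\psi_t(\theta_t)=(y_t-v_t'Z_t^{-1}x_t)^2$ and expanding, the pure $y_t^2$ term and the terms linear in $y_t$ cancel, leaving $\psi_t(\theta_t)-(\Phi_t-\Phi_{t-1})=y_t^2\,x_t'Z_t^{-1}x_t+\big[(v_t'Z_t^{-1}x_t)^2+v_t'Z_t^{-1}v_t-v_t'Z_{t-1}^{-1}v_t\big]$.

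Two classical matrix identities then close the argument. The matrix determinant lemma gives $\det(Z_t)=\det(Z_{t-1})(1+x_t'Z_{t-1}^{-1}x_t)$, and the Sherman--Morrison formula applied to $Z_t^{-1}=(Z_{t-1}+x_tx_t')^{-1}$ gives $x_t'Z_t^{-1}x_t=x_t'Z_{t-1}^{-1}x_t/(1+x_t'Z_{t-1}^{-1}x_t)$; hence the right-hand side $y_t^2\big(1-\det(Z_{t-1})/\det(Z_t)\big)$ equals exactly $y_t^2\,x_t'Z_t^{-1}x_t$, so it remains only to show that the bracketed term is $\le0$, i.e.\ $(v_t'Z_t^{-1}x_t)^2+v_t'Z_t^{-1}v_t\le v_t'Z_{t-1}^{-1}v_t$. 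Using Sherman--Morrison once more to rewrite $v_t'Z_t^{-1}v_t$ and $v_t'Z_t^{-1}x_t$ in terms of $Z_{t-1}^{-1}$, and writing $s:=x_t'Z_{t-1}^{-1}x_t\ge0$ and $w:=v_t'Z_{t-1}^{-1}x_t$, the bracket collapses to $w^2/(1+s)^2-w^2/(1+s)$, which is $\le0$ because $(1+s)^2\ge1+s$ for $s\ge0$.

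The main obstacle here is bookkeeping rather than any genuine difficulty: one must keep straight that the \emph{forward} algorithm pairs $Z_t$ (which already contains the extra term $x_tx_t'$) with $v_t$, which stops at period $t-1$. It is precisely this ``forward'' asymmetry that makes the telescoping increment $\Phi_t-\Phi_{t-1}$ line up with the loss $\psi_t(\theta_t)$ of the online estimate; the naive ridge estimate $Z_{t-1}^{-1}v_t$ would not yield the clean determinant-ratio bound. One also uses throughout that $a>0$, which guarantees that every $Z_t$ is positive definite, hence invertible with $\det(Z_t)>0$, so all the inverses and determinant ratios above are well defined.
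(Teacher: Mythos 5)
Your argument is correct, and I checked the computations: the reduction to the per-period bound via $\Phi_T=\sum_{t=1}^{T}(\Phi_t-\Phi_{t-1})\leq L_T(\theta)$ is sound; the closed form $\Phi_t-\Phi_{t-1}=y_t^2+v_t'Z_{t-1}^{-1}v_t-(v_t+y_tx_t)'Z_t^{-1}(v_t+y_tx_t)$ is right (using $\hat\theta_{t-1}=Z_{t-1}^{-1}v_t$, which is exactly the ``forward'' asymmetry you flag); the cross terms in $y_t$ do cancel; the matrix determinant lemma and Sherman--Morrison give $y_t^2\,x_t'Z_t^{-1}x_t=y_t^2\bigl(1-\det(Z_{t-1})/\det(Z_t)\bigr)$ exactly; and the leftover bracket indeed reduces to $w^2/(1+s)^2-w^2/(1+s)\leq 0$ with $s=x_t'Z_{t-1}^{-1}x_t\geq 0$, using positive definiteness of $Z_{t-1}$ (guaranteed by $a>0$). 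The route, however, is different from the paper's in a trivial but notable sense: the paper does not prove this statement at all, but cites Theorem 5.6 and Lemma A.1 of Azoury and Warmuth (2001) — a general Bregman-divergence analysis of forward algorithms for exponential-family regression — and only records the notational adjustments (their loss $L_t$ is half of $\psi_t$, their $\eta_t^{-1}$ is $Z_t$, and there is a sign misprint in their (5.17)). Your proof is a self-contained, elementary follow-the-regularized-leader telescoping argument specialized to the quadratic loss; what it buys is independence from the external reference (and its misprints) at the cost of a page of linear algebra, whereas the citation buys brevity and places the bound within Azoury--Warmuth's more general framework. One cosmetic point: the ``Hessian'' of your quadratic is $2Z_t$, not $Z_t$; you mean the coefficient matrix of the quadratic form, which is what your minimization formula actually uses, so nothing breaks.
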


\begin{proof}
Theorem 5.6 and Lemma A.1 in Azoury and Warmuth (2001), where $Z_{t}$
denotes their $\eta _{t}^{-1}$ matrix; the second term in their formula
(5.17) is nonnegative since $\eta _{t}$ is a positive definite matrix.%
\footnote{%
Our statement is different from theirs because $\psi _{t}$ equals twice $%
L_{t},$ and there is a misprinted sign in the first line of their formula
(5.17).}
\end{proof}

\subsection{Discounted Forward Algorithm\label{sus:DFA}}

Let $a>0$ and $0<\lambda <1.$ The $\lambda $\emph{-discounted }$a$-\emph{%
forward algorithm} gives $\theta _{t}=Z_{t}^{-1}v_{t},$ where%
\begin{equation}
Z_{t}=aI+\sum_{q=1}^{t}\lambda ^{t-q}x_{q}x_{q}^{\prime }\;\;\;\;\text{and\
\ \ \ }v_{t}=\sum_{q=1}^{t-1}\lambda ^{t-q}y_{q}x_{q}.  \label{eq:DFA}
\end{equation}

\begin{proposition}
\label{p:DFA}For every $a>0$ and $0<\lambda <1,$ the $\lambda $-discounted $%
a $-forward algorithm yields%
\begin{equation}
\sum_{t=1}^{T}\lambda ^{T-t}\left[ \psi _{t}(\theta _{t})-\psi _{t}(\theta )%
\right] \leq a\left\Vert \theta \right\Vert ^{2}+\sum_{t=1}^{T}\lambda
^{T-t}y_{t}^{2}\left( 1-\lambda ^{d}\frac{\det (Z_{t-1})}{\det (Z_{t})}%
\right)  \label{eq:est-DFA}
\end{equation}%
for every $T\geq 1,$ every $\theta \in \mathbb{R}^{d},$ and every sequence $%
(x_{t},y_{t})_{t\geq 1}\ $in $\mathbb{R}^{d}\times \mathbb{R}.$
\end{proposition}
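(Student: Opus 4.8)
The plan is to mimic the proof of the Azoury--Warmuth bound (Theorem \ref{th:A&W}) for the un-discounted forward algorithm, but carrying the discount factor $\lambda$ through every step. The key observation is that the $\lambda$-discounted forward algorithm is, up to reweighting, a forward algorithm applied to geometrically reweighted data. Concretely, I would first rewrite the discounted quantities in a form that exposes a telescoping structure: with $Z_t$ and $v_t$ as in \eqref{eq:DFA}, note the recursions $Z_t = \lambda Z_{t-1} + x_t x_t' + a(1-\lambda)I$ and $v_t = \lambda v_{t-1} + y_{t-1}x_{t-1}$, so that $\theta_t = Z_t^{-1}v_t$ is the minimizer of the discounted penalized loss $a\|\theta\|^2 + \sum_{q=1}^{t-1}\lambda^{t-1-q}(y_q - \theta'x_q)^2$ (note the absence of a $y_t^2/$-type term at the current step, exactly as in the forward algorithm).

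First I would set up the standard potential/telescoping argument. Define, for a fixed comparator $\theta$, the per-step ``regret contribution'' and find a potential function $\Phi_t$ (built from $Z_t$, $v_t$, and a quadratic form in $\theta$) such that the one-step change $\Phi_{t-1} - \Phi_t$ dominates $\lambda^{T-t}[\psi_t(\theta_t) - \psi_t(\theta)]$ up to the error term $\lambda^{T-t}y_t^2(1 - \lambda^d \det(Z_{t-1})/\det(Z_t))$. Summing over $t=1,\dots,T$ telescopes the left side to $\Phi_0 - \Phi_T$; the initial potential $\Phi_0$ should equal $a\|\theta\|^2$ (coming from the $aI$ regularizer evaluated at $\theta$), and $\Phi_T \ge 0$, giving exactly \eqref{eq:est-DFA}. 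The $\lambda^d$ factor multiplying the determinant ratio is the one genuinely new feature relative to Theorem \ref{th:A&W}: it arises because $Z_t = \lambda Z_{t-1} + (\text{rank-one and }aI\text{ correction})$, so that $\det(Z_t) \ge \lambda^d \det(Z_{t-1}) + (\text{positive})$, and more precisely the matrix-determinant-lemma step that in the un-discounted case produces $1 - \det(Z_{t-1})/\det(Z_t)$ now produces $1 - \lambda^d\det(Z_{t-1})/\det(Z_t)$ because $\det(\lambda Z_{t-1}) = \lambda^d \det(Z_{t-1})$.

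An alternative, and perhaps cleaner, route is a direct reduction: introduce rescaled variables so that the discounted algorithm at a fixed horizon $T$ becomes literally an un-discounted forward algorithm, apply Theorem \ref{th:A&W} verbatim, and then translate the conclusion back. The subtlety there is that the discount depends on $T$ (the weight $\lambda^{T-t}$ references the current horizon), so the reduction has to be done horizon-by-horizon; since \eqref{eq:est-DFA} is an inequality for each fixed $T$ this is legitimate, but one must be careful that the regularization constant $a$ transforms correctly and that the determinant ratios match up. I would probably present the self-contained telescoping proof rather than the reduction, since it keeps the bookkeeping of the $\lambda^d$ term transparent.

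The main obstacle I anticipate is purely a matter of careful algebra: verifying the one-step inequality
\[
\Phi_{t-1} - \Phi_t \;\ge\; \lambda^{T-t}\left[\psi_t(\theta_t) - \psi_t(\theta)\right] - \lambda^{T-t}y_t^2\left(1 - \lambda^d\frac{\det(Z_{t-1})}{\det(Z_t)}\right),
\]
which requires expanding the quadratic forms, using $\theta_t = Z_t^{-1}v_t$, applying the Sherman--Morrison / matrix determinant lemma to handle the rank-one update $x_t x_t'$, and correctly accounting for the $\lambda Z_{t-1}$ versus $Z_{t-1}$ discrepancy (this is where $\lambda^d$ enters, and where a stray factor of $\lambda$ is easiest to lose). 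Once that one-step bound is in hand the summation and the bounds $\Phi_0 = a\|\theta\|^2$, $\Phi_T \ge 0$ are immediate, and Proposition \ref{p:DFA} follows.
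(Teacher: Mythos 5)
Your primary route (a discounted potential/telescoping argument \`{a} la Azoury--Warmuth) is never actually carried out: the potential $\Phi_t$ is not defined and the one-step inequality---which is the entire content of the proposition---is only asserted to be a matter of ``careful algebra.'' As it stands, the key step is missing, and the preparatory algebra you do write down already contains slips that would derail it: from \eqref{eq:DFA} the correct recursions are $Z_t=\lambda Z_{t-1}+a(1-\lambda)I+x_tx_t'$ but $v_t=\lambda\,(v_{t-1}+y_{t-1}x_{t-1})$ (you dropped a factor $\lambda$), and $\theta_t=Z_t^{-1}v_t$ is \emph{not} the minimizer of $a\|\theta\|^2+\sum_{q<t}\lambda^{t-1-q}(y_q-\theta'x_q)^2$: the forward algorithm's defining objective must include the current-step term $(\theta'x_t)^2$ (a phantom observation with $y_t=0$), and the discount exponents must be $\lambda^{t-q}$ to match $v_t$. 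Moreover, the regularizer $aI$ is \emph{not} discounted, so it is effectively ``topped up'' by $a(1-\lambda)I$ every period; the comparator's cost is therefore not captured by a single initial term $\Phi_0=a\|\theta\|^2$, and accounting for these per-period top-ups (which, after discounting, total at most $a\|\theta\|^2$) is precisely part of the bookkeeping you defer. So the plan is plausible, but the proof is not there.

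For comparison, the paper proves the proposition by your ``alternative'' route, and the horizon-dependence you worry about does not arise: from $(x_t,y_t)$ one builds a single $T$-independent auxiliary sequence in blocks of size $d+1$, the $t$-th block consisting of the $d$ dummy observations $(\lambda^{-t/2}\sqrt{a(1-\lambda)}\,e^{(i)},0)$ followed by $(\lambda^{-t/2}x_t,\lambda^{-t/2}y_t)$. The dummies exactly convert the fixed regularizer into the discounted one, giving $\tilde Z_{(d+1)t}=\lambda^{-t}Z_t$ and $\tilde v_{(d+1)t}=\lambda^{-t}v_t$, so the auxiliary $a$-forward algorithm reproduces $\theta_t$; Theorem \ref{th:A&W} is then applied once at the end of block $T$, the nonnegative dummy regrets are dropped while the comparator pays $a(1-\lambda)\|\theta\|^2\sum_{t\le T}\lambda^{-t}$, the determinant ratio acquires the factor $\lambda^d$ because $\tilde Z_{(d+1)t-1}=\lambda^{-t+1}Z_{t-1}+a(\lambda^{-t}-\lambda^{-t+1})I$ has determinant at least $\det(\lambda^{-t+1}Z_{t-1})$, and multiplying the resulting inequality by $\lambda^T$ yields \eqref{eq:est-DFA}. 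If you prefer to complete your telescoping write-up instead, you must actually define the potential and verify the one-step bound; there the factor $\lambda^d$ should come from $Z_t-x_tx_t'=\lambda Z_{t-1}+a(1-\lambda)I$ together with the matrix determinant lemma, i.e. $x_t'Z_t^{-1}x_t\le 1-\lambda^d\det(Z_{t-1})/\det(Z_t)$, as you anticipate---but until that verification is done the argument is incomplete.
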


\begin{proof}
Let $b:=\sqrt{a(1-\lambda )}$. From the sequence $(x_{t},y_{t})_{t\geq 1}$
we construct a sequence $(\tilde{x}_{s},\tilde{y}_{s})_{s\geq 1}$ in blocks
as follows. For every $t\geq 1,$ the $t$-th block $B_{t}$ is of size $d+1$
and consists of $(\lambda ^{-t/2}be^{(1)},0),...,(\lambda
^{-t/2}be^{(d)},0),(\lambda ^{-t/2}x_{t},\lambda ^{-t/2}y_{t}),$ where $%
e^{(i)}$ is the $i$-th unit vector in $\mathbb{R}^{d}.$ The $a$-forward
algorithm applied to $(\tilde{x}_{s},\tilde{y}_{s})_{s\geq 1}$ yields the
following.

For $s=(d+1)t,$ i.e., at the end of the $B_{t}$ block, we have\footnote{%
The notation $\tilde{Z}_{s},$ $\tilde{\psi}_{s},...$ pertains to the $(%
\tilde{x}_{s},\tilde{y}_{s})_{s\geq 1}$ problem.} $\sum_{s\in B_{t}}\tilde{x}%
_{s}\tilde{x}_{s}^{\prime }=b^{2}\lambda ^{-t}\sum_{i=1}^{d}e^{(i)}\left(
e^{(i)}\right) ^{\prime }+\lambda ^{-t}x_{t}x_{t}^{\prime }=\lambda
^{-t}(b^{2}I+x_{t}x_{t}^{\prime });$ thus%
\begin{eqnarray*}
\tilde{Z}_{(d+1)t} &=&aI+\sum_{q=1}^{t}\sum_{s\in B_{t}}\tilde{x}_{s}\tilde{x%
}_{s}^{\prime }=aI+\sum_{q=1}^{t}\lambda ^{-q}(b^{2}I+x_{q}x_{q}^{\prime })
\\
&=&\lambda ^{-t}\left( aI+\sum_{q=1}^{t}\lambda ^{t-q}x_{q}x_{q}^{\prime
}\right) =\lambda ^{-t}Z_{t}
\end{eqnarray*}%
(since $\sum_{i=1}^{d}e^{(i)}\left( e^{(i)}\right) ^{\prime }=I$ and $%
b^{2}=(1-\lambda )a;$ recall (\ref{eq:DFA})). Together with $\tilde{v}%
_{(d+1)t}=\sum_{q=1}^{t}\sum_{s\in B_{t}}\tilde{y}_{s}\tilde{x}%
_{s}=\sum_{q=1}^{t}\lambda ^{-q}y_{q}x_{q}=\lambda ^{-t}v_{t}$ (only the
first entry in each block has a nonzero $\tilde{y}),$ it follows that $%
\tilde{\theta}_{(d+1)t}$ indeed equals $\theta _{t}=Z_{t}^{-1}v_{t}$ as
given by (\ref{eq:DFA}).

Next, for every $t$ we have $\sum_{s\in B_{t}}\tilde{\psi}_{s}(\tilde{\theta}%
_{s})\geq \lambda ^{-t}\psi _{t}(\theta _{t})$ (all terms in the sum are
nonnegative, and we drop all except the last one)$.$ Also, for every $\theta
\in \mathbb{R}^{d},$ 
\[
\sum_{s\in B_{t}}\tilde{\psi}_{s}(\theta )=\lambda ^{-t}\left(
b^{2}\sum_{i=1}^{d}(\theta ^{\prime }e^{(i)})^{2}+\psi _{t}(\theta )\right)
=\lambda ^{-t}\left( b^{2}\left\Vert \theta \right\Vert ^{2}+\psi
_{t}(\theta )\right) . 
\]%
Thus the left-hand side of (\ref{eq:5.17}) evaluated at the end of the $T$%
-th block $B_{T}$ satisfies%
\begin{eqnarray*}
LHS &\geq &\sum_{t=1}^{T}\lambda ^{-t}\psi _{t}(\theta _{t})-a\left\Vert
\theta \right\Vert ^{2}-b^{2}\left\Vert \theta \right\Vert
^{2}\sum_{t=1}^{T}\lambda ^{-t}-\sum_{t=1}^{T}\lambda ^{-t}\psi _{t}(\theta )
\\
&=&\sum_{t=1}^{T}\lambda ^{-t}\left[ \psi _{t}(\theta _{t})-\psi _{t}(\theta
)\right] -\lambda ^{-T}a\left\Vert \theta \right\Vert ^{2}.
\end{eqnarray*}%
On the right-hand side we get%
\[
RHS=\sum_{t=1}^{T}\sum_{s\in B_{t}}\tilde{y}_{s}^{2}\left( 1-\frac{\det (%
\tilde{Z}_{s-1})}{\det (\tilde{Z}_{s})}\right) =\sum_{t=1}^{T}\lambda
^{-t}y_{t}^{2}\left( 1-\frac{\det (\tilde{Z}_{(d+1)t-1})}{\det (\tilde{Z}%
_{(d+1)t})}\right) 
\]%
(again, only the last term in each block has nonzero $\tilde{y}_{s}).$ We
have seen above that $\tilde{Z}_{(d+1)t}=\lambda ^{-t}Z_{t};$ thus $\tilde{Z}%
_{(d+1)t-1}=\tilde{Z}_{(d+1)t}-\lambda ^{-t}x_{t}x_{t}^{\prime }=\lambda
^{-t}(Z_{t}-x_{t}x_{t}^{\prime })=\lambda ^{-t+1}Z_{t-1}+(\lambda
^{-t}-\lambda ^{-t+1})aI.$ Therefore $\det (\tilde{Z}_{(d+1)t-1})\geq \det
(\lambda ^{-t+1}Z_{t-1})$ (indeed, if $B$ is a positive definite matrix and $%
\beta >0$ then\footnote{%
Let $\beta _{1},...,\beta _{d}>0$ be the eigenvalues of $B;$ then the
eigenvalues of $B+cI$ are $\beta _{1}+c,...,\beta _{d}+c,$ and so $\det
(B+cI)=\prod_{i}(\beta _{i}+c)>\prod_{i}\beta _{i}=\det (B).$} $\det
(B+\beta I)>\det (B)).$ Therefore we obtain 
\begin{eqnarray*}
RHS &\leq &\sum_{t=1}^{T}\lambda ^{-t}y_{t}^{2}\left( 1-\frac{\det (\lambda
^{-t+1}Z_{t-1})}{\det (\lambda ^{-t}Z_{t})}\right) \\
&=&\sum_{t=1}^{T}\lambda ^{-t}y_{t}^{2}\left( 1-\lambda ^{d}\frac{\det
(Z_{t-1})}{\det (Z_{t})}\right)
\end{eqnarray*}%
(the matrices $Z_{t}$ are of size $d\times d,$ and so $\det
(cZ_{t})=c^{d}\det (Z_{t})).$ Recalling that $LHS\leq RHS$ by (\ref{eq:5.17}%
) and multiplying by $\lambda ^{T}$ yields the result.
\end{proof}

\bigskip

\noindent \textbf{Remark. }From now on it will be convenient to assume that $%
\left\Vert x_{t}\right\Vert \leq 1$ and $\left\vert y_{t}\right\vert \leq 1$
(i.e., $X=Y=1);$ for general $X$ and $Y,$ multiply $x_{t},y_{t},\theta
_{t},\psi _{t},a$ by $X,Y,Y/X,Y^{2},X^{2},$ respectively, in the appropriate
formulas.

\begin{proposition}
\label{p:DFA1}For every $a>0$ and $1/4\leq \lambda <1$ there exists a
constant $D_{1}\equiv D_{1}(a,\lambda ,d)$ such that the $\lambda $%
-discounted $a$-forward algorithm yields%
\begin{equation}
\sum_{t=1}^{T}\lambda ^{T-t}\left[ \psi _{t}(\theta _{t})-\psi _{t}(\theta )%
\right] \leq a\left\Vert \theta \right\Vert ^{2}+D_{1},  \label{eq:est-DFA1}
\end{equation}%
for every $T\geq 1,$ every $\theta \in \mathbb{R}^{d},$ and every sequence $%
(x_{t},y_{t})_{t\geq 1}$ in $\mathbb{R}^{d}\times \mathbb{R}$ with $%
\left\Vert x_{t}\right\Vert \leq 1$ and $|y_{t}|\leq 1$ for all $t.$
\end{proposition}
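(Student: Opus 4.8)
The plan is to start from the bound in Proposition \ref{p:DFA} and show that the second term on its right-hand side, namely $\sum_{t=1}^{T}\lambda^{T-t}y_t^2\bigl(1-\lambda^d\det(Z_{t-1})/\det(Z_t)\bigr)$, is bounded above by a constant $D_1$ depending only on $a$, $\lambda$, and $d$ (and not on $T$, $\theta$, or the data sequence). Since $|y_t|\le 1$, it suffices to bound $\sum_{t=1}^{T}\lambda^{T-t}\bigl(1-\lambda^d\det(Z_{t-1})/\det(Z_t)\bigr)$. The elementary bound $1-\lambda^d\,r\le (1-\lambda^d)+\lambda^d(1-r)$ for $r\le 1$ splits this into two pieces: a "geometric" piece $(1-\lambda^d)\sum_{t}\lambda^{T-t}\le (1-\lambda^d)/(1-\lambda)\le d$ (using $1-\lambda^d\le d(1-\lambda)$), which is already a constant; and a "telescoping-type" piece $\lambda^d\sum_{t=1}^T \lambda^{T-t}\bigl(1-\det(Z_{t-1})/\det(Z_t)\bigr)$, which is the heart of the matter.

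For the second piece I would exploit the recursion $Z_t=\lambda Z_{t-1}+(1-\lambda)aI+x_tx_t'$, so that $\det(Z_t)\ge \det(\lambda Z_{t-1})=\lambda^d\det(Z_{t-1})$, giving $\det(Z_{t-1})/\det(Z_t)\le \lambda^{-d}$; more usefully, I want a lower bound on $\det(Z_t)$ relative to $\det(Z_{t-1})$ that accounts for the $x_tx_t'$ increment. The standard potential/telescoping trick here is to pass to logarithms: $1-\det(Z_{t-1})/\det(Z_t)\le \log\bigl(\det(Z_t)/\det(Z_{t-1})\bigr)$ (using $1-1/u\le\log u$ for $u>0$). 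Then a summation by parts (Abel summation) on $\sum_{t=1}^T\lambda^{T-t}\log\bigl(\det(Z_t)/\det(Z_{t-1})\bigr)$ converts the $\lambda$-weighted sum of log-increments into $(1-\lambda)\sum_{t=1}^{T}\lambda^{T-t}\log\det(Z_t)+\log\det Z_T-\lambda^{T}\log\det Z_0$ up to boundary terms, and then I bound $\log\det(Z_t)$ from above and below uniformly: from $Z_t=aI+\sum_{q\le t}\lambda^{t-q}x_qx_q'$ with $\|x_q\|\le 1$ we get $aI\preceq Z_t\preceq \bigl(a+\tfrac{1}{1-\lambda}\bigr)I$, hence $d\log a\le \log\det(Z_t)\le d\log\bigl(a+1/(1-\lambda)\bigr)$. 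Feeding these uniform bounds back into the Abel-summed expression collapses everything to a constant depending only on $a$, $\lambda$, $d$.

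The main obstacle I anticipate is organizing the summation-by-parts and the $\lambda$-weighting cleanly so that no hidden $T$-dependence survives: because each individual increment $\log\bigl(\det(Z_t)/\det(Z_{t-1})\bigr)$ can be negative as well as positive (the discounting can shrink $Z_t$ below $Z_{t-1}$), one cannot simply telescope and must genuinely use the two-sided bound $d\log a\le\log\det(Z_t)\le d\log(a+1/(1-\lambda))$ to control the partial sums. A cleaner alternative that avoids summation by parts: bound the $\lambda$-discounted sum directly by noting $\sum_{t=1}^{T}\lambda^{T-t}\log\bigl(\det(Z_t)/\det(Z_{t-1})\bigr)=\log\det Z_T-\log\det Z_0 - (1-\lambda)\sum_{t=1}^{T-1}\lambda^{T-1-t}\log\det Z_t$ (rearranging), and then each term on the right is between fixed multiples of $\log a$ and $\log(a+1/(1-\lambda))$, with $(1-\lambda)\sum_{t}\lambda^{T-1-t}\le 1$; this directly yields an explicit $D_1=D_1(a,\lambda,d)$. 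I would then set $D_1$ to be the resulting explicit constant (plus the geometric piece $d$ from the first part), which depends only on $a$, $\lambda$, $d$ as required, completing the proof.
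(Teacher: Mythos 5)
Your proposal is correct, but it takes a genuinely different route from the paper's. The paper also passes from $1-\lambda^{d}\zeta_{t-1}/\zeta_{t}$ to $\ln\bigl(\lambda^{-d}\zeta_{t}/\zeta_{t-1}\bigr)$ via $1-1/u\le\ln u$, but then it handles the discount weights by \emph{blocking}: it picks an integer $K$ with $1/4\le\lambda^{K}\le1/2$ (this is exactly where the hypothesis $\lambda\ge1/4$ enters), freezes the weight $\lambda^{-t}\le\lambda^{-(j+1)K}$ on each block of length $K$, telescopes the log-determinant ratios \emph{within} each block, bounds each telescoped block by $\ln D$ with $D=4^{d}d!\,(1+1/(a(1-\lambda)))^{d}$ using $a^{d}\le\zeta_{t}\le d!\,(a+1/(1-\lambda))^{d}$, and finally sums a geometric series over blocks to get $4\ln D=:D_{1}$. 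You instead split off the exact identity $1-\lambda^{d}r=(1-\lambda^{d})+\lambda^{d}(1-r)$ (note this is an identity, so no condition $r\le1$ is needed), bound the geometric piece by $d$, and treat the remaining discounted sum of log-increments by Abel summation together with the uniform two-sided bound $aI\preceq Z_{t}\preceq\bigl(a+\tfrac{1}{1-\lambda}\bigr)I$, i.e.\ $d\log a\le\log\det Z_{t}\le d\log\bigl(a+\tfrac{1}{1-\lambda}\bigr)$, which indeed collapses everything to a constant depending only on $a,\lambda,d$; your worry about negative increments is correctly resolved by the two-sided bound, since $(1-\lambda)\sum_{t}\lambda^{T-1-t}\le1$ keeps the Abel remainder bounded. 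What your route buys: it needs no blocking and hence works for all $0<\lambda<1$ (the restriction $1/4\le\lambda$ is an artifact of the paper's block construction), and your eigenvalue bound avoids the $d!$ factor, giving a slightly cleaner constant; like the paper's $D_{1}$, yours grows only like $d\log\frac{1}{1-\lambda}$ as $\lambda\to1$, which is what the later Proposition \ref{p:WDFA2} needs. Two trivial slips to fix when writing it up: the correct Abel identity is $\sum_{t=1}^{T}\lambda^{T-t}\bigl(\log\zeta_{t}-\log\zeta_{t-1}\bigr)=\log\zeta_{T}-\lambda^{T-1}\log\zeta_{0}-(1-\lambda)\sum_{t=1}^{T-1}\lambda^{T-1-t}\log\zeta_{t}$ (your first formulation has the wrong sign on the middle sum and your boundary terms carry $\lambda^{T}$, resp.\ no factor, instead of $\lambda^{T-1}$), but since $\log\zeta_{0}=d\log a$ and all terms are uniformly bounded these slips do not affect the conclusion.
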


\begin{proof}
Let $K\geq 1$ be an integer such that $1/4\leq \lambda ^{K}\leq 1/2.$ Given $%
T\geq 1,$ let the integer $m\geq 1$ satisfy $(m-1)K<T\leq mK.$ Writing $%
\zeta _{t}$ for $\det (Z_{t}),$ we have%
\begin{eqnarray}
\sum_{t=1}^{T}\lambda ^{T-t}\left( 1-\lambda ^{d}\frac{\zeta _{t-1}}{\zeta
_{t}}\right) &\leq &\sum_{t=1}^{mK}\lambda ^{T-t}\left( 1-\lambda ^{d}\frac{%
\zeta _{t-1}}{\zeta _{t}}\right)  \nonumber \\
&=&\lambda ^{T}\sum_{j=1}^{m-1}\sum_{t=jK+1}^{(j+1)K}\lambda ^{-t}\ln \left(
\lambda ^{-d}\frac{\zeta _{t}}{\zeta _{t-1}}\right)  \nonumber \\
&\leq &\lambda ^{T}\sum_{j=1}^{m-1}\lambda
^{-(j+1)K}\sum_{t=jK+1}^{(j+1)K}\ln \left( \lambda ^{-d}\frac{\zeta _{t}}{%
\zeta _{t-1}}\right)  \nonumber \\
&\leq &\lambda ^{T}\sum_{j=1}^{m-1}\lambda ^{-(j+1)K}\ln \left( \lambda
^{-dK}\frac{\zeta _{(j+1)K}}{\zeta _{jK}}\right)  \label{eq:K}
\end{eqnarray}%
(in the second line we have used $1-1/u\leq \ln u$ for $0<u\leq 1$, as in
(4.21) in Azoury and Warmuth 2001; in the third line, $\lambda ^{-t}\leq
\lambda ^{-(j+1)K}$ since $t\leq (j+1)K$ and $\lambda <1$).

Let $B=(b_{ij})$ be a $d\times d$ symmetric positive definite matrix with $%
\left\vert b_{ij}\right\vert \leq \beta $ for all $i,j,$ and let $a>0.$ Then 
$a^{d}\leq \det (aI+B)\leq d!(a+\beta )^{d}.$ Indeed, the second inequality
follows easily since the determinant is the sum of $d!$ products of $d$
elements each. For the first inequality, let $\beta _{1},...,\beta _{d}>0$
be the eigenvalues of $B;$ then the eigenvalues of $aI+B$ are $a+\beta
_{1},...,a+\beta _{d},$ and so $\det (aI+B)=\Pi _{i=1}^{d}(a+\beta
_{i})>a^{d}$. Applying this to $Z_{t}$ (using (\ref{eq:DFA}), $%
|x_{t,i}x_{t,j}|\leq \left\Vert x_{t}\right\Vert ^{2}\leq 1,$ and $%
\sum_{t=1}^{T}\lambda ^{T-t}<1/(1-\lambda )$) yields 
\[
a^{d}\leq \zeta _{t}\equiv \det (Z_{t})\leq d!\left( a+\frac{1}{1-\lambda }%
\right) ^{d}. 
\]%
Therefore, since $\lambda ^{-K}\leq 4,$ we get 
\[
\lambda ^{-dK}\frac{\zeta _{(j+1)K}}{\zeta _{jK}}\leq 4^{d}d!\left( 1+\frac{1%
}{a(1-\lambda )}\right) ^{d}=:D, 
\]%
and so (\ref{eq:K}) is%
\begin{eqnarray*}
&\leq &\lambda ^{T}\sum_{j=1}^{m-1}\left( \lambda ^{-K}\right) ^{j+1}\ln
D\leq \lambda ^{T}\frac{\lambda ^{-K(m+1)}-\lambda ^{-2K}}{\lambda ^{-K}-1}%
\ln D \\
&\leq &\lambda ^{T}\frac{\lambda ^{-T}\lambda ^{-K}-0}{2-1}\ln D=4\ln D
\end{eqnarray*}%
(since $2\leq \lambda ^{-K}\leq 4$ and $K(m+1)<T+K$). Substituting this in (%
\ref{eq:est-DFA}) and putting 
\begin{equation}
D_{1}:=4\left( \ln d!+d\ln 4+d\ln \left( 1+\frac{1}{a(1-\lambda )}\right)
\right)  \label{eq:D1}
\end{equation}%
completes the proof.
\end{proof}

\subsection{Windowed Discounted Forward Algorithm\label{sus:WDFA}}

From now on it is convenient to put $(x_{t},y_{t},\theta _{t})=(0,0,0)$ for
all $t\leq 0.$

Let $a>0,~0<\lambda <1,$ and integer $R\geq 1$. The $R$\emph{-windowed }$%
\lambda $-\emph{discounted }$a$-\emph{forward algorithm} gives $\theta
_{t}=Z_{t}^{-1}v_{t},$ where\footnote{%
The sums below effectively start at $\min \{t-R+1,1\}$ (because we put $%
x_{q}=0$ for $q\leq 0$).}%
\begin{equation}
Z_{t}=aI+\sum_{q=t-R+1}^{t}\lambda ^{t-q}x_{q}x_{q}^{\prime }\;\;\;\;\text{%
and\ \ \ }v_{t}=\sum_{q=t-R+1}^{t-1}\lambda ^{R-q}y_{q}x_{q}.
\label{eq:RDFA}
\end{equation}

\begin{lemma}
\label{p:theta-theta(R)}For every $a>0$ and $0<\lambda <1$ there exists a
constant $D_{2}\equiv D_{2}(a,\lambda ,d)$ such that if $(\tilde{\theta}%
_{t})_{t\geq 1}$ is given by the $\lambda $-discounted $a$-forward
algorithm, and $(\theta _{t})_{t\geq 1}$ is given by the $R$-windowed $%
\lambda $-discounted $a$-forward algorithm for some integer $R\geq 1,$ then 
\begin{equation}
\left\vert \psi _{t}(\tilde{\theta}_{t})-\psi _{t}(\theta _{t})\right\vert
\leq D_{2}\lambda ^{R}  \label{eq:theta-theta(R)}
\end{equation}%
for every\footnote{%
For $t\leq R$ we have $\tilde{\theta}_{t}=\theta _{t}$ since they are given
by the same formula.} $t\geq 1$ and every sequence $(x_{t},y_{t})_{t\geq 1}$
in $\mathbb{R}^{d}\times \mathbb{R}$ with $\left\Vert x_{t}\right\Vert \leq
1 $ and $|y_{t}|\leq 1$ for all $t$.
\end{lemma}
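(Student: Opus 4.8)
The plan is a direct perturbation estimate. Both $\tilde{\theta}_t=\tilde{Z}_t^{-1}\tilde{v}_t$ and $\theta_t=Z_t^{-1}v_t$ are ridge-type estimators whose defining matrices satisfy $\tilde{Z}_t\succeq aI$ and $Z_t\succeq aI$, and the windowed quantities differ from the unwindowed ones \emph{only} by the tails of the geometrically decaying discounted sums, i.e.\ the terms indexed by $q\le t-R$. So I would (a) bound the data perturbation $(\tilde{Z}_t-Z_t,\ \tilde{v}_t-v_t)$ by something of order $\lambda^R$, (b) propagate it through the matrix inverse to bound $\|\tilde{\theta}_t-\theta_t\|$, and (c) propagate that through the quadratic $\psi_t$. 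For $t\le R$ there is nothing to prove since $\tilde{\theta}_t=\theta_t$.

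First, the data perturbation. Since $\|x_q\|\le 1$ and $|y_q|\le 1$, the operator norm of $\tilde{Z}_t-Z_t=\sum_{q\le t-R}\lambda^{t-q}x_qx_q'$ is at most $\sum_{q\le t-R}\lambda^{t-q}=\sum_{j\ge R}^{}\lambda^{j}\le \lambda^R/(1-\lambda)$, and likewise $\|\tilde{v}_t-v_t\|=\|\sum_{q\le t-R}\lambda^{t-q}y_qx_q\|\le \lambda^R/(1-\lambda)$. Second, the spectral floor $aI$ gives the uniform bounds $\|\tilde{Z}_t^{-1}\|,\|Z_t^{-1}\|\le 1/a$ and $\|\tilde{v}_t\|,\|v_t\|\le 1/(1-\lambda)$, hence $\|\tilde{\theta}_t\|,\|\theta_t\|\le 1/(a(1-\lambda))$, uniformly in $t$, in $R$, and in the observed sequence. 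Writing $\tilde{\theta}_t-\theta_t=\tilde{Z}_t^{-1}(\tilde{v}_t-v_t)+\tilde{Z}_t^{-1}(Z_t-\tilde{Z}_t)Z_t^{-1}v_t$ (the resolvent identity for the second summand), these bounds yield $\|\tilde{\theta}_t-\theta_t\|\le C_1\lambda^R$ with $C_1=C_1(a,\lambda)$ explicit (of the order of $\frac{1}{a(1-\lambda)}(1+\frac{1}{a(1-\lambda)})$).

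Finally, to pass to $\psi_t$ I would use $|u^2-v^2|=|u-v|\,|u+v|$ with $u=y_t-\tilde{\theta}_t'x_t$ and $v=y_t-\theta_t'x_t$: here $|u-v|=|(\theta_t-\tilde{\theta}_t)'x_t|\le\|\tilde{\theta}_t-\theta_t\|\,\|x_t\|\le C_1\lambda^R$, while $|u+v|\le|u|+|v|\le 2\bigl(1+\tfrac{1}{a(1-\lambda)}\bigr)$ from the uniform bounds on $\theta_t,\tilde{\theta}_t$ together with $|y_t|\le 1$. Collecting constants gives $|\psi_t(\tilde{\theta}_t)-\psi_t(\theta_t)|\le D_2\lambda^R$ with $D_2:=2\bigl(1+\tfrac{1}{a(1-\lambda)}\bigr)C_1$, depending only on $a$ and $\lambda$ (the extra dependence on $d$ allowed in the statement is harmless slack).

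There is no genuine obstacle here; the proof is bookkeeping. The one step that must not be skipped is the double role of the ridge term $aI$: it keeps both $\tilde{Z}_t$ and $Z_t$ invertible, and — more importantly — it bounds their inverses, and therefore $\theta_t$ and $\tilde{\theta}_t$, uniformly in $R$ and in the data. Without a uniform bound on $\theta_t$ and $\tilde{\theta}_t$ one could not convert the $O(\lambda^R)$ estimate on $\tilde{\theta}_t-\theta_t$ into an $O(\lambda^R)$ estimate on $\psi_t(\tilde{\theta}_t)-\psi_t(\theta_t)$, since the quadratic $\psi_t$ has unbounded derivative.
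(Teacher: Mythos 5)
Your proof is correct and is essentially the paper's own argument: the paper packages your resolvent identity as its Lemma \ref{l:sensitivity} and then propagates the $O(\lambda^R)$ perturbation through $\psi_t$ exactly as you do. The only (harmless) difference is that you bound $\|\tilde{Z}_t-Z_t\|$ directly in operator norm, whereas the paper bounds it entrywise and picks up a factor $d$, so its constant $D_2$ depends on $d$ while yours need not.
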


To prove this lemma we use the following basic result. The norm of a matrix $%
A$ is $\left\Vert A\right\Vert :=\max {}_{z\neq 0}\left\Vert Az\right\Vert
/\left\Vert z\right\Vert .$

\begin{lemma}
\label{l:sensitivity}For $k=1,2,$ let $c_{k}=A_{k}^{-1}b_{k},$ where $A_{k}$
is a $d\times d$ symmetric matrix whose eigenvalues are all greater than or
equal to some $\alpha >0,$ and $\left\Vert b_{k}\right\Vert \leq M.$ Then $%
\left\Vert c_{k}\right\Vert \leq M/\alpha $ and%
\[
\left\Vert c_{1}-c_{2}\right\Vert \leq \frac{1}{\alpha }\left\Vert
b_{1}-b_{2}\right\Vert +\frac{M}{\alpha ^{2}}\left\Vert
A_{1}-A_{2}\right\Vert . 
\]
\end{lemma}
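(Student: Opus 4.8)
The plan is to exploit the fact that a symmetric matrix all of whose eigenvalues are $\geq \alpha > 0$ is invertible with $\|A^{-1}\| \leq 1/\alpha$; this is the only spectral input needed. First I would establish the norm bound: since $A_k$ is symmetric with eigenvalues $\geq \alpha$, its inverse $A_k^{-1}$ is symmetric with eigenvalues in $(0, 1/\alpha]$, hence $\|A_k^{-1}\| \leq 1/\alpha$, and therefore $\|c_k\| = \|A_k^{-1} b_k\| \leq \|A_k^{-1}\|\,\|b_k\| \leq M/\alpha$. This disposes of the first claim and, importantly, gives a uniform bound on $\|c_2\|$ that feeds into the second term of the difference estimate.

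For the difference bound, the key step is the standard resolvent-style identity
\[
c_1 - c_2 = A_1^{-1} b_1 - A_2^{-1} b_2 = A_1^{-1}(b_1 - b_2) + (A_1^{-1} - A_2^{-1}) b_2,
\]
together with $A_1^{-1} - A_2^{-1} = A_1^{-1}(A_2 - A_1) A_2^{-1}$. Substituting and applying submultiplicativity of the operator norm gives
\[
\|c_1 - c_2\| \leq \|A_1^{-1}\|\,\|b_1 - b_2\| + \|A_1^{-1}\|\,\|A_1 - A_2\|\,\|A_2^{-1}\|\,\|b_2\|.
\]
Now I plug in $\|A_1^{-1}\|, \|A_2^{-1}\| \leq 1/\alpha$ and $\|b_2\| = \|A_2 c_2\|$... actually more directly $\|A_2^{-1} b_2\| = \|c_2\| \leq M/\alpha$, so the second term is bounded by $(1/\alpha)\|A_1 - A_2\|\,(M/\alpha) = (M/\alpha^2)\|A_1 - A_2\|$, and the first term by $(1/\alpha)\|b_1 - b_2\|$. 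Adding the two yields exactly the claimed inequality.

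There is no real obstacle here; the only point requiring a word of care is the justification that the eigenvalue lower bound $\alpha$ transfers to the operator-norm bound $\|A_k^{-1}\| \leq 1/\alpha$ — this uses symmetry (so that $A_k$ is orthogonally diagonalizable and the operator norm equals the spectral radius, and likewise for $A_k^{-1}$ whose eigenvalues are the reciprocals $1/\lambda_i \leq 1/\alpha$). Everything else is algebraic manipulation with the operator norm inequality $\|AB\| \leq \|A\|\,\|B\|$ and the triangle inequality, so the proof is short.
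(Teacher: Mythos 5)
Your proposal is correct and follows essentially the same route as the paper: the spectral bound $\left\Vert A_{k}^{-1}\right\Vert \leq 1/\alpha $ for the first claim, and the identity $c_{1}-c_{2}=A_{1}^{-1}(b_{1}-b_{2})+A_{1}^{-1}(A_{2}-A_{1})A_{2}^{-1}b_{2}$ with submultiplicativity of the operator norm for the second.
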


\begin{proof}
First, $\left\Vert c_{k}\right\Vert \leq \left\Vert A_{k}^{-1}\right\Vert
\left\Vert b_{k}\right\Vert \leq (1/\alpha )M$ since $\left\Vert
A_{k}^{-1}\right\Vert $ equals the maximal eigenvalue of $A_{k}^{-1},$ which
is the reciprocal of the minimal eigenvalue of $A_{k},$ and so $\left\Vert
A_{k}^{-1}\right\Vert \leq 1/\alpha .$

Second, express $c_{1}-c_{2}$ as $%
A_{1}^{-1}(b_{1}-b_{2})+A_{1}^{-1}(A_{2}-A_{1})A_{2}^{-1}b_{2},$ to get%
\[
\left\Vert c_{1}-c_{2}\right\Vert \leq \left\Vert A_{1}^{-1}\right\Vert
\left\Vert b_{1}-b_{2}\right\Vert +\left\Vert A_{1}^{-1}\right\Vert
\left\Vert A_{2}-A_{1}\right\Vert \left\Vert A_{2}^{-1}\right\Vert
\left\Vert b_{2}\right\Vert 
\]%
and the proof is complete.
\end{proof}

\bigskip

\begin{proof}[Proof of Lemma \protect\ref{p:theta-theta(R)}]
For $t\leq R$ we have $\tilde{\theta}_{t}\equiv \theta _{t},$ and so
consider $t>R.$ We have\footnote{%
Notation: $\tilde{v}_{t}$ and $\tilde{Z}_{t}$ pertain to the sequence $%
\tilde{\theta}_{t}$ given by the $\lambda $-discounted $a$-forward
algorithm, whereas $v_{t}$ and $Z_{t}$ pertain to the sequence $\theta _{t}$
given by the $R$-windowed $\lambda $-discounted $a$-forward algorithm.} $%
\left\Vert \tilde{v}_{t}\right\Vert ,\left\Vert v_{t}\right\Vert \leq
\sum_{q=1}^{\infty }\lambda ^{q}=1/(1-\lambda )$. The matrices $\tilde{Z}%
_{t} $ and $Z_{t}$ are the sum of $aI$ and a positive-definite matrix, and
so their eigenvalues are $\geq a.$ Next, 
\[
\left\Vert \tilde{v}_{t}-v_{t}\right\Vert =\left\Vert
\sum_{q=1}^{t-R}\lambda ^{t-q}y_{q}x_{q}\right\Vert \leq \frac{\lambda ^{R}}{%
1-\lambda }; 
\]%
similarly, for each each element $(\tilde{Z}_{t}-Z_{t})_{ij}$ of $\tilde{Z}%
_{t}-Z_{t}$ we have 
\[
\left\vert (\tilde{Z}_{t}-Z_{t})_{ij}\right\vert =\left\vert
\sum_{q=1}^{t-R}\lambda ^{t-q}x_{q,i}x_{q,j}\right\vert \leq \frac{\lambda
^{R}}{1-\lambda }, 
\]%
and so\footnote{%
Because $\left\Vert A\right\Vert \leq d\max_{i,j}|a_{ij}|$ for any $d\times
d $ matrix $A.$} $\left\Vert \tilde{Z}_{t}-Z_{t}\right\Vert \leq d\lambda
^{R}/(\lambda -1).$ Using Lemma \ref{l:sensitivity} yields%
\[
\left\Vert \tilde{\theta}_{t}-\theta _{t}\right\Vert \leq \frac{1}{a}\frac{%
\lambda ^{R}}{1-\lambda }+\frac{1}{a^{2}}\frac{d\lambda ^{R}}{1-\lambda }=%
\frac{\lambda ^{R}(a+d)}{(1-\lambda )a^{2}}. 
\]%
Hence%
\begin{eqnarray*}
\left\vert \psi _{t}(\tilde{\theta}_{t})-\psi _{t}(\theta _{t})\right\vert
&=&\left\vert (y_{t}-\tilde{\theta}_{t}^{\prime }x_{t})^{2}-(y_{t}-\theta
_{t}^{\prime }x_{t})^{2}\right\vert \\
&=&\left\vert (\tilde{\theta}_{t}^{\prime }-\theta _{t}^{\prime })x_{t}\cdot
\left( 2y_{t}-(\tilde{\theta}_{t}^{\prime }+\theta _{t}^{\prime
})x_{t}\right) \right\vert \\
&\leq &\left\Vert \tilde{\theta}_{t}-\theta _{t}\right\Vert \left(
2+\left\Vert \tilde{\theta}_{t}\right\Vert +\left\Vert \theta
_{t}\right\Vert \right) \\
&\leq &\frac{\lambda ^{R}(a+d)}{(1-\lambda )a^{2}}\left( 2+\frac{2}{%
(1-\lambda )a}\right) =D_{2}\lambda ^{R},
\end{eqnarray*}%
where%
\begin{equation}
D_{2}:=\frac{2\left( a+d\right) (a(1-\lambda )+1)}{a^{3}(1-\lambda )^{2}};
\label{eq:D2}
\end{equation}%
this completes the proof.
\end{proof}

\begin{proposition}
\label{p:WDFA}For every $a>0$ and $1/4\leq \lambda <1$ there exist constants 
$D_{1}\equiv D_{1}(a,\lambda ,d)$ and $D_{2}\equiv D_{2}(a,\lambda ,d)$ such
that for every integer $R\geq 1$ the $R$-windowed $\lambda $-discounted $a$%
-forward algorithm yields%
\begin{eqnarray}
\frac{1}{R}\sum_{t=T-R+1}^{T}\left[ \psi _{t}(\theta _{t})-\psi _{t}(\theta )%
\right] &\leq &(a\left\Vert \theta \right\Vert ^{2}+D_{1})\left( 1-\lambda +%
\frac{\lambda }{R}\right)  \label{eq:est-WDFA} \\
&&+\frac{(\left\Vert \theta \right\Vert +1)^{2}}{R(1-\lambda )}+D_{2}\lambda
^{R}  \nonumber
\end{eqnarray}%
for every $T\geq 1,$ every $\theta \in \mathbb{R}^{d},$ and every sequence $%
(x_{t},y_{t})_{t\geq 1}$ in $\mathbb{R}^{d}\times \mathbb{R}$ with $%
\left\Vert x_{t}\right\Vert \leq 1$ and $\left\vert y_{t}\right\vert \leq 1$
for all $t.$
\end{proposition}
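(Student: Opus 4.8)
The plan is to derive the windowed bound (\ref{eq:est-WDFA}) by combining the two results already in hand: the discounted-sum bound of Proposition \ref{p:DFA1} for the (non-windowed) $\lambda$-discounted $a$-forward algorithm, and the sensitivity estimate of Lemma \ref{p:theta-theta(R)} controlling how far the windowed iterates $\theta_t$ stray from the non-windowed iterates $\tilde\theta_t$. The idea is that the quantity we want to bound, $\frac1R\sum_{t=T-R+1}^{T}[\psi_t(\theta_t)-\psi_t(\theta)]$, is an \emph{undiscounted} average over a window, whereas Proposition \ref{p:DFA1} gives a \emph{discounted} sum; so the first task is to pass from one to the other, paying the price $1-\lambda$ (plus a $\lambda/R$ end-correction), and the second task is to replace $\tilde\theta_t$ by $\theta_t$, paying the price $D_2\lambda^R$ from Lemma \ref{p:theta-theta(R)}.

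First I would write, for each fixed $T$ and window of length $R$,
\[
\frac1R\sum_{t=T-R+1}^{T}[\psi_t(\theta_t)-\psi_t(\theta)]
=\frac1R\sum_{t=T-R+1}^{T}[\psi_t(\tilde\theta_t)-\psi_t(\theta)]
+\frac1R\sum_{t=T-R+1}^{T}[\psi_t(\theta_t)-\psi_t(\tilde\theta_t)],
\]
and bound the last sum by $D_2\lambda^R$ using (\ref{eq:theta-theta(R)}) term by term. For the first sum, the key observation is that $\psi_t(\tilde\theta_t)-\psi_t(\theta)\le \psi_t(\tilde\theta_t)\le(|y_t|+\|\tilde\theta_t\|\|x_t\|)^2\le(1+1/((1-\lambda)a))^2$ is bounded (using $\|\tilde\theta_t\|\le 1/((1-\lambda)a)$ from Lemma \ref{l:sensitivity}, with $\alpha=a$, $M=1/(1-\lambda)$) — but more usefully, I want to relate the undiscounted window-average to the discounted sum $S_T:=\sum_{t=1}^{T}\lambda^{T-t}[\psi_t(\tilde\theta_t)-\psi_t(\theta)]\le a\|\theta\|^2+D_1$. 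The cleanest route is a telescoping/Abel-summation identity: one checks that
\[
\sum_{t=T-R+1}^{T}\bigl[\psi_t(\tilde\theta_t)-\psi_t(\theta)\bigr]
=\sum_{t=T-R+1}^{T}(1-\lambda)\,S_t + \bigl(\text{boundary terms}\bigr),
\]
or, more directly, that $(1-\lambda)\sum_{j=0}^{R-1}S_{T-j}+\lambda^{\,?}(\cdots)$ reproduces the window sum up to the discounted tail $\sum_{t\le T-R}\lambda^{(\cdot)}[\psi_t(\tilde\theta_t)-\psi_t(\theta)]$, whose \emph{positive} part contributes nothing harmful and whose magnitude is controlled because the $\psi_t(\tilde\theta_t)$ are bounded and the weights there sum to at most $\lambda^R/(1-\lambda)$. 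Working this out gives, after dividing by $R$, a leading term $(a\|\theta\|^2+D_1)(1-\lambda+\lambda/R)$ — the $\lambda/R$ coming from the single end term that is not hit by a full factor $(1-\lambda)$ — plus a tail term absorbed into $\frac{(\|\theta\|+1)^2}{R(1-\lambda)}$ (here $(\|\theta\|+1)^2$ enters as an upper bound on $\psi_t(\tilde\theta_t)-\psi_t(\theta)=\psi_t(\tilde\theta_t)-\psi_t(\theta)$ summed against the tail weights — note $\psi_t(\theta)\ge0$ so dropping it only helps, and $\psi_t(\tilde\theta_t)\le(\|\theta\|+1)^2$ is \emph{not} quite right, so I would instead bound the tail using $|\psi_t(\tilde\theta_t)-\psi_t(\theta)|$ and the boundedness of $\tilde\theta_t$, matching the stated $(\|\theta\|+1)^2$ factor via $\psi_t(\theta)\le(|y_t|+\|\theta\|\|x_t\|)^2\le(1+\|\theta\|)^2$).

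The main obstacle is the bookkeeping in the passage from the discounted sum to the undiscounted window average — keeping track of exactly which boundary terms survive with a bare factor (producing the $\lambda/R$ correction and the $\frac{(\|\theta\|+1)^2}{R(1-\lambda)}$ term) and checking that the discarded portions have the right sign or are of order $\lambda^R$. I expect no essential difficulty beyond careful Abel summation: one sets up $\sum_{t=T-R+1}^T(\psi_t(\tilde\theta_t)-\psi_t(\theta))$, writes $1=(1-\lambda)(1+\lambda+\lambda^2+\cdots)$ to re-expand each term, interchanges the order of summation, and recognizes the inner sums as truncations of the $S_t$'s. The constants $D_1,D_2$ are exactly those produced by Proposition \ref{p:DFA1} and Lemma \ref{p:theta-theta(R)}, so nothing new needs to be named. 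Finally, I would note that the bound holds for all $T\ge1$ because for $T\le R$ the window sum is just a prefix and $\tilde\theta_t=\theta_t$ there, so the $D_2\lambda^R$ term is vacuous and the rest goes through verbatim.
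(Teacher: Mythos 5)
Your proposal is correct and follows essentially the same route as the paper's proof: take the Proposition \ref{p:DFA1} bounds at times $T-R+1,\dots,T$ combined with weight $1-\lambda$ (plus the boundary weight at $T$), so that each window term $\tilde g_t=\psi_t(\tilde\theta_t)-\psi_t(\theta)$ receives total coefficient $1$ and each earlier term receives coefficient $\lambda^{T-R+1-t}$, control the pre-window contribution via $-\tilde g_t\le\psi_t(\theta)\le(\|\theta\|+1)^2$, divide by $R$, and add $D_2\lambda^R$ from Lemma \ref{p:theta-theta(R)}. The only slip is minor bookkeeping: the pre-window weights $\lambda^{T-R+1-t}$ sum to at most $\lambda/(1-\lambda)$, not $\lambda^{R}/(1-\lambda)$, and it is precisely this $1/(1-\lambda)$ bound (after dividing by $R$) that the stated term $(\|\theta\|+1)^2/(R(1-\lambda))$ absorbs.
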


\begin{proof}
Let $\tilde{\theta}_{t}$ be given by the $\lambda $-discounted $a$-forward
algorithm. Put $g_{t}:=\psi _{t}(\theta _{t})-\psi _{t}(\theta )$ (where $%
\theta _{t}$ is given by the $R$-windowed $\lambda $-discounted $a$-forward
algorithm) and $\tilde{g}_{t}:=\psi _{t}(\tilde{\theta}_{t})-\psi
_{t}(\theta ).$ Apply (\ref{eq:est-DFA1}) at $T,$ and also at each one of $%
T-R+1,T-R+2,...,T-1;$ multiply those by $1-\lambda $ and add them all up, to
get 
\begin{eqnarray*}
\sum_{t=1}^{T}\lambda ^{T-t}\tilde{g}_{t}+(1-\lambda
)\sum_{r=1}^{R-1}\sum_{t=1}^{T-r}\lambda ^{T-r-t}\tilde{g}_{t} &\leq
&(a\left\Vert \theta \right\Vert ^{2}+D_{1})(1+(R-1)(1-\lambda )) \\
&=&(a\left\Vert \theta \right\Vert ^{2}+D_{1})(R-R\lambda +\lambda ).
\end{eqnarray*}%
For $t\leq T-R,$ the total coefficient of $\tilde{g}_{t}$ on the left-hand
side above is $\lambda ^{T-t}+(1-\lambda )\sum_{r=1}^{R-1}\lambda
^{T-r-t}=\lambda ^{T-R+1-t};$ for $T-R+1\leq t\leq T,$ it is $\lambda
^{T-t}+(1-\lambda )\sum_{r=1}^{T-t}\lambda ^{T-r-t}=1.$ Therefore%
\[
\sum_{t=1}^{T-R}\lambda ^{T-R+1-t}\tilde{g}_{t}+\sum_{t=T-R+1}^{T}\tilde{g}%
_{t}\leq (a\left\Vert \theta \right\Vert ^{2}+D_{1})(R-R\lambda +\lambda ). 
\]%
Now $\tilde{g}_{t}\geq -\psi _{t}(\theta )\geq -(\left\Vert \theta
\right\Vert \left\Vert x_{t}\right\Vert +\left\vert y_{t}\right\vert
)^{2}\geq -(\left\Vert \theta \right\Vert +1)^{2},$ and so%
\begin{eqnarray*}
\sum_{t=T-R+1}^{T}\tilde{g}_{t} &\leq &(a\left\Vert \theta \right\Vert
^{2}+D_{1})(R-R\lambda +\lambda )+(\left\Vert \theta \right\Vert
+1)^{2}\sum_{t=1}^{T-R}\lambda ^{T-R+1-t} \\
&\leq &(a\left\Vert \theta \right\Vert ^{2}+D_{1})(R-R\lambda +\lambda )+%
\frac{(\left\Vert \theta \right\Vert +1)^{2}}{1-\lambda }.
\end{eqnarray*}%
Divide by $R$ and use $g_{t}\leq \tilde{g}_{t}+D_{2}\lambda ^{R}$ (by
Proposition \ref{p:theta-theta(R)}).
\end{proof}

\bigskip

Choosing appropriate $\lambda $ and $R$ allows us to bound the right-hand
side of (\ref{eq:est-WDFA}).

\begin{proposition}
\label{p:WDFA2}For every $\varepsilon >0$ and $a>0$ there is $\lambda
_{0}\equiv \lambda _{0}(\varepsilon ,a,d)<1$ such that for every $\lambda
_{0}<\lambda <1$ there is $R_{0}\equiv R_{0}(\varepsilon ,a,d,\lambda )\geq
1 $ such that for every $R\geq R_{0}$ the $R$-windowed $\lambda $-discounted 
$a $-forward algorithm yields%
\begin{eqnarray}
\frac{1}{R}\sum_{t=T-R+1}^{T}\left[ \psi _{t}(\theta _{t})-\psi _{t}(\theta )%
\right] &\leq &\varepsilon (1+\left\Vert \theta \right\Vert ^{2}),
\label{eq:R-R} \\
\frac{1}{T}\sum_{t=1}^{T}\left[ \psi _{t}(\theta _{t})-\psi _{t}(\theta )%
\right] &\leq &\varepsilon (1+\left\Vert \theta \right\Vert ^{2}),
\label{eq:R-T}
\end{eqnarray}%
for every $T\geq R,$ every $\theta \in \mathbb{R}^{d},$ and every sequence $%
(x_{t},y_{t})_{t\geq 1}$ in $\mathbb{R}^{d}\times \mathbb{R}$ with $%
\left\Vert x_{t}\right\Vert \leq 1$ and $\left\vert y_{t}\right\vert \leq 1$
for all $t.$
\end{proposition}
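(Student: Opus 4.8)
The plan is to derive both estimates from the windowed regret bound (\ref{eq:est-WDFA}) of Proposition \ref{p:WDFA} by making each of its summands small, choosing the discount factor $\lambda $ first and the window length $R$ only afterwards. Using $(\left\Vert \theta \right\Vert +1)^{2}\leq 2(1+\left\Vert \theta \right\Vert ^{2})$ and $1\leq 1+\left\Vert \theta \right\Vert ^{2}$, each of the four terms $(a\left\Vert \theta \right\Vert ^{2}+D_{1})(1-\lambda )$, $(a\left\Vert \theta \right\Vert ^{2}+D_{1})\lambda /R$, $(\left\Vert \theta \right\Vert +1)^{2}/(R(1-\lambda ))$, and $D_{2}\lambda ^{R}$ on the right of (\ref{eq:est-WDFA}) can be compared with a multiple of $1+\left\Vert \theta \right\Vert ^{2}$, so it suffices to bound each by $(\varepsilon /8)(1+\left\Vert \theta \right\Vert ^{2})$; this gives (\ref{eq:R-R}) with $\varepsilon /2$ in place of $\varepsilon $, after which (\ref{eq:R-T}) follows by covering $[1,T]$ with windows of length $R$.

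\emph{Choice of $\lambda _{0}$.} Among the four summands only $(a\left\Vert \theta \right\Vert ^{2}+D_{1})(1-\lambda )$ carries no factor $1/R$ or $\lambda ^{R}$. Its $\theta $-dependent part satisfies $(1-\lambda )a\left\Vert \theta \right\Vert ^{2}\leq (\varepsilon /8)\left\Vert \theta \right\Vert ^{2}$ as soon as $1-\lambda \leq \varepsilon /(8a)$, while its constant part $(1-\lambda )D_{1}(a,\lambda ,d)$ tends to $0$ as $\lambda \uparrow 1$, since by the explicit formula (\ref{eq:D1}) the quantity $D_{1}$ grows only logarithmically in $1/(1-\lambda )$. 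Hence there is $\lambda _{0}\in \lbrack 1/4,1)$ — we keep $\lambda _{0}\geq 1/4$ so that Proposition \ref{p:WDFA} applies — with $(a\left\Vert \theta \right\Vert ^{2}+D_{1})(1-\lambda )\leq (\varepsilon /8)(1+\left\Vert \theta \right\Vert ^{2})$ for every $\lambda \in (\lambda _{0},1)$ and every $\theta \in \mathbb{R}^{d}$.

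\emph{Choice of $R_{0}$.} Now fix $\lambda \in (\lambda _{0},1)$; then $D_{1}=D_{1}(a,\lambda ,d)$ and $D_{2}=D_{2}(a,\lambda ,d)$ are fixed numbers and $\lambda ^{R}\rightarrow 0$. Taking $R_{0}$ large enough — larger than suitable constant multiples of $a/\varepsilon $, of $D_{1}/\varepsilon $, and of $1/(\varepsilon (1-\lambda ))$, and large enough that $D_{2}\lambda ^{R}\leq \varepsilon /8$ — makes each of the three remaining summands at most $(\varepsilon /8)(1+\left\Vert \theta \right\Vert ^{2})$ for all $R\geq R_{0}$ and all $\theta $. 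Summing the four bounds in (\ref{eq:est-WDFA}) yields $\frac{1}{R}\sum_{t=T-R+1}^{T}[\psi _{t}(\theta _{t})-\psi _{t}(\theta )]\leq (\varepsilon /2)(1+\left\Vert \theta \right\Vert ^{2})$ for every $T\geq 1$, which in particular gives (\ref{eq:R-R}).

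\emph{From the window to the time average, and the main obstacle.} Write $g_{t}:=\psi _{t}(\theta _{t})-\psi _{t}(\theta )$, and recall $g_{t}=0$ for $t\leq 0$ (there $x_{t}=y_{t}=\theta _{t}=0$). Given $T\geq R$, put $n:=\lceil T/R\rceil $; the $n$ windows $[\,T-jR+1,\,T-(j-1)R\,]$, $j=1,\dots ,n$, are disjoint, each has right endpoint $\geq 1$, and their union $[\,T-nR+1,\,T\,]$ contains $[1,T]$. Applying (\ref{eq:R-R}) (with the sharper constant $\varepsilon /2$) to each window and summing gives $\sum_{t=1}^{T}g_{t}=\sum_{t=T-nR+1}^{T}g_{t}\leq nR\cdot (\varepsilon /2)(1+\left\Vert \theta \right\Vert ^{2})$, and since $T\geq R$ we have $nR\leq T+R\leq 2T$, so $\frac{1}{T}\sum_{t=1}^{T}g_{t}\leq \varepsilon (1+\left\Vert \theta \right\Vert ^{2})$, which is (\ref{eq:R-T}). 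The only genuinely non-mechanical point is the order of the two choices: because $D_{1}$ blows up as $\lambda \uparrow 1$, one cannot fix $R$ first; the argument rests on the logarithmic — hence $(1-\lambda )$-negligible — growth of $D_{1}$ in (\ref{eq:D1}), which lets $\lambda $ be pinned down before $R$. The covering step costs only the harmless factor $2$ that was absorbed by proving (\ref{eq:R-R}) with $\varepsilon /2$.
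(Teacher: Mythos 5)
Your proposal is correct and follows essentially the same route as the paper: bound each summand on the right of (\ref{eq:est-WDFA}), choosing $\lambda_0$ first (using that $D_1(1-\lambda)\rightarrow 0$ as $\lambda\uparrow 1$ via the logarithmic formula (\ref{eq:D1})) and $R_0$ afterwards, and then obtain (\ref{eq:R-T}) by covering $[1,T]$ with disjoint length-$R$ blocks ending at $T$ (the zero-padding for $t\le 0$ handling the initial partial block), absorbing the factor $\lceil T/R\rceil R/T\le 2$ by proving the windowed bound with a halved $\varepsilon$. The only differences are bookkeeping (eight-way vs.\ four-way splitting of $\varepsilon$), so nothing further is needed.
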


\begin{proof}
The right-hand side of (\ref{eq:est-WDFA}) is%
\[
\leq \left( D_{1}(1-\lambda )+\frac{D_{1}}{R}+\frac{2}{R(1-\lambda )}%
+D_{2}\lambda ^{R}\right) +\left\Vert \theta \right\Vert ^{2}\left(
a(1-\lambda )+\frac{a}{R}+\frac{2}{R(1-\lambda )}\right) 
\]%
(use $\lambda /R\leq 1/R$ and $(\left\Vert \theta \right\Vert +1)^{2}\leq
2\left\Vert \theta \right\Vert ^{2}+2).$ First, take $1/4\leq \lambda _{0}<1$
close enough to $1$ so that $a(1-\lambda _{0})\leq \varepsilon /4$ and $%
D_{1}(a,\lambda _{0},d)\cdot (1-\lambda _{0})\leq \varepsilon /4$ (recall
formula (\ref{eq:D1}) for $D_{1}$ and use $\lim_{x\rightarrow 0^{+}}x\ln
x=0).$ Then, given $\lambda \in \lbrack \lambda _{0},1),$ take $R_{0}\geq 1\ 
$large enough so that $a/R_{0}\leq \varepsilon /4,$\ $D_{1}(a,\lambda
,d)/R_{0}\leq \varepsilon /4,$ $2/(R_{0}(1-\lambda ))\leq \varepsilon /4,$
and $D_{2}(a,\lambda ,d)\lambda ^{R_{0}}\leq \varepsilon /4.$ This shows (%
\ref{eq:R-R}) for every $T\geq 1.$

In particular, for $T^{\prime }<R$ we get $(1/R)\sum_{t=1}^{T^{\prime }}%
\left[ \psi _{t}(\theta _{t})-\psi _{t}(\theta )\right] \leq \varepsilon
(1+\left\Vert \theta \right\Vert ^{2})$ (because $(x_{t},y_{t},\theta
_{t})=(0,0,0)$ for all $t\leq 0).$ For $T\geq R,$ add up the inequalities (%
\ref{eq:R-R}) for the disjoint blocks of size $R$ that end at $t=T,$
together with the above inequality for the initial smaller block of size $%
T^{\prime }<R$ when $T$ is not a multiple of $R,$ to get $(1/R)\sum_{t=1}^{T}%
\left[ \psi _{t}(\theta _{t})-\psi _{t}(\theta )\right] \leq \lceil
T/R\rceil \varepsilon (1+\left\Vert \theta \right\Vert ^{2})\leq
2(T/R)\varepsilon (1+\left\Vert \theta \right\Vert ^{2}).$ Replacing $%
\varepsilon $ with $\varepsilon /2$ yields (\ref{eq:R-T}).
\end{proof}

\bigskip

\noindent \textbf{Remark. }Similar arguments show that, for $\lambda
_{0}\leq \lambda <1,$ the discounted average is also small:%
\[
\frac{1-\lambda }{1-\lambda ^{T}}\sum_{t=1}^{T}\lambda ^{T-t}\left[ \psi
_{t}(\theta _{t})-\psi _{t}(\theta )\right] \leq \varepsilon (1+\left\Vert
\theta \right\Vert ^{2}). 
\]

\bigskip

Proposition \ref{p:WDFA2} yields the main result of this section, Theorem %
\ref{th:regression-R}.

\bigskip

\begin{proof}[Proof of Theorem \protect\ref{th:regression-R}]
Use Proposition \ref{p:WDFA2} (with, say, $a=1),$ and rescale everything by $%
X$ and $Y$ appropriately (see the Remark before Proposition \ref{p:DFA1}).
\end{proof}

\section{Weak Calibration\label{s:weak calibration}}

The notion of \textquotedblleft weak calibration" was introduced by Kakade
and Foster (2004) and Foster and Kakade (2006). The idea is as follows.
Given a \textquotedblleft test" function $w:C\rightarrow \{0,1\}$ that
indicates which forecasts $c$ to consider, let the corresponding score be%
\footnote{%
The $S_{T}$ scores are norms of averages, rather than averages of norms like
the $K_{T}$ scores. \textquotedblleft Windowed" versions of the scores may
also be considered (with the average taken over the last $R$ periods only;
cf. (\ref{eq:regr-R-R})).} $S_{T}^{w}:=||(1/T)%
\sum_{t=1}^{T}w(c_{t})(a_{t}-c_{t})||.$ It can be shown that if $S_{T}^{w}$
is small for every such $w,$ then the calibration score $K_{T}$ is also
small.\footnote{%
Specifically, if $S_{T}^{w}\leq \varepsilon $ for all $w:C\rightarrow
\{0,1\} $ then $K_{T}\leq 2m\varepsilon .$ Indeed, for each coordinate $%
i=1,...,m,$ let $C_{+}^{i}$ be the set of all $c_{t}$ such that $\bar{a}%
_{t,i}>c_{t,i},$ and $C_{-}^{i}$ the set of all $c_{t}$ such that $\bar{a}%
_{t,i}<c_{t,i}.$ Taking $w$ to be the indicator of $C_{+}^{i}$ yields $%
S_{T}^{w}=(1/T)\sum_{t}[\bar{a}_{t,i}-c_{t,i}]_{+}\leq \varepsilon $ (where $%
[z]_{+}:=\max \{z,0\});$ similarly, the indicator of $C_{-}^{i}$ yields $%
(1/T)\sum_{t}[\bar{a}_{t,i}-c_{t,i}]_{-}\leq \varepsilon .$ Adding the two
inequalities gives $(1/T)\sum_{t}|\bar{a}_{t,i}-c_{t,i}|\leq 2\varepsilon .$
Since this holds for each one of the $m$ coordinates, it follows that $%
K_{T}\leq 2m\varepsilon .$}

Now instead of the discontinuous indicator functions, \emph{weak calibration}
requires that $S_{T}^{w}$ be small for Lipschitz continuous
\textquotedblleft weight" functions $w:C\rightarrow \lbrack 0,1]$.
Specifically, let $\varepsilon >0$ and $L<\infty .$ A procedure (i.e., a
strategy of the C-player in the calibration game) is $(\varepsilon ,L)$-%
\emph{weakly calibrated} if there is $T_{0}\equiv T_{0}(\varepsilon ,L)$
such that%
\begin{equation}
S_{T}^{w}=\left\Vert \frac{1}{T}\sum_{t=1}^{T}w(c_{t})(a_{t}-c_{t})\right%
\Vert \leq \varepsilon  \label{eq:(L,eps)-wc}
\end{equation}%
holds for every strategy of the A-player, every $T>T_{0},$ and every weight
function\ $w:C\rightarrow \lbrack 0,1]$ that is $L$-Lipschitz (i.e., $%
\mathcal{L}(w)\leq L).$

The importance of weak calibration is that, unlike regular calibration, it
can be guaranteed by deterministic procedures (which are thus leaky): Kakade
and Foster (2004) and Foster and Kakade (2006) have proven the existence of
deterministic $(\varepsilon ,L)$-weakly calibrated procedures. Moreover, as
we will show in the next section, weak calibration is essentially equivalent
to smooth calibration.

We now provide a deterministic $(\varepsilon ,L)$-weakly calibrated
procedure that in addition has finite recall and is stationary.

\begin{theorem}
\label{th:weak-calib}For every $\varepsilon >0$ and $L<\infty $ there exists
an $(\varepsilon ,L)$-weakly calibrated deterministic procedure that has
finite recall and is stationary; moreover, all its forecasts may be taken to
lie on a finite grid.
\end{theorem}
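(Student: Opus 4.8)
The plan is to reduce weak calibration to the online linear regression algorithm of Theorem \ref{th:regression-R}, coordinate by coordinate, and then close the loop with a fixed-point argument to make the procedure deterministic. First I would fix a fine $\delta$-grid $D\subseteq C$ and a finite family of ``basis'' weight functions --- say the tent functions centered at the grid points with width comparable to $1/L$ --- so that any $L$-Lipschitz $w:C\to[0,1]$ can be approximated uniformly (to within $O(\varepsilon)$) by a nonnegative combination $\sum_d \beta_d \Lambda_d$ of these basis functions with $\sum_d \beta_d$ bounded. This reduces the requirement $\|(1/T)\sum_t w(c_t)(a_t-c_t)\|\le\varepsilon$ for all Lipschitz $w$ to the finitely many requirements that $(1/T)\sum_t \Lambda_d(c_t)(a_{t,i}-c_{t,i})$ be small, for each grid point $d$ and each coordinate $i=1,\dots,m$.

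Next I would invoke the regression algorithm. For each coordinate $i$, set $x_t := (\Lambda_d(c_t))_{d\in D}\in\mathbb{R}^{|D|}$ (the vector of basis-function values at the current forecast) and $y_t := a_{t,i}$; then being weakly calibrated in coordinate $i$ amounts to making the per-period ``prediction'' $\hat{y}_t := \theta_t' x_t$, with $\theta_t$ chosen so that the realized linear estimator tracks the actions well. The key identity is that for the least-squares-type forward algorithm the regret bound of Theorem \ref{th:regression-R}, compared against $\theta=0$ (so $\psi_t(0)=y_t^2$), forces $(1/T)\sum_t(\hat y_t-y_t)x_{t,d}$ to be small for every $d$ --- this is the standard ``orthogonality of residuals to regressors'' that online regret gives in an approximate, windowed form; since the regression algorithm of Theorem \ref{th:regression-R} is $R$-recall, stationary and deterministic, so is the resulting map $h_{t-1}\mapsto(\theta_t)$. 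The point is that $\hat y_t$ plays the role of ``what the forecast in coordinate $i$ should have been,'' so if the C-player could announce $c_{t,i}=\hat y_{t,i}$ he would be weakly calibrated.

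The obstacle --- and this is the heart of the proof --- is that $\hat y_t$ depends on $c_t$ (through $x_t=(\Lambda_d(c_t))_d$), so we cannot simply declare $c_t:=\hat y_t$; we need a self-consistent forecast. Following Kakade--Foster, I would resolve this with a fixed-point argument: the map $c\mapsto \widehat{y}(c)$, where $\widehat y(c)$ is the vector whose $i$-th coordinate is the regression prediction computed with $x_t=(\Lambda_d(c))_d$, is continuous in $c$ (each $\Lambda_d$ is Lipschitz and $\theta_t$ depends continuously on the finite past data and on $c$), and maps the compact convex set $C$ into (a small neighborhood of) $C$; by Brouwer's theorem it has a fixed point $c_t^*$, which we announce. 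Because this fixed point is obtained from a deterministic continuous map depending only on the last $R$ periods, the resulting procedure is deterministic, $R$-recall, stationary, and --- after a final rounding of $c_t^*$ to the nearest grid point of $D$, absorbed into the $O(\varepsilon)$ slack --- has all forecasts on a finite grid. Collecting the error terms (grid resolution $\delta$, basis approximation of arbitrary $w$, the regression regret $\varepsilon(1+\|\theta\|^2)$ with $\|\theta\|$ controlled by the bounded $\beta_d$'s, and the rounding), and choosing the grid fine enough and $R$ large enough via Theorem \ref{th:regression-R}, yields $(\varepsilon,L)$-weak calibration.

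I expect the main technical friction to be bookkeeping: verifying that the windowed/time-average regret bound (\ref{eq:regr-R-R})--(\ref{eq:regression-R}) translates into the ``residual orthogonal to each regressor'' statement with the right constants --- one must take $\theta$ to be small perturbations $\pm\eta e^{(d)}$ of the comparator and let $\eta\to0$, or equivalently use convexity of $\psi_t$, to extract a linear inequality from the quadratic regret bound --- and then checking that the continuity needed for Brouwer survives the windowing and is uniform enough that the fixed point exists for all $t>R$. Everything else (the grid, the tent-function partition of unity, the final rounding) is routine approximation.
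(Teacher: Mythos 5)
Your architecture is the paper's: a finite Lipschitz basis obtained from a grid/partition of unity, one windowed online regression per coordinate with regressors $x_t=F(c_t)$, a Brouwer fixed point to make the announced forecast equal the regression prediction, and a final rounding to a finite grid. But the step you flag as "bookkeeping" --- extracting the linear ("residual orthogonal to regressor") inequality from the quadratic regret bound --- is exactly where your argument as stated breaks, and the fix is a specific missing idea rather than constants-chasing. Comparing against $\theta=0$ gives nothing of the sort: $\frac{1}{T}\sum_t(y_t-\hat y_t)^2\le\frac{1}{T}\sum_t y_t^2+\varepsilon$ places no sign constraint on $\frac{1}{T}\sum_t(\hat y_t-y_t)x_{t,d}$, since the left-hand side can be far below the right. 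The bound (\ref{eq:regression-R}) holds only against \emph{fixed} comparators $\theta$, while the algorithm's predictions use the time-varying $\theta_t$; so "perturb the comparator by $\pm\eta e^{(d)}$" only produces the desired first-order inequality if there exists a fixed $\theta_*$ whose predictions $\theta_*'x_t$ coincide with the announced forecasts $c_{t,i}$ for all $t$. The paper manufactures this by putting the coordinate functions $f_j(c)=c_j$ \emph{exactly} into the dictionary, so that at the fixed point the forecast sequence is reproduced by the fixed comparator $e^{(j)}$, and one then compares against $e^{(j)}+\varepsilon_2 u_j\varpi_w$. Your dictionary consists only of tent functions, and patching this by approximating the coordinate function with tents does not work quantitatively: the representation error $\delta_t$ enters the extracted inequality divided by the perturbation size $\eta$, and $\eta$ must itself shrink like $\varepsilon/d^2$ where $d$ is the dictionary size; demanding the tent dictionary approximate the coordinate function to accuracy $O(\varepsilon\eta)$ then forces $d$ to grow with $d$ itself (roughly $d\gtrsim(d^2/\varepsilon^2)^m$), a circularity with no solution. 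So the inclusion of the coordinate functions in the regression dictionary is not a cosmetic convenience but the pivot of the proof, and it is absent from your proposal.

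A secondary gap: your fixed-point step asserts that $c\mapsto\hat y(c)$ maps $C$ into "a small neighborhood of $C$," but nothing forces the regression prediction to be near $C$ --- it is only bounded by something like $d\lambda/(1-\lambda)$. Brouwer must therefore be applied on a large cube containing the range, and the fixed point $b_t$ then projected onto $C$; one must also check that announcing $\gamma(b_t)$ instead of $b_t$ does not hurt, which the paper does via the projection inequality $\left\Vert a_t-b_t\right\Vert\ge\left\Vert a_t-\gamma(b_t)\right\Vert$ (valid because $a_t\in A\subseteq C$ and $C$ is convex). This is repairable, but it is a needed argument, not a remark. The remaining elements of your plan (finite basis via Lemma \ref{l:lipschitz-basis}-type approximation, stationarity and finite recall inherited from the windowed algorithm, rounding the forecast to an $\varepsilon$-grid at the end) do match the paper.
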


The proof uses the result of Theorem \ref{th:regression-R}. The basic idea
is to use the forecast itself as part of the input to the forecast---which
adds a fixed-point construct to the regression. Assume for starters that $a$
and $c$ are one-dimensional, and also that we have only a single weight
function $w$. Consider the online linear regression problem with $%
x_{t}=(c_{t},w(c_{t}))$ and $y_{t}=a_{t}.$ Given the history $h_{t-1}$ up to
and including time $t-1,$ if we knew the value of $c_{t}$ then we would get
a forecast $\hat{a}_{t}:=\theta _{t}^{\prime }x_{t}$ for which, by equation (%
\ref{eq:regression-R}), the regret is small. But we do not know $c_{t},$ as
it is going in fact to be our forecast: that is, we want to choose $c_{t}$
so that the resulting $\hat{a}_{t}$ satisfies $\hat{a}_{t}=c_{t}.$ This
requires solving a fixed-point problem (which is possible since the mapping $%
H$ from $c_{t}$ to $\hat{a}_{t}$ is continuous), and indeed yields the
desired $c_{t}.$ Now equation (\ref{eq:regression-R}) yields, for an
appropriate $\varepsilon >0,$%
\[
\frac{1}{T}\sum_{t=1}^{T}(a_{t}-\hat{a}_{t})^{2}\leq \frac{1}{T}%
\sum_{t=1}^{T}(a_{t}-\theta ^{\prime }x_{t})^{2}+\varepsilon 
\]%
for all $\theta $. But $\hat{a}_{t}=c_{t}$, and so taking $\theta =(1,\sqrt{%
\varepsilon })$ gives%
\begin{eqnarray*}
\frac{1}{T}\sum_{t=1}^{T}(a_{t}-c_{t})^{2} &\leq &\frac{1}{T}%
\sum_{t=1}^{T}(a_{t}-c_{t}-\sqrt{\varepsilon }w(c_{t}))^{2}+\varepsilon \\
&\leq &\frac{1}{T}\sum_{t=1}^{T}(a_{t}-c_{t})^{2}-2\sqrt{\varepsilon }\frac{1%
}{T}\sum_{t=1}^{T}w(c_{t})(a_{t}-c_{t})+\varepsilon +\varepsilon
\end{eqnarray*}%
(in the second line we have used $w(c_{t})\in \lbrack 0,1]).$ Therefore%
\[
\frac{1}{T}\sum_{t=1}^{T}w(c_{t})(a_{t}-c_{t})\leq \sqrt{\varepsilon }; 
\]%
together with the similar computation for $\theta =(1,-\sqrt{\varepsilon })$
we get%
\[
S_{T}^{w}=\left\vert \frac{1}{T}\sum_{t=1}^{T}w(c_{t})(a_{t}-c_{t})\right%
\vert \leq \sqrt{\varepsilon }, 
\]%
as desired. To deal with $m$-dimensional $a$ and $c$ we use $m$ separate
online regressions, one for each coordinate; to deal with all the $L$%
-Lipschitz weight functions $w,$ we take an appropriate finite grid.

\bigskip

\begin{proof}
\emph{(i) Preliminaries. }Without loss of generality assume that $A\subseteq
C\subseteq \lbrack 0,1]^{m}$ (one can always translate the sets $A$ and $C$%
---which does not affect (\ref{eq:(L,eps)-wc})---and rescale them---which
just rescales the Lipschitz constant); assume also that $L\geq 1$ (as $L$
increases there are more Lipschitz functions) and $\varepsilon \leq 1$.

For every $b\in \mathbb{R}^{m}$ let $\gamma (b):=\arg \min_{c\in
C}\left\Vert c-b\right\Vert $ be the closest point to $b$ in $C$ (it is well
defined and unique since $C$ is a convex compact set); then%
\begin{equation}
\left\Vert c-b\right\Vert \geq \left\Vert c-\gamma (b)\right\Vert
\label{eq:proj}
\end{equation}%
for every $c\in C$ (because 
\[
\left\Vert c-b\right\Vert ^{2}=\left\Vert c-\gamma (b)\right\Vert
^{2}+\left\Vert b-\gamma (b)\right\Vert ^{2}-2(b-\gamma (b))\cdot (c-\gamma
(b)) 
\]
and the third term is $\leq 0);$ moreover, $\gamma (b)=b$ when $b\in C.$

Let $\varepsilon _{1}:=\varepsilon /(2\sqrt{m}).$ Denote by $W_{L}$ the set
of weight functions $w:C\rightarrow \lbrack 0,1]$ with $\mathcal{L}(w)\leq
L. $ By Lemma \ref{l:lipschitz-basis} in the Appendix, there exist $d$
functions $f_{1},...,f_{d}$ in $W_{L}$ such that for every $w\in W_{L}$
there is a vector $\varpi \equiv \varpi _{w}\in \lbrack 0,1]^{d}$ with%
\footnote{\label{ft:dimension}Since $W_{L}$ is compact in the $\sup $ norm,
there are $f_{1},...,f_{d}\in W_{L}$ such that for every $w\in W_{L}$ there
is $1\leq i\leq d$ with $\max_{c\in C}|w(c)-f_{i}(c)|\leq \varepsilon _{1}.$
Lemma \ref{l:lipschitz-basis} improves on this, in getting a much smaller $d$
by using linear combinations with bounded coefficients.} 
\begin{equation}
\max_{c\in C}\left\vert w(c)-\sum_{i=1}^{d}\varpi _{i}f_{i}(c)\right\vert
\leq \varepsilon _{1}.  \label{eq:omega-w}
\end{equation}%
Denote $F(c):=(f_{1}(c),...,f_{d}(c))\in \lbrack 0,1]^{d};$ thus $\left\Vert
F(c)\right\Vert \leq \sqrt{d}.$ Without loss of generality we assume that
the set $\{f_{1},...,f_{d}\}$ includes the \textquotedblleft $j$-th
coordinate function," which maps each $c\in C$ to its $j$-th coordinate $%
c_{j};$ say, $f_{j}(c)=c_{j}$ for $j=1,...,m$ (thus $d>m;$ in fact $d$ is
much larger than $m).$

Let $\varepsilon _{2}:=\varepsilon /(m+m(1+d)^{2}+d^{2})$ (where $d$ is
given by Lemma \ref{l:lipschitz-basis} in the Appendix, and depends on $%
\varepsilon ,m,$ and $L)$ and $\varepsilon _{3}:=(\varepsilon _{2})^{2}.$

\emph{(ii) The function }$\emph{H}$\emph{. }Let $\lambda $ and $R$ be given
by Theorem \ref{th:regression-R} and Proposition \ref{p:WDFA2} for $a=1,$ $X=%
\sqrt{d},$ $Y=1,$ and $\varepsilon =\varepsilon _{3}$. For each $j=1,...,m$
consider the sequence $(x_{t},y_{t}^{(j)})_{t\geq
1}=(F(c_{t}),a_{t,j})_{t\geq 1}$ in $\mathbb{R}^{d}\times \mathbb{R},$ where 
$a_{t}\in A$ is determined by the A-player, and $c_{t}\in C$ is constructed
inductively as follows.

Let the history be $h_{t-1}=(c_{1},a_{1},...,c_{t-1},a_{t-1}).$ For each $%
c\in \mathbb{R}^{m}$, let\footnote{%
A subscript $j$ stands for the $j$-th coordinate (e.g., $a_{t,j}$ is the $j$%
-th coordinate of $a_{t})$, whereas a superscript $j$ refers to the $j$-th
procedure (e.g., $v_{t}^{(j)}$).} 
\begin{eqnarray*}
Z_{t}(c) &=&I+\sum_{q=1}^{R-1}\lambda ^{R-q}x_{q}x_{q}^{\prime
}+F(c)F(c)^{\prime }\in \mathbb{R}^{d\times d}, \\
v_{t}^{(j)} &=&\sum_{q=1}^{R-1}\lambda ^{R-q}a_{q,j}x_{q}\in \mathbb{R}^{d},
\\
H_{t,j}(c) &=&\left( Z_{t}(c)^{-1}v_{t}^{(j)}\right) ^{\prime }F(c)\in 
\mathbb{R}, \\
H_{t}(c) &=&(H_{t,1}(c),...,H_{t,m}(c))\in \mathbb{R}^{m}
\end{eqnarray*}%
(where $x_{q}=F(c_{q})$ for $q<t).$ Finally, we extend the function $H_{t}$
to all of $\mathbb{R}^{m}$ by putting $H_{t}(b):=H_{t}(\gamma (b))$ for
every $b\in \mathbb{R}^{m}$; i.e., we project $b$ to its closest point $%
\gamma (b)$ in $C,$ and then apply $H_{t}$ to it.

\emph{(iii) Fixed point of }$H.$ For every $c\in C$ we have $\left\Vert
v_{t}^{(j)}\right\Vert \leq \sqrt{d}\lambda /(1-\lambda )$ (since $%
\left\vert a_{q,j}\right\vert \leq 1$ and $\left\Vert x_{q}\right\Vert
=\left\Vert F(c_{q})\right\Vert \leq \sqrt{d}),$ and so $\left\Vert
Z_{t}(c)^{-1}v_{t}^{(j)}\right\Vert \leq \sqrt{d}\lambda /(1-\lambda )$ by
Lemma \ref{l:sensitivity} ($Z_{t}(c)$ is positive definite and its
eigenvalues are $\geq 1),$ which finally implies that $\left\vert
H_{t,j}(c)\right\vert \leq \sqrt{d}\lambda /(1-\lambda )\cdot \sqrt{d}%
=d\lambda /(1-\lambda )=:K.$ Therefore the restriction of $H_{t}$ to the
compact and convex set $[-K,K]^{m},$ which is clearly a continuous function
(since, again, $Z_{t}(c)$ is positive definite and its eigenvalues are $\geq
1),$ has a fixed point (by Brouwer's fixed-point theorem), which we denote $%
b_{t}$ (any fixed point will do);\footnote{\label{ftn:random-fixedpoint}%
There may be more than one fixed point here, in which case we may choose the
fixed point at random, and obtain a \emph{randomized} procedure that
satisfies everything the deterministic procedure does. Using it yields in
Theorem \ref{th:smooth-c} a randomized procedure that is \emph{leaky}
smoothly calibrated (cf. Section \ref{sus:leaky}).} put $c_{t}:=\gamma
(b_{t})\in C.$ Thus 
\[
c_{t}=\gamma (b_{t})\text{ \ and\ \ }b_{t}=H_{t}(b_{t})=H_{t}(c_{t}). 
\]%
Define $x_{t}:=F(\gamma (b_{t}))=F(c_{t})$ and $\theta
_{t}^{(j)}:=Z_{t}(c_{t})^{-1}v_{t}^{(j)}\in \mathbb{R}^{d}.$ Then $%
Z_{t}(c_{t})=I+\sum_{q=1}^{R}\lambda ^{R-q}x_{q}x_{q}^{\prime },$ and thus
it corresponds to the $R$-windowed $\lambda $-discounted $1$-forward
algorithm (see (\ref{eq:RDFA})). Therefore, for every $j=1,...,m$ and every $%
\theta ^{(j)}\in \mathbb{R}^{d}$ we have by (\ref{eq:regression-R})%
\begin{equation}
\frac{1}{T}\sum_{t=1}^{T}\left[ \psi _{t}^{(j)}(\theta _{t}^{(j)})-\psi
_{t}^{(j)}(\theta ^{(j)})\right] \leq \varepsilon _{3}\left( 1+\left\Vert
\theta ^{(j)}\right\Vert ^{2}\right)  \label{eq:theta(j)}
\end{equation}%
for all $T\geq T_{0}\equiv R,$ where $\psi _{t}^{(j)}(\theta
)=(a_{t,j}-\theta ^{\prime }x_{t})^{2},$ and thus $\psi _{t}^{(j)}(\theta
_{t}^{(j)})=(a_{t,j}-b_{t,j})^{2}$ (recall that $b_{t,j}=H_{t,j}(b_{t})=%
\left( \theta _{t}^{(j)}\right) ^{\prime }F(\gamma (b_{t}))=\left( \theta
_{t}^{(j)}\right) ^{\prime }x_{t}).$ Summing over $j$ yields%
\[
\frac{1}{T}\sum_{t=1}^{T}\sum_{j=1}^{m}\left[ \psi _{t}^{(j)}(\theta
_{t}^{(j)})-\psi _{t}^{(j)}(\theta ^{(j)})\right] \leq \varepsilon
_{3}\left( m+\sum_{j=1}^{m}\left\Vert \theta ^{(j)}\right\Vert ^{2}\right) . 
\]%
Now $\sum_{j=1}^{m}\psi _{t}^{(j)}(\theta
_{t}^{(j)})=\sum_{j=1}^{m}(a_{t,j}-b_{t,j})^{2}=\left\Vert
a_{t}-b_{t}\right\Vert ^{2}\geq \left\Vert a_{t}-\gamma (b_{t})\right\Vert
^{2}=\left\Vert a_{t}-c_{t}\right\Vert ^{2}=\sum_{j=1}^{m}$ $%
(a_{t,j}-c_{t,j})^{2}$ (by the definition of $\gamma (b_{t})$ and (\ref%
{eq:proj}), since $a_{t}\in A\subseteq C),$ and therefore%
\begin{equation}
\frac{1}{T}\sum_{t=1}^{T}\sum_{j=1}^{m}\left[ (a_{t,j}-c_{t,j})^{2}-\psi
_{t}^{(j)}(\theta ^{(j)})\right] \leq \varepsilon _{3}\left(
m+\sum_{j=1}^{m}\left\Vert \theta ^{(j)}\right\Vert ^{2}\right) .
\label{eq:all-j}
\end{equation}

\emph{(iv) Estimating }$S_{T}^{w}.$ Given a weight function $w\in W_{L},$
let the vector $\varpi \equiv \varpi _{w}\in \lbrack 0,1]^{d}$ satisfy (\ref%
{eq:omega-w}), i.e., $\left\vert w(c)-\varpi ^{\prime }F(c)\right\vert \leq
\varepsilon _{1}$ for all $c\in C.$ Take $u=(u_{j})_{j=1,...,m}\in \mathbb{R}%
^{m}$ with $\left\Vert u\right\Vert =1.$ For every $j=1,...,m,$ take $\theta
^{(j)}=e^{(j)}+\varepsilon _{2}u_{j}\varpi \in \mathbb{R}^{d},$ where $%
e^{(j)}\in \mathbb{R}^{d}$ is the $j$-th unit vector; thus $\left\Vert
\theta ^{(j)}\right\Vert \leq 1+\varepsilon _{2}d\leq 1+d$ (since $%
\varepsilon _{2}\leq \varepsilon \leq 1).$ We have%
\[
(\theta ^{(j)})^{\prime }x_{t}=(\theta ^{(j)})^{\prime
}F(c_{t})=c_{t,j}+\varepsilon _{2}\left( \varpi ^{\prime }F(c_{t})\right)
u_{j} 
\]%
(since $f_{j}(c)=c_{j}$ for $j\leq m),$ and hence%
\begin{eqnarray*}
(a_{t,j}-c_{t,j})^{2}-\psi _{t}^{(j)}(\theta ^{(j)})
&=&(a_{t,j}-c_{t,j})^{2}-(a_{t,j}-c_{t,j}-\varepsilon _{2}\left( \varpi
^{\prime }F(c_{t})\right) u_{j})^{2} \\
&=&2\varepsilon _{2}\left( \varpi ^{\prime }F(c_{t})\right)
u_{j}(a_{t,j}-c_{t,j})-(\varepsilon _{2}\varpi ^{\prime
}F(c_{t}))^{2}u_{j}^{2}.
\end{eqnarray*}%
Summing over $j=1,...,m$ yields%
\begin{eqnarray*}
\sum_{j=1}^{m}\left[ (a_{t,j}-c_{t,j})^{2}-\psi _{t}^{(j)}(\theta ^{(j)})%
\right] &=&2\varepsilon _{2}\left( \varpi ^{\prime }F(c_{t})\right)
u^{\prime }(a_{t}-c_{t})-(\varepsilon _{2}\varpi ^{\prime
}F(c_{t})^{2}\left\Vert u\right\Vert ^{2} \\
&\geq &2\varepsilon _{2}\left( \varpi ^{\prime }F(c_{t})\right) u^{\prime
}(a_{t}-c_{t})-(\varepsilon _{2})^{2}d^{2} \\
&\geq &2\varepsilon _{2}w(c_{t})u^{\prime }(a_{t}-c_{t})-\varepsilon
_{1}\cdot 2\varepsilon _{2}\left\Vert u\right\Vert \left\Vert
a_{t}-c_{t}\right\Vert -(\varepsilon _{2})^{2}d^{2} \\
&\geq &2\varepsilon _{2}w(c_{t})u^{\prime }(a_{t}-c_{t})-2\varepsilon
_{1}\varepsilon _{2}\sqrt{m}-(\varepsilon _{2})^{2}d^{2}
\end{eqnarray*}%
(since: $\left\Vert u\right\Vert =1,$ $|\varpi ^{\prime }F(c)|\leq d$ [the
coordinates of $\varpi $ are between $-1$ and $1$ and those of $F(c)$
between $0$ and $1$],\ $\left\Vert a_{t}-c_{t}\right\Vert \leq \sqrt{m}$
(since $a_{t},c_{t}\in \lbrack 0,1]^{m},$ and recall (\ref{eq:omega-w})).

Together with (\ref{eq:all-j}) we get (recall that $\varepsilon
_{3}=(\varepsilon _{2})^{2}$ and $\varepsilon _{1}=\varepsilon /(2\sqrt{m})$%
):%
\[
2\varepsilon _{2}\cdot \frac{1}{T}\sum_{t=1}^{T}w(c_{t})u^{\prime
}(a_{t}-c_{t})\leq (\varepsilon _{2})^{2}(m+m(1+d)^{2}+d^{2})+\varepsilon
\varepsilon _{2}; 
\]%
hence, dividing by $2\varepsilon _{2}$ and recalling that $\varepsilon
_{2}=\varepsilon /(m+m(1+d)^{2}+d^{2})$):%
\[
u\cdot \frac{1}{T}\sum_{t=1}^{T}w(c_{t})(a_{t}-c_{t})\leq \frac{\varepsilon 
}{2}+\frac{\varepsilon }{2}=\varepsilon . 
\]%
Since $u\in \mathbb{R}^{m}$ with $\left\Vert u\right\Vert =1$ was arbitrary,
the proof of (\ref{eq:(L,eps)-wc}) is complete.

\emph{(v) Grid. }For the \textquotedblleft moreover" statement, let $%
\varepsilon _{4}:=\varepsilon /(L\sqrt{m}+1),$ and take $D\subseteq C$ to be
a finite $\varepsilon _{4}$-grid in $C$; i.e., for every $c\in C$ there is $%
d(c)\in D$ with $||d(c)-c||\leq \varepsilon _{4}.$ Replace the forecast $%
c_{T}$ obtained above with $\tilde{c}_{T}:=d(c_{T});$ then, for every $%
a_{T}\in A,$ we have%
\[
||w(c_{T})(a_{T}-c_{T})-w(\tilde{c}_{T})(a_{T}-\tilde{c}_{T})||\leq
L\varepsilon _{4}\sqrt{m}+\varepsilon _{4}=\varepsilon . 
\]%
Therefore the score $S_{T}^{w}$ changes by at most $\varepsilon $, and so it
is at most $2\varepsilon .$
\end{proof}

\section{Smooth Calibration\label{s:smooth calibration}}

In this section we show (Propositions \ref{p:weak->smooth} and \ref%
{p:smooth->weak}) that weak calibration and smooth calibration are
essentially equivalent (albeit with different constants $\varepsilon ,L$).
The existence of weakly calibrated procedures (Theorem \ref{th:weak-calib},
proved in the previous section) then implies the existence of smoothly
calibrated procedures, which proves Theorem \ref{th:smooth-c}.

\bigskip

We first show how to go from weak to smooth calibration. When comparing the
two scores, we see that the smooth calibration score uses weighted averages
rather than sums: $\sum_{s=1}^{T}\Lambda (c_{s},c_{t})(a_{s}-c_{s})$ is
divided by $\sum_{s=1}^{T}\Lambda (c_{s},c_{t}).$ The following useful lemma
shows how to bound the latter using the former.

\begin{lemma}
\label{l:kappa2K}There exists a constant $0<\gamma \equiv \gamma _{C}<\infty 
$ that depends only on the dimension $m$ and the diameter $\alpha $ of $C$
such that for any $L$-smoothing weight function $\Lambda ,$ any $%
c_{1},...,c_{T}\in C,$ and any\footnote{%
The set $C-C$ consists of all $b=b^{\prime }-b^{\prime \prime }$ with $%
b^{\prime },b^{\prime \prime }\in C.$} $b_{1},...,b_{T}\in C-C,$ putting%
\[
B_{t}:=\sum_{s=1}^{T}\Lambda (c_{s},c_{t})b_{s}\text{\ \ \ and\ \ \ }%
W_{t}:=\sum_{s=1}^{T}\Lambda (c_{s},c_{t}) 
\]%
for all $1\leq t\leq T,$ we have%
\[
\frac{1}{T}\sum_{t=1}^{T}\frac{||B_{t}^{{}}||}{W_{t}^{{}}}\leq \gamma
L^{m/2}\left( \frac{1}{T}\max_{1\leq t\leq T}||B_{t}^{{}}||\right) ^{1/2}. 
\]
\end{lemma}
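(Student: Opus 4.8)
The plan is to bound the time-average $(1/T)\sum_t \|B_t\|/W_t$ by splitting the index set $\{1,\dots,T\}$ according to the size of $W_t$. The intuition is: when $W_t$ is large, the ratio $\|B_t\|/W_t$ is automatically small (since $\|B_t\| \le \|B_t\|_{\infty} := \max_s \|B_s\|$... no wait, more carefully $\|B_t\|$ is itself bounded, but the real point is $\|B_t\|/W_t \le M/W_t$ where $M := \max_t \|B_t\|$); and when $W_t$ is small, there cannot be too many such $t$, because $W_t$ small means $c_t$ is in a sparsely populated region, and a packing/covering argument limits how many points can all be in sparse regions.

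\bigskip

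\noftextbf{Step 1: Reduction to a counting bound.} Fix a threshold $w^* > 0$ to be chosen. For the ``heavy'' indices $t$ with $W_t \ge w^*$ we use $\|B_t\|/W_t \le M/w^*$ where $M := \max_{1\le t\le T}\|B_t\|$, contributing at most $M/w^*$ to the average. For the ``light'' indices $t$ with $W_t < w^*$ we use the crude bound $\|B_t\|/W_t \le \|b_t\|/W_t \le \alpha/W_t \le \alpha$ (since $W_t \ge \Lambda(c_t,c_t) = 1$, actually giving $\|B_t\| \le \alpha W_t$ hence $\|B_t\|/W_t \le \alpha$); so this block contributes at most $\alpha \cdot |\{t : W_t < w^*\}|/T$. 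Thus
\[
\frac{1}{T}\sum_{t=1}^{T}\frac{\|B_t\|}{W_t} \le \frac{M}{w^*} + \frac{\alpha}{T}\,\bigl|\{t : W_t < w^*\}\bigr|.
\]

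\bigskip

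\noftextbf{Step 2: Counting the light indices via packing.} This is the heart of the argument. I claim $|\{t : W_t < w^*\}| \le N(w^*) \cdot w^*$ roughly, where $N(w^*)$ is a covering number. More precisely: for the tent-like lower bound, note $\Lambda(c_s, c_t) \ge [1 - L\|c_s - c_t\|]_+ \ge \frac12$ whenever $\|c_s - c_t\| \le 1/(2L)$, using only the Lipschitz property and $\Lambda(c_t,c_t)=1$. So $W_t \ge \frac12 \cdot |\{s : \|c_s - c_t\| \le 1/(2L)\}|$. Now cover $C$ by $N$ balls of radius $1/(4L)$; one can take $N \le \gamma_0 (L\alpha)^m$ for a dimensional constant $\gamma_0$ (standard volume/packing bound, the source of the $L^{m/2}$-type factor — though note the exponent in the statement is $m/2$, not $m$, so the optimization in Step 3 must produce the square root). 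Each index $t$ lies in one such ball $\mathcal{B}$; if $W_t < w^*$ then that ball contains fewer than $2w^*$ of the $c_s$'s... but more carefully, I should count: the number of indices $t$ whose $c_t$ lies in a given small ball of radius $1/(4L)$ is at most — hmm, if many $t$'s have $c_t$ in one ball then for each such $t$, all the others are within $1/(2L)$, so $W_t$ is large. So: in any ball $\mathcal{B}$ of radius $1/(4L)$, either it contains $< 2w^*$ of the $c_t$'s (contributing $< 2w^*$ light indices), or it contains $\ge 2w^*$ of them, in which case every $t$ with $c_t \in \mathcal{B}$ has $W_t \ge w^*$ (not light). Either way each ball contributes $< 2w^*$ light indices, so $|\{t : W_t < w^*\}| < 2 N w^* \le 2\gamma_0 (L\alpha)^m w^*$.

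\bigskip

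\noftextbf{Step 3: Optimizing the threshold.} Plugging into Step 1:
\[
\frac{1}{T}\sum_{t=1}^{T}\frac{\|B_t\|}{W_t} \le \frac{M}{w^*} + \frac{2\alpha\gamma_0 (L\alpha)^m w^*}{T}.
\]
Choosing $w^* = \sqrt{MT/(2\alpha\gamma_0(L\alpha)^m)}$ balances the two terms and gives the bound $\le 2\sqrt{2\alpha\gamma_0 (L\alpha)^m M/T} = \gamma L^{m/2}(M/T)^{1/2}$ with $\gamma := 2\sqrt{2\alpha\gamma_0}\cdot \alpha^{(m+1)/2}$, depending only on $m$ and $\alpha$, which is exactly the claimed form with $M = \max_t\|B_t\|$. (One must check $w^*$ as chosen lies in a valid range, i.e. $w^* \ge 1$; if $w^* < 1$ then all indices are heavy by default — $W_t \ge 1$ always — and the $M/w^* \le M$ term already beats the claimed bound for $T$ small, so handle that as a trivial case.)

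\bigskip

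\noftextbf{The main obstacle} I expect is getting the exponent of $L$ correct: a naive packing argument as above produces $L^m$ inside the square root, i.e. an $L^{m/2}$ prefactor, which *does* match the statement — so in fact the exponent works out. The subtler point is making the packing/covering step fully rigorous when $\Lambda$ is an arbitrary $L$-Lipschitz smoothing function (not the tent function): one only has the one-sided estimate $\Lambda(c_s,c_t) \ge 1 - L\|c_s-c_t\|$ when the right side is nonnegative, which is all that Step 2 uses, so this should go through. The bookkeeping of constants (ensuring $\gamma$ depends only on $m,\alpha$ and not on $L$, $T$, or the $c_t$'s) is routine but needs care. I would also double-check the degenerate regimes ($w^* < 1$; $M = 0$) separately.
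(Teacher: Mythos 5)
Your proof is correct and is essentially the paper's own argument in a different parametrization: the paper partitions $C$ into cubes of diameter $1/(2L)$, uses $\Lambda(c_s,c_t)\geq 1/2$ within a cube, splits indices into "good/bad" according to whether the cube holds at least $\eta T$ points (equivalent to your threshold $w^{*}\approx\eta T/2$ on $W_t$), and optimizes $\eta$ exactly as you optimize $w^{*}$. The only blemishes are cosmetic: a slipped factor of $\sqrt{\alpha}$ in your final $\gamma$, and the covering bound $N\leq\gamma_0(L\alpha)^m$ needs a ceiling for small $L\alpha$ (an imprecision the paper's own proof shares).
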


\begin{proof}
Let $D\subset \mathbb{R}^{m}$ be an $m$-dimensional cube with sides of
length $\alpha $ that contains $C$ (such a cube exists because the diameter
of $C$ is $\alpha ).$ Let $D_{1},...,D_{M}$ be a partition of $D$ into
disjoint cubes with sides of length $1/(2L\sqrt{m});$ the diameter of each
such cube is thus $1/(2L)$, and the number of cubes is $M=\left\lceil
2\alpha L\sqrt{m}\right\rceil ^{m}.$

Put $\kappa :=(1/T)\max_{1\leq t\leq T}||B_{t}||$ and $\bar{b}_{t}\equiv 
\bar{b}_{t}^{\Lambda }:=B_{t}/W_{t}.$ When $W_{t}$ is large, the inequality $%
||B_{t}||/W_{t}\leq \kappa T/W_{t}$ provides a good bound; we will show
that, for a large proportion of indices $t,$ this is indeed the case.

Given $\eta >0$ (which will be specified later), call a cube $D_{j}$ \emph{%
good} if it contains at least $\eta T$ elements of the sequence $%
c_{1},...,c_{T}$ (i.e., $|\{t\leq T:c_{t}\in D_{j}\}|\geq \eta T),$ and $%
\emph{bad}$ otherwise; call an index $t\leq T$ \emph{good }if\emph{\ }$c_{t}$
belongs to some good cube $D_{i},$ and \emph{bad} otherwise.

If $c_{t}$ and $c_{s}$ belong to the same cube $D_{j}$ then $%
||c_{s}-c_{t}||\leq \mathrm{diam}(D_{j})=1/(2L),$ and so $1-\Lambda
(c_{s},c_{t})=\Lambda (c_{t},c_{t})-\Lambda (c_{s},c_{t})\leq
L||c_{s}-c_{t}||\leq 1/2,$ which implies that $\Lambda (c_{s},c_{t})\geq
1/2. $ Therefore for every good $t$ we have $W_{t}=\sum_{s\leq T}\Lambda
(c_{s},c_{t})\geq (1/2)\eta T,$ and thus $\left\Vert \bar{b}_{t}\right\Vert
\leq 2T\kappa /(\eta T)=2\kappa /\eta ,$ which then gives 
\[
\frac{1}{T}\sum_{good\text{ }t\leq T}||\bar{b}_{t}||\leq \frac{2\kappa }{%
\eta }. 
\]

The number of bad $t$ is less than $M\cdot \eta T$ (because each bad cube
contains less than $\eta T$ elements of $c_{1},...,c_{T},$ and there are $M$
cubes). For every $s$ we have $||b_{s}||\leq \mathrm{diam}(C)=\alpha ,$ and
so $\bar{b}_{t},$ which is a weighted average of the $b_{s},$ satisfies $||%
\bar{b}_{t}||\leq \alpha $ as well. Thus%
\[
\frac{1}{T}\sum_{bad\text{ }t\leq T}||\bar{b}_{t}||\leq \frac{1}{T}M\cdot
\eta T\cdot \alpha =\alpha \eta M. 
\]

Adding the last two displayed inequalities and choosing $\eta =\sqrt{%
(2\kappa )/(\alpha M)}$ yields 
\[
\frac{1}{T}\sum_{t\leq T}||\bar{b}_{t}||\leq 2\sqrt{2\kappa \alpha M}; 
\]%
recalling that $M=\left\lceil 2\alpha L\sqrt{m}\right\rceil ^{m}$ gives the
result, with $\gamma $ essentially equal to $2^{(m+3)/2}\alpha
^{m/2}m^{m/4}. $
\end{proof}

\bigskip

An immediate consequence is that smooth calibration is a weaker requirement
than calibration.

\begin{corollary}
\label{c:K-to-K-lambda}Calibration implies smooth calibration.
\end{corollary}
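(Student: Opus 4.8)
The plan is to read the corollary off directly from Lemma \ref{l:kappa2K}. Fix a calibrated procedure of the C-player and any realized history $(c_{1},a_{1},\dots,c_{T},a_{T})$. Set $b_{s}:=a_{s}-c_{s}$; since $a_{s}\in A\subseteq C$ and $c_{s}\in C$, we have $b_{s}\in C-C$. By Remark (c) (smoothing $a_{t}$ is the same as smoothing $\bar{a}_{t}$, and smoothing the two terms separately and subtracting is the same as smoothing the difference, which is a linear operation), for every $t$
\[
\bar{a}_{t}^{\Lambda}-c_{t}^{\Lambda}=\frac{\sum_{s=1}^{T}\Lambda(c_{s},c_{t})(a_{s}-c_{s})}{\sum_{s=1}^{T}\Lambda(c_{s},c_{t})}=\frac{B_{t}}{W_{t}},
\]
in the notation of Lemma \ref{l:kappa2K}, so that $K_{T}^{\Lambda}=(1/T)\sum_{t=1}^{T}\|B_{t}\|/W_{t}$.

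Next I would bound $(1/T)\max_{t}\|B_{t}\|$ by the ordinary calibration score $K_{T}$. Grouping the sum defining $B_{t}$ according to the finitely many distinct forecast values $c$ with $n(c)>0$ gives $B_{t}=\sum_{c}\Lambda(c,c_{t})\,n(c)(\bar{a}(c)-c)$, whence, using $0\le\Lambda\le1$,
\[
\|B_{t}\|\le\sum_{c}\Lambda(c,c_{t})\,n(c)\,\|\bar{a}(c)-c\|\le\sum_{c}n(c)\,\|\bar{a}(c)-c\|=T\,K_{T}.
\]
Hence $(1/T)\max_{t}\|B_{t}\|\le K_{T}$, and Lemma \ref{l:kappa2K} delivers the pathwise estimate $K_{T}^{\Lambda}\le\gamma_{C}\,L^{m/2}\,K_{T}^{1/2}$, where $\gamma_{C}$ depends only on the dimension $m$ and the diameter of $C$; crucially this holds simultaneously for every smoothing function $\Lambda$ that is $L$-Lipschitz in the first coordinate.

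Finally, given $\varepsilon>0$ and $L<\infty$, I would invoke the definition of calibration with target $\varepsilon':=(\varepsilon/(\gamma_{C}L^{m/2}))^{2}$: there is $T_{0}=T_{0}(\varepsilon,L)$ such that, almost surely and against every strategy of the A-player, $K_{T}\le\varepsilon'$ for all $T>T_{0}$; the displayed inequality then yields $K_{T}^{\Lambda}\le\varepsilon$ for all such $T$ and all $L$-Lipschitz $\Lambda$, i.e., the procedure is $(\varepsilon,L)$-smoothly calibrated. Since the key inequality $K_{T}^{\Lambda}\le\gamma_{C}L^{m/2}K_{T}^{1/2}$ is deterministic (it holds on every realization), the ``almost surely'' qualifier transfers without any modification, and if the original procedure happened to be deterministic the smooth-calibration bound holds everywhere. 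There is essentially no obstacle here — the corollary is immediate once Lemma \ref{l:kappa2K} is available; the only points meriting a line of care are the regrouping of $B_{t}$ by forecast value (which turns $\|B_{t}\|\le TK_{T}$ into a one-line estimate) and the observation that the bound from Lemma \ref{l:kappa2K} is uniform over all admissible $\Lambda$, which is what makes it match the quantifier structure in the definition of $(\varepsilon,L)$-smooth calibration.
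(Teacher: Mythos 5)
Your proposal is correct and follows essentially the same route as the paper: write $\bar{a}_{t}^{\Lambda}-c_{t}^{\Lambda}=B_{t}/W_{t}$ with $b_{s}=a_{s}-c_{s}$, bound $(1/T)\max_{t}\Vert B_{t}\Vert$ by $K_{T}$, and apply Lemma \ref{l:kappa2K} to get $K_{T}^{\Lambda}\leq\gamma L^{m/2}K_{T}^{1/2}$, from which the $(\varepsilon,L)$-statement follows by choosing $\varepsilon'=(\varepsilon/(\gamma L^{m/2}))^{2}$. In fact your regrouping of $B_{t}$ by forecast value, giving $\Vert B_{t}\Vert\leq\sum_{c}n(c)\Vert\bar{a}(c)-c\Vert=TK_{T}$, is the more careful way to carry out this step: the paper's own chain passes through $(1/T)\sum_{s}\Vert a_{s}-c_{s}\Vert$ and identifies it with $K_{T}$, whereas that quantity is in general only an upper bound for $K_{T}$ (it need not be small for a calibrated randomized forecaster), so your version is the one that actually delivers $(1/T)\max_{t}\Vert B_{t}\Vert\leq K_{T}$.
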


\begin{proof}
We will show that\footnote{%
The notations $f(x)=\mathrm{O}(g(x))$, $f(x)=\Omega (g(x)),$ and $%
f(x)=\Theta (g(x))$ mean, as usual, that there are constants $C<\infty $ and 
$c>0$ such that for all $x$ we have, respectively, $f(x)\leq Cg(x),$ $%
f(x)\geq cg(x),$ and $cg(x)\leq f(x)\leq Cg(x).$ In our case $x$ stands for $%
(\varepsilon ,L)$; the dimension $m$ is assumed fixed.} $K_{T}^{\Lambda }=%
\mathrm{O}\left( \sqrt{K_{T}}\right) $ for each fixed $L$ (where $%
K_{T}^{\Lambda }$ is the $\Lambda $-smoothly calibrated score for any $L$%
-Lipschitz $\Lambda ,$ and $K_{T}$ is the regular calibration score).
Indeed, for every $c_{t}\in C$ we have (use $0\leq \Lambda (\cdot ,\cdot
)\leq 1)$ 
\begin{eqnarray*}
\frac{1}{T}\left\Vert \sum_{s=1}^{T}\Lambda
(c_{s},c_{t})(a_{s}-c_{s})\right\Vert &\leq &\frac{1}{T}\sum_{s=1}^{T}%
\Lambda (c_{s},c_{t})||(a_{s}-c_{s})|| \\
&\leq &\frac{1}{T}\sum_{s=1}^{T}||a_{s}-c_{s}||=K_{T}.
\end{eqnarray*}%
Now apply Lemma \ref{l:kappa2K} with $b_{s}=a_{s}-c_{s}$ for all $s.$
\end{proof}

\bigskip

Returning to our proof, we can finally obtain smoothly calibrated procedures
with the desired properties, proving Theorem\ref{th:smooth-c}.

\begin{proposition}
\label{p:weak->smooth}An $(\varepsilon ,L)$-weakly calibrated procedure is $%
(\varepsilon ^{\prime },L)$-smoothly calibrated for $\varepsilon ^{\prime
}=\gamma L^{m/2}\varepsilon ^{1/2}$ (with the constant $\gamma \equiv \gamma
_{C}$ given by Lemma \ref{l:kappa2K}).
\end{proposition}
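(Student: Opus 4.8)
The plan is to observe that the $\Lambda$-smoothed calibration score $K_T^\Lambda$ is, term by term, \emph{exactly} the quantity $\frac{1}{T}\sum_{t=1}^T\|B_t\|/W_t$ treated in Lemma \ref{l:kappa2K}, and then to control the $\max_{t}\|B_t\|$ that appears on the right-hand side of that lemma by applying the weak-calibration guarantee separately to each ``slice'' $\Lambda(\cdot,c_t)$, which is itself an admissible $L$-Lipschitz weight function.

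First I would fix a smoothing function $\Lambda:C\times C\to[0,1]$ that is $L$-Lipschitz in its first coordinate and write out the summand of $K_T^\Lambda$. Since $\bar a_t^\Lambda$ and $c_t^\Lambda$ are both $\Lambda(\cdot,c_t)$-weighted averages with the \emph{same} denominator $W_t:=\sum_{s=1}^T\Lambda(c_s,c_t)$, their difference is
\[
\bar a_t^\Lambda-c_t^\Lambda=\frac{1}{W_t}\sum_{s=1}^T\Lambda(c_s,c_t)\,(a_s-c_s).
\]
Setting $b_s:=a_s-c_s$, which lies in $C-C$ because $a_s\in A\subseteq C$, and $B_t:=\sum_{s=1}^T\Lambda(c_s,c_t)\,b_s$, we get $K_T^\Lambda=\frac{1}{T}\sum_{t=1}^T\|B_t\|/W_t$ precisely; note also that $W_t\ge\Lambda(c_t,c_t)=1>0$, so there is no division-by-zero issue. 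Lemma \ref{l:kappa2K} then gives
\[
K_T^\Lambda\le\gamma L^{m/2}\Bigl(\frac{1}{T}\max_{1\le t\le T}\|B_t\|\Bigr)^{1/2},
\]
with $\gamma=\gamma_C$ depending only on $m$ and the diameter of $C$, hence independent of $\Lambda$.

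It remains to bound $\frac{1}{T}\|B_t\|$ for every $t\le T$, and this is exactly where weak calibration enters. For each fixed $t$ the function $w_t:=\Lambda(\cdot,c_t):C\to[0,1]$ is $L$-Lipschitz, so it is an admissible weight function, and $\frac{1}{T}\|B_t\|=\bigl\|\frac{1}{T}\sum_{s=1}^T w_t(c_s)(a_s-c_s)\bigr\|=S_T^{w_t}$. The $(\varepsilon,L)$-weak-calibration hypothesis furnishes a threshold $T_0\equiv T_0(\varepsilon,L)$ such that $S_T^w\le\varepsilon$ for \emph{every} $L$-Lipschitz $w$, every $T>T_0$, and every strategy of the A-player; applying this to each $w_t$ with $t\le T$ yields $\frac{1}{T}\max_{t}\|B_t\|\le\varepsilon$. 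Substituting into the displayed inequality gives $K_T^\Lambda\le\gamma L^{m/2}\varepsilon^{1/2}=\varepsilon'$ for all $T>T_0$, uniformly over all admissible $\Lambda$, which is precisely $(\varepsilon',L)$-smooth calibration (with the same threshold $T_0$).

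I do not expect a genuine obstacle here: Lemma \ref{l:kappa2K} does the real work of converting ``$\max_t\|B_t\|/T$ small'' into ``$\frac{1}{T}\sum_t\|B_t\|/W_t$ small.'' The only points needing a little care are bookkeeping ones — checking $b_s\in C-C$ so the lemma applies, checking $W_t\ge 1$, observing that each slice $\Lambda(\cdot,c_t)$ is a legitimate $L$-Lipschitz weight function so the weak-calibration bound applies to it, and noting that since $\gamma$ and the final bound are independent of $\Lambda$, the single threshold $T_0$ works for all smoothing functions at once.
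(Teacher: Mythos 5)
Your proposal is correct and is essentially the paper's own proof: it applies Lemma \ref{l:kappa2K} with $b_{s}=a_{s}-c_{s}$ and then bounds $\frac{1}{T}\max_{t}\Vert B_{t}\Vert$ by noting that each slice $\Lambda(\cdot,c_{t})$ lies in $W_{L}$, so the weak-calibration guarantee gives $S_{T}^{\Lambda(\cdot,c_{t})}\leq\varepsilon$ for $T>T_{0}$. The extra bookkeeping you supply (e.g.\ $b_{s}\in C-C$, $W_{t}\geq 1$, uniformity of $T_{0}$ over $\Lambda$) is accurate and merely spells out what the paper leaves implicit.
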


\begin{proof}
For any $L$-Lipschitz smoothing function $\Lambda ,$ Lemma \ref{l:kappa2K}
with $b_{s}=a_{s}-c_{s}$ for all $s$ yields%
\[
K_{T}^{\Lambda }\leq \gamma L^{m/2}\left( \sup_{w\in W_{L}}S_{T}^{w}\right)
^{1/2} 
\]%
(because $\Lambda (\cdot ,c_{t})\in W_{L}$ for all $c_{t}$; recall the
definition (\ref{eq:(L,eps)-wc}) of $S_{T}^{w}$ in Section \ref{s:weak
calibration}). Therefore $\sup_{w\in W_{L}}S_{T}^{w}\leq \varepsilon $
implies $K_{T}^{\Lambda }\leq \gamma L^{m/2}\varepsilon ^{1/2}=\varepsilon
^{\prime }$
\end{proof}

\bigskip

\begin{proof}[Proof of Theorem \protect\ref{th:smooth-c}]
Apply Theorem \ref{th:weak-calib} and Proposition \ref{p:weak->smooth}, and
recall (Section \ref{sus:leaky}) that for deterministic procedures leaks do
not matter.
\end{proof}

\bigskip

As an aside, we now show how to go from smooth to weak calibration.

\begin{proposition}
\label{p:smooth->weak}An $(\varepsilon ,L)$-smoothly calibrated procedure is 
$(\varepsilon ^{\prime },L^{\prime })$-weakly calibrated, where\footnote{%
We have not tried to optimize the estimates for $\varepsilon ^{\prime }$ and 
$L^{\prime }.$} $\varepsilon ^{\prime }=\Omega \left( \varepsilon
^{1/2}L^{m/2}\right) $ and $L^{\prime }=\mathrm{O}\left( \varepsilon
^{1/2}L^{(m+2)/2}\right) .$
\end{proposition}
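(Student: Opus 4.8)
The plan is to run the argument of Lemma \ref{l:kappa2K} in reverse. That lemma passes from control of the \emph{unnormalized} quantities $\max_t\|B_t\|/T$ (which is exactly $\sup_w S_T^w$ when $B_t=\sum_s\Lambda(c_s,c_t)(a_s-c_s)$) to control of the \emph{normalized} smooth score $\frac1T\sum_t\|B_t\|/W_t$; here we are given the normalized bound and must recover the unnormalized one. First I would fix an $L'$-Lipschitz weight function $w:C\to[0,1]$ and work with the tent smoothing function $\Lambda$ of width $\delta:=1/L$, so that $\Lambda(\cdot,c)$ is $L$-Lipschitz; smooth calibration then gives $\frac1T\sum_t\|B_t\|/W_t\le\varepsilon$, with $B_t=\sum_s\Lambda(c_s,c_t)(a_s-c_s)$ and $W_t=\sum_s\Lambda(c_s,c_t)\ge 1$. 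The goal is to deduce $S_T^w=\frac1T\|\sum_s w(c_s)(a_s-c_s)\|\le\varepsilon'$ for the stated $\varepsilon'$, and then invoke Theorem \ref{th:weak-calib}'s converse role: since $w$ is arbitrary, this gives $(\varepsilon',L')$-weak calibration.

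The bridge between the two scores would be, exactly as in Lemma \ref{l:kappa2K}, a partition of $C$ into cubes $D_1,\dots,D_M$ of diameter of order $\delta=1/L$, with $M=\mathrm{O}((\alpha L)^m)$ ($\alpha$ the diameter of $C$); I would call $D_j$ \emph{dense} if it contains at least $\eta T$ of the forecasts $c_1,\dots,c_T$ and \emph{sparse} otherwise, for a threshold $\eta$ to be chosen at the end. The sparse cubes are harmless: they carry fewer than $M\eta T$ forecasts in total, so their contribution to $\|\sum_s w(c_s)(a_s-c_s)\|$ is at most $\alpha M\eta T$. On a dense cube $D_j$ any forecast $c_t\in D_j$ already has $W_t\ge(1/2)n_j\ge(1/2)\eta T$ (because $\Lambda(c_s,c_t)\ge 1/2$ whenever $c_s,c_t$ lie in a common cube), so the normalization in the smooth score is bounded below there, and the normalized and unnormalized calibration masses of $D_j$ differ only by the factor $W_t/T\in[(1/2)\eta,1]$; this lets one transfer the smooth-calibration bound, cube by cube, into a bound of order $\varepsilon/\eta$ on $\sum_{j\text{ dense}}\frac{n_j}{T}\|\bar a(D_j)-p_j\|$, where $p_j$ is the centre of $D_j$. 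Finally, the hypothesis $L'=\mathrm{O}(\varepsilon^{1/2}L^{(m+2)/2})$ is precisely what guarantees that $w$ varies by a negligible amount across each cube of diameter $\approx 1/L$, so that $\sum_s w(c_s)(a_s-c_s)$ agrees, up to an error of order $(L'/L)\alpha T$, with $\sum_j w(p_j)\sum_{s:c_s\in D_j}(a_s-c_s)$: a test function this flat cannot resolve features finer than the smoothing window, which is why this is the right regime of $L'$.

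Collecting the three contributions gives a bound of the shape $S_T^w\le\mathrm{O}(\varepsilon/\eta+(L'/L)\alpha+\alpha M\eta)$; choosing $\eta$ of order $(\varepsilon/(\alpha M))^{1/2}$, exactly as in the proof of Lemma \ref{l:kappa2K}, balances the first and last terms and yields $S_T^w=\mathrm{O}((\varepsilon\alpha M)^{1/2})=\mathrm{O}(\varepsilon^{1/2}L^{m/2})$, while the middle term is absorbed by the constraint on $L'$. This is the stated bound: $(\varepsilon',L')$-weak calibration with $\varepsilon'=\Omega(\varepsilon^{1/2}L^{m/2})$ and $L'=\mathrm{O}(\varepsilon^{1/2}L^{(m+2)/2})$. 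The hard part — and the only place the smooth/weak distinction really bites — is the normalization $W_t$: controlling it is what forces the dense/sparse dichotomy, and with it the square-root loss, just as in Lemma \ref{l:kappa2K} in the opposite direction. A secondary technical nuisance I expect to have to handle carefully is that the $\delta$-window of the smoothing always overflows the individual cubes, so the cube-by-cube transfer must absorb bounded cross-cube terms; this is what makes a dense threshold, rather than an exact localization, unavoidable.
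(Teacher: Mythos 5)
There is a genuine gap, and it sits exactly where you flag a ``secondary technical nuisance'': the cube-by-cube transfer. Smooth calibration gives you control of $\frac1T\sum_t\Vert B_t\Vert /W_t$ where $B_t=\sum_s\Lambda (c_s,c_t)(a_s-c_s)$ has \emph{smoothly varying} weights anchored at the forecasts $c_t$; it does \emph{not} control the indicator-localized quantities you want as an intermediate step, namely $\sum_{j\ \mathrm{dense}}\frac{n_j}{T}\Vert \bar a(D_j)-p_j\Vert $. Taking the norm separately inside each cube destroys precisely the cross-boundary cancellation that smooth calibration is allowed to exploit. Concretely (in the one-dimensional picture of the Introduction): let half the forecasts be $0.5-\mu $ with action $1$ and half be $0.5+\mu $ with action $0$, with $\mu $ tiny and the two points falling in adjacent cubes $D_j,D_{j+1}$. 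Every $L$-Lipschitz smoothing function gives $K_T^{\Lambda }=\mathrm{O}(\mu )$, so the hypothesis of the proposition holds with $\varepsilon $ arbitrarily small, yet both cubes are dense and each contributes about $1/4$ to $\sum_j\frac{n_j}{T}\Vert \bar a(D_j)-p_j\Vert $. So your claimed bound of order $\varepsilon /\eta $ on that sum is false; the final answer is only saved because the flat test function $w$ makes the two huge signed cube sums re-cancel, which means any correct argument must keep the signed, $w$-weighted sum intact rather than pass through per-cube norms. Relatedly, the within-cube weights $\Lambda (c_s,c_t)\in \lbrack 1/2,1]$ are not constant, so even ignoring overflow, $B_t$ is not proportional to the cube's indicator sum; ``$W_t/T$'' only fixes the denominator, not the numerator.

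The paper's proof supplies the missing idea: instead of a fixed tent kernel, it manufactures the smoothing function \emph{from the test function itself}. After discarding indices whose cube holds fewer than $\varepsilon _2T$ forecasts, either $w<\varepsilon _1$ on all remaining forecasts (then $S_T^w$ is trivially small), or one picks a maximizer $c_s$ with $w(c_s)\geq \varepsilon _1$, takes the set $R$ of indices in its cube (so $w(c_r)\geq \varepsilon _1/2$ there, since $L^{\prime }\cdot (1/L)=\varepsilon _1/2$), and sets $\Lambda (c,c_r):=\min \{w(c),w(c_r)\}/w(c_r)$. This is a legitimate smoothing function ($\Lambda (c_r,c_r)=1$, values in $[0,1]$, Lipschitz constant $\leq L^{\prime }/(\varepsilon _1/2)=L$), and because $\sum_sw^r(c_s)\leq T$, the smooth score at $c_r$ dominates $\frac1T\Vert \sum_sw^r(c_s)(a_s-c_s)\Vert $, which differs from $S_T^w$ only by the small truncation error $w-w^r\leq \varepsilon _1/2$ on the dense part and by the few sparse-cube indices. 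Averaging over the $\geq \varepsilon _2T$ indices $r\in R$ and using $K_T^{\Lambda }\leq \varepsilon $ yields the $\varepsilon /\varepsilon _2=\sqrt{\varepsilon L^m}$ term; no balancing over a free threshold $\eta $ is needed, and no indicator sums ever appear. Your dense/sparse dichotomy and the $\sqrt{\varepsilon }$-type loss are in the right spirit, but without the $\min $-truncation construction (and the accompanying two-case split on whether $w$ is uniformly small) the argument as proposed does not go through.
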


\begin{proof}
Let $(D_{j})_{j=1,...,M}$ be a partition of $[0,1]^{m}\supseteq C$ into
disjoint cubes with sides $\delta :=1/(L\sqrt{m});$ the diameter of each
cube is thus $\delta \sqrt{m}=1/L$, and the number of cubes is $M=\delta
^{-m}=L^{m}m^{m/2}.$ Let $\varepsilon _{1}:=\sqrt{\varepsilon L^{m}}$ and $%
\varepsilon _{2}:=\varepsilon _{1}m^{m/2}/M=\sqrt{\varepsilon /L^{m}}.$ Take 
$L^{\prime }:=L\varepsilon _{1}/2=\sqrt{\varepsilon L^{m+2}}/2$ and $%
\varepsilon ^{\prime }=\varepsilon _{1}(1+\sqrt{m}+\sqrt{m^{m+1}})=\sqrt{%
\varepsilon L^{m}}(1+\sqrt{m}+\sqrt{m^{m+1}}).$

Fix $a_{t},c_{t}\in C\subseteq \lbrack 0,1]^{m}$ for $t=1,...,T,$ and a
weight function $w$ in $W_{L^{\prime }}.$ Assume that $K_{T}^{\Lambda }\leq
\varepsilon $ holds for every smoothing function $\Lambda $ that is $L$%
-Lipschitz in the first coordinate; we will show that $S_{T}^{w}\leq
\varepsilon ^{\prime }$ (where $K_{T}^{\Lambda }$ and $S_{T}^{w}$ are given
by (\ref{eq:K-def}) and (\ref{eq:(L,eps)-wc}), respectively).

Let $V\subseteq \{1,...,T\}$ be the set of indices $t$ such that the cube $%
D_{j}$ that contains $c_{t}$ includes at least a fraction $\varepsilon _{2}$
of $c_{1},...,c_{T},$ i.e., $|\{s\leq T:c_{s}\in D_{j}\}|\geq \varepsilon
_{2}T.$ Then 
\begin{equation}
T-|V|=|\{t\leq T:t\notin V\}|<M\cdot \varepsilon _{2}T=\varepsilon _{2}MT,
\label{eq:not-V}
\end{equation}%
because there are at most $M$ cubes containing less than $\varepsilon _{2}T$
points each.

We distinguish two cases.

\emph{Case 1}: $\max_{t\in V}w(c_{t})<\varepsilon _{1}.$ Since $%
||a_{t}-c_{t}||\leq \sqrt{m},$ we have $||\sum_{t\in
V}w(c_{t})(a_{t}-c_{t})||\leq |V|\cdot \varepsilon _{1}\cdot \sqrt{m}\leq
\varepsilon _{1}\sqrt{m}T$ (use $|V|\leq T),$ and $||\sum_{t\notin
V}w(c_{t})(a_{t}-c_{t})||\leq (T-|V|)\cdot 1\cdot \sqrt{m}=\varepsilon _{2}M%
\sqrt{m}T$ (use (\ref{eq:not-V})). Adding and dividing by $T$ yields 
\[
S_{T}^{w}\leq (\varepsilon _{1}+\varepsilon _{2}M)\sqrt{m}=\varepsilon _{1}%
\sqrt{m}(1+m^{m/2})<K\varepsilon _{1}=\varepsilon ^{\prime }. 
\]

\emph{Case 2}: $\max_{t\in V}w(c_{t})\geq \varepsilon _{1}.$ Let $s\in V$ be
such that $w(c_{s})=\max_{t\in V}w(c_{t})\geq \varepsilon _{1},$ and let $%
R\subseteq V$ be the set of indices $r$ such that $c_{r}$ lies in the same
cube $D_{j}$ as $c_{s}.$

For each $r$ in $R,$ proceed as follows. First, we have $|w(c_{s})-w(c_{r})|%
\leq L^{\prime }||c_{s}-c_{r}||\leq L^{\prime }\cdot \delta \sqrt{m}%
=L\varepsilon _{1}/2\cdot (1/L)=\varepsilon _{1}/2,$ and so 
\begin{equation}
w(c_{r})\geq w(c_{s})-\frac{\varepsilon _{1}}{2}\geq \varepsilon _{1}-\frac{%
\varepsilon _{1}}{2}=\frac{\varepsilon _{1}}{2}.  \label{eq:w(cr)}
\end{equation}%
Next, put $w^{r}(c):=\min \{w(c),w(c_{r})\}$ and $\Lambda
(c,c_{r}):=w^{r}(c)/w(c_{r})$ for $r\in R$ (and, for $t\notin R,$ put, say, $%
\Lambda (c,c_{t})=1$ for all $c)$; then $\mathcal{L}(\Lambda (\cdot
,c_{r}))\leq \mathcal{L}(w)/w(c_{r})\leq L^{\prime }/(\varepsilon _{1}/2)=L,$
and so, by our assumption 
\begin{equation}
\frac{1}{T}\sum_{r\in R}||\bar{a}_{r}^{\Lambda }-c_{r}^{\Lambda }||\leq 
\frac{1}{T}\sum_{t\leq T}||\bar{a}_{t}^{\Lambda }-c_{t}^{\Lambda }||\leq
K_{T}^{\Lambda }\leq \varepsilon .  \label{eq:TK}
\end{equation}

We will now show that $S_{T}^{w}$ is close to an appropriate multiple of $||%
\bar{a}_{r}^{\Lambda }-c_{r}^{\Lambda }||,$ for each $r$ in $R.$ For $t$ in $%
V$ we have $w(c_{t})\leq w(c_{s}),$ and so $0\leq w(c_{t})-w^{r}(c_{t})\leq
w(c_{s})-w(c_{r})\leq \varepsilon _{1}/2$ (recall (\ref{eq:w(cr)})), which
gives%
\[
\left\Vert \sum_{t\in V}\left( w(c_{t})-w^{r}(c_{t})\right)
(a_{t}-c_{t})\right\Vert \leq |V|\cdot \frac{\varepsilon _{1}}{2}\cdot \sqrt{%
m}\leq \frac{1}{2}\varepsilon _{1}\sqrt{m}T. 
\]%
For $t\notin V$ we have $0\leq w(c_{t})-w^{r}(c_{t})\leq 1,$ and so (recall (%
\ref{eq:not-V})) 
\[
\left\Vert \sum_{t\notin V}\left( w(c_{t})-w^{r}(c_{t})\right)
(a_{t}-c_{t})\right\Vert \leq (T-|V|)\cdot 1\cdot \sqrt{m}\leq \varepsilon
_{2}M\sqrt{m}T. 
\]%
Adding the two inequalities and dividing by $T$ yields%
\begin{eqnarray*}
S_{T}^{w} &=&\left\Vert \frac{1}{T}\sum_{t=1}^{T}w(c_{t})(a_{t}-c_{t})\right%
\Vert \leq \left\Vert \frac{1}{T}\sum_{t=1}^{T}w^{r}(c_{t})(a_{t}-c_{t})%
\right\Vert +\sqrt{m}\left( \frac{\varepsilon _{1}}{2}+\varepsilon
_{2}M\right) \\
&\leq &||\bar{a}_{r}^{\Lambda }-c_{r}^{\Lambda }||+\sqrt{m}\left( \frac{%
\varepsilon _{1}}{2}+\varepsilon _{2}M\right) ,
\end{eqnarray*}%
because $\sum_{t\leq T}w^{r}(c_{t})(a_{t}-c_{t})=\left( \sum_{t\leq
T}w^{r}(c_{t})\right) (\bar{a}_{r}^{\Lambda }-c_{r}^{\Lambda })$ and $%
\sum_{t\leq T}w^{r}(c_{t})\leq T.$ The set $R$ contains at least $%
\varepsilon _{2}T$ points (these are all the points in the same cube as $%
c_{s}),$ i.e., $|R|\geq \varepsilon _{2}T;$ averaging over all $r$ in the
set $R$ and then recalling (\ref{eq:TK}) finally gives 
\begin{eqnarray*}
S_{T}^{w} &\leq &\frac{1}{|R|}\sum_{r\in R}||\bar{a}_{r}^{\Lambda
}-c_{r}^{\Lambda }||+\sqrt{m}\left( \frac{\varepsilon _{1}}{2}+\varepsilon
_{2}M\right) \\
&\leq &\frac{1}{\varepsilon _{2}}\frac{1}{T}\sum_{r\in R}||\bar{a}%
_{r}^{\Lambda }-c_{r}^{\Lambda }||+\sqrt{m}\left( \frac{\varepsilon _{1}}{2}%
+\varepsilon _{2}M\right) \\
&\leq &\frac{1}{\varepsilon _{2}}\varepsilon +\sqrt{m}\left( \frac{%
\varepsilon _{1}}{2}+\varepsilon _{2}M\right) =\frac{\varepsilon \sqrt{L^{m}}%
}{\sqrt{\varepsilon }}+\varepsilon _{1}\left( \frac{\sqrt{m}}{2}%
+m^{(m+1)/2}\right) \\
&=&\varepsilon _{1}\left( 1+\frac{\sqrt{m}}{2}+m^{(m+1)/2}\right)
<K\varepsilon _{1}=\varepsilon ^{\prime },
\end{eqnarray*}%
completing the proof.
\end{proof}

\section{Nash Equilibrium Dynamics\label{s:Nash}}

In this section we use our results on smooth calibration to obtain dynamics
in $n$-person games that are in the long run close to Nash equilibria most
of the time.

A (finite) \emph{game} is given by a finite set of players $N$, and, for
each player $i\in N,$ a finite set of actions\footnote{%
We refer to one-shot choices as \textquotedblleft actions" rather than
\textquotedblleft strategies," the latter term being reserved for repeated
interactions.} $A^{i}$ and a payoff function $u^{i}:A\rightarrow \mathbb{R},$
where $A:=\prod_{i\in N}A^{i}$ denotes the set of action combinations of all
players. Let $n:=|N|$ be the number of players, $m^{i}:=|A^{i}|$ the number
of pure actions of player $i,$ and $m:=\sum_{i\in N}m^{i}$; also, let $U$ be
a bound on payoffs, i.e., $|u^{i}(a)|\leq U$ for all $a\in A$ and $i\in N.$
The set of mixed actions of player $i$ is $X^{i}:=\Delta (A^{i}),$ the unit
simplex (i.e., the set of probability distributions) on $A^{i}$; we identify
the pure actions in $A^{i}$ with the unit vectors of $X^{i},$ and so $%
A^{i}\subseteq X^{i}.$ Put $C\equiv X:=\prod_{i\in N}X^{i}$ for the set of
mixed-action combinations$.$ The payoff functions $u^{i}$ are multilinearly
extended to $X,$ and thus $u^{i}:X\rightarrow \mathbb{R}.$

For each player $i$, a combination of mixed actions of the other players $%
x^{-i}=(x^{j})_{j\neq i}\in \prod_{j\neq i}X^{j}=:X^{-i},$ and $\varepsilon
\geq 0,$ let $\mathrm{BR}_{\varepsilon }^{i}(x^{-i}):=\{x^{i}\in
X^{i}:u^{i}(x^{i},x^{-i})\geq \max_{y^{i}\in
X^{i}}u^{i}(y^{i},x^{-i})-\varepsilon \}$ denote the set of $\varepsilon $%
\emph{-best replies} of $i$ to $x^{-i}.$ A (mixed) action combination $x\in
X $ is a \emph{Nash }$\varepsilon $\emph{-equilibrium} if $x^{i}\in \mathrm{%
BR}_{\varepsilon }^{i}(x^{-i})$ for every $i\in N;$ let \textrm{NE}$%
(\varepsilon )\subseteq X$ denote the set of Nash $\varepsilon $-equilibria
of the game.

A (discrete-time) \emph{dynamic }consists of each player $i\in N$ playing a
pure action $a_{t}^{i}\in A^{i}$ at each time period $t=1,2,...;$ put $%
a_{t}=(a_{t}^{i})_{i\in N}\in A.$ There is perfect monitoring: at the end of
period $t$ all players observe $a_{t}$. The dynamic is \emph{uncoupled }%
(Hart and Mas-Colell 2003, 2006, 2013) if the play of every player $i$ may
depend only on player $i$'s payoff function $u^{i}$ (and not on the other
players' payoff functions). Formally, such a dynamic is given by a mapping
for each player $i$ from the history $h_{t-1}=(a_{1},...,a_{t-1})$ and his
own payoff function $u^{i}$ into $X^{i}=\Delta (A^{i})$ (player $i$'s choice
may be random); we will call such mappings \emph{uncoupled.} Let $%
x_{t}^{i}\in X^{i}$ denote the mixed action that player $i$ plays at time $%
t, $ and put $x_{t}=(x_{t}^{i})_{i\in N}\in X.$

The dynamics we consider are smooth variants of the \textquotedblleft
calibrated learning" introduced by Foster and Vohra (1997). \emph{Calibrated
learning} consists of each player best-replying to calibrated forecasts on
the other players' actions; it results in the joint distribution of play
converging in the long run to the set of correlated equilibria of the game.
Kakade and Foster (2004) defined \emph{publicly calibrated learning}, where
each player approximately best-replies to a public weakly calibrated
forecast on the joint actions of all players, and proved that most of the
time the play is an approximate Nash equilibrium. We consider instead \emph{%
smooth calibrated learning}, where weak calibration is replaced with the
more natural smooth calibration; it amounts to taking calibrated learning
and smoothing out both the forecasts and the best replies. Moreover, our
forecasts are $n$-tuples of mixed strategies (in $\prod_{i}\Delta (A^{i})$),
rather than correlated mixtures (in $\Delta (\prod_{i}A^{i})$).

Formally, a \emph{smooth calibrated learning} dynamic is given by:

\begin{enumerate}
\item[(D1)] An $(\varepsilon _{c},L_{c})$-smoothly calibrated deterministic
procedure, which yields at time $t$ a forecast $c_{t}\in X$ on the
distribution of actions of each player$.$

\item[(D2)] For each player $i\in N$ an $L_{g}$-Lipschitz $\varepsilon _{g}$%
-approximate best-reply mapping $g^{i}:X\rightarrow X^{i};$ i.e., $%
g^{i}(x)\in \mathrm{BR}_{\varepsilon _{g}}^{i}(x^{-i})$ for every $x^{-i}\in
X^{-i}.$

\item[(D3)] Each player runs the procedure in (D1), generating at time $t$ a
forecast $c_{t}\in X;$ then each player $i$ plays at period $t$ the mixed
action\footnote{%
Thus $\mathbb{P}\left[ a_{t}=a~|~h_{t-1}\right] =\prod_{i\in
N}x_{t}^{i}(a^{i})$ for every $a=(a^{i})_{i\in N}\in A,$ where $h_{t-1}$ is
the history and $x_{t}^{i}(a^{i})$ is the probability that $x_{t}^{i}\in
\Delta (A^{i})$ assigns to the pure action $a^{i}\in A^{i}.$} $%
x_{t}^{i}:=g^{i}(c_{t})\in X^{i},$ where $g^{i}$ is given by (D2). All
players observe the action combination $a_{t}=(a_{t}^{i})_{i\in N}\in A$
that has actually been played, and remember it.
\end{enumerate}

The existence of a deterministic smoothly calibrated procedure in (D1) is
given by Theorem \ref{th:smooth-c}. For each player $i,$ the payoff function 
$u^{i}$ is linear in $x^{i},$ and $|u^{i}(x^{i},x^{-i})-u^{i}(y^{i},x^{-i})|%
\leq \sqrt{m^{i}}U||x^{i}-y^{i}||,$ and so $\mathcal{L}(u^{i})\leq \sqrt{%
m^{i}}U\leq \sqrt{m}U;$ the existence of Lipschitz approximate best-reply
mappings in (D2) is then given by Lemma \ref{p:BR} in the Appendix (in
particular, for $\varepsilon _{g}$ and $L_{g}$ such that $L_{g}\geq \nu _{m}(%
\sqrt{m}U/\varepsilon _{g})^{m+1}).$

Since for each player $i$ the approximate best reply condition in (D2) makes
use \emph{only} of player $i$'s payoff function $u^{i},$ we can without loss
of generality choose $g^{i}$ so as to depend only on $u^{i},$ which makes
the dynamic \emph{uncoupled} (see above).

Our result is:

\begin{theorem}
\label{th:finite}Fix the finite set of players $N,$ the finite action spaces 
$A^{i}$ for all $i\in N,$ and the payoff bound $U<\infty .$ For every $%
\varepsilon >0,$ any smooth calibrated learning dynamic with appropriate
parameters\footnote{%
Such as those given in (\ref{eq:parameters}).} is an uncoupled dynamic that
satisfies 
\[
%TCIMACRO{\TeXButton{liminf}{\liminf}}%
%BeginExpansion
\liminf%
%EndExpansion
_{T\rightarrow \infty }\frac{1}{T}\left\vert \{t\leq T:x_{t}\in \mathrm{NE}%
(\varepsilon )\}\right\vert \geq 1-\varepsilon \text{\hspace{0.2in}}\mathrm{%
(a.s.)} 
\]%
for every finite game with payoff functions $(u^{i})_{i\in N}$ that are
bounded by $U$ (i.e., $|u^{i}(a)|\leq U$ for all $i\in N$ and $a\in A).$
\end{theorem}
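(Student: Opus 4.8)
The plan is to exploit the key structural feature of smooth calibrated learning: since each player $i$ plays $x_t^i = g^i(c_t)$ with $g^i$ Lipschitz, the mixed action profile $x_t = g(c_t)$ is a Lipschitz function of the forecast $c_t$, and the realized actions $a_t$ have conditional mean $x_t$. The smooth calibration guarantee says that, averaging the smoothed forecasts against the smoothed realized actions, the discrepancy $\|\bar a_t^\Lambda - c_t^\Lambda\|$ is small on average. The strategy is to choose a smoothing function $\Lambda$ that effectively clusters forecasts that are close together, conclude that for most periods $t$ the average realized action in the cluster around $c_t$ is close to $c_t$ itself, hence (by the law of large numbers / martingale concentration, and by averaging over many periods in the same cluster) close to $x_t = g(c_t)$, and then read off that $c_t$ is an approximate fixed point of $g$. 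An approximate fixed point $c$ of $g$ means $c^i \approx g^i(c) = g^i(c^{-i}) \in \mathrm{BR}^i_{\varepsilon_g}(c^{-i})$ for every $i$, which (after controlling the various approximations) gives $x_t \approx c_t \in \mathrm{NE}(\varepsilon)$.

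**The key steps, in order, are as follows.** (i) \emph{Set the parameters.} Given the target $\varepsilon$, pick $\varepsilon_g, L_g$ (best-reply smoothing), $\varepsilon_c, L_c$ (smooth calibration), a smoothing bandwidth $\delta$, and a grid fineness, all small/large enough; invoke Lemma~\ref{p:BR} for (D2) and Theorem~\ref{th:smooth-c} for (D1). (ii) \emph{Concentration.} Since $a_t$ is drawn with conditional mean $x_t = g(c_t)$ given $h_{t-1}$, the martingale differences $a_t - x_t$ are bounded, so by a Hoeffding–Azuma / Chung-type argument, almost surely $\frac1T\sum_{t=1}^T \Lambda(c_s,c_t)(a_s - x_s) \to 0$ uniformly enough; more precisely, the smoothed average realized action $\bar a_t^\Lambda$ is a.s.\ within $o(1)$ of the smoothed average of the $x_s = g(c_s)$. (iii) \emph{From smooth calibration to approximate fixed points.} Apply the $(\varepsilon_c,L_c)$-smooth calibration bound with a tent smoothing function of bandwidth $\delta$: $\frac1T\sum_t \|\bar a_t^\Lambda - c_t^\Lambda\| \le \varepsilon_c$. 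Combining with (ii) and with $\|c_t^\Lambda - c_t\| \le \delta$, we get $\frac1T\sum_t \|(\text{smoothed avg of } g(c_s)\text{ near } c_t) - c_t\| \le \varepsilon_c + \delta + o(1)$. Using the $L_g$-Lipschitz continuity of $g$, the smoothed average of $g(c_s)$ for $c_s$ within $\delta$ of $c_t$ is within $L_g\delta$ of $g(c_t)$; hence $\frac1T\sum_t \|g(c_t) - c_t\| \le \varepsilon_c + (1+L_g)\delta + o(1)$. By Markov's inequality, for all but an $\varepsilon$-fraction of periods $t$ we have $\|g(c_t) - c_t\|$ small. (iv) \emph{Approximate fixed point $\Rightarrow$ Nash.} For such a $t$, each coordinate satisfies $\|c_t^i - g^i(c_t)\|$ small; since $g^i(c_t) \in \mathrm{BR}^i_{\varepsilon_g}(c_t^{-i})$ and $u^i$ is $\sqrt m\,U$-Lipschitz, $c_t^i$ is an $(\varepsilon_g + O(\sqrt m\,U\,\|c_t^i - g^i(c_t)\|))$-best reply to $c_t^{-i}$; choosing the earlier parameters so this is $\le\varepsilon$ for every $i$ gives $c_t \in \mathrm{NE}(\varepsilon)$, and since $\|x_t - c_t\|$ is small we also get $x_t \in \mathrm{NE}(\varepsilon)$ (adjusting constants). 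Taking $\liminf$ over $T$ yields the claim. Finally, record an explicit choice of parameters (referenced in the footnote as~(\ref{eq:parameters})) and note that the construction inherits finite recall, stationarity, and the finite grid from Theorem~\ref{th:smooth-c}, while uncoupledness was already observed.

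**The main obstacle** is step~(ii): passing from the \emph{intended} mixed action $x_t = g(c_t)$ to the \emph{realized} action $a_t$ inside a smoothed average, almost surely and uniformly over the (continuum of possible) smoothing functions, forecasts, and best-reply maps. One must be careful that the forecasts $c_t$ are themselves random (driven by past realized actions), so the smoothing weights $\Lambda(c_s,c_t)$ are not predictable; the clean way is to restrict to a finite grid of forecasts (already provided by Theorem~\ref{th:smooth-c}) and a finite grid of smoothing functions, apply Azuma–Hoeffding to each of finitely many weighted martingale sums, and union-bound. A secondary subtlety is that, to replace the smoothed average of $g(c_s)$ near $c_t$ by $g(c_t)$, one genuinely needs the bandwidth $\delta$ to be small relative to $1/L_g$ — but $L_g$ itself grows (polynomially, via Lemma~\ref{p:BR}) as $\varepsilon_g \to 0$, so the parameters must be chosen in the right order: first $\varepsilon_g$ small enough for the best-reply error, which fixes $L_g$, then $\delta$ small relative to $1/L_g$, then $\varepsilon_c$ (hence $L_c$ via Proposition~\ref{p:weak->smooth} and the recall length $R$ via Theorem~\ref{th:weak-calib}) small relative to $\delta$. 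Everything else is bookkeeping of Lipschitz constants.
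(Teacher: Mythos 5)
Your architecture is essentially the paper's: smooth calibration with a tent kernel, a concentration step on a finite grid to replace the realized actions $a_s$ by the behaviors $x_s=g(c_s)$, Lipschitz continuity of $g$ to localize the smoothed average around $c_t$, Markov's inequality to get \textquotedblleft most $t$,\textquotedblright\ and the approximate-fixed-point-to-approximate-Nash step (your variant --- first $c_t$ approximately Nash, then transfer to $x_t$ using $\Vert x_t-c_t\Vert =\Vert g(c_t)-c_t\Vert \leq \varepsilon _4$ and the $\sqrt{m}U$-Lipschitz payoffs --- works just as well as the paper's direct argument that $\Vert g^i(x_t)-x_t^i\Vert \leq L_g\varepsilon _4$). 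Your remarks on measurability of the weights and on the order in which the parameters must be fixed are also correct and consistent with (\ref{eq:parameters}).

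There is, however, one genuine gap, in step (ii), at the assertion that \textquotedblleft the smoothed average realized action $\bar a_t^{\Lambda }$ is a.s.\ within $o(1)$ of the smoothed average of the $x_s$.\textquotedblright\ What Azuma (or the SLLN for dependent variables), applied on a finite grid and extended by the Lipschitz property of $\Lambda$, gives you is control of the \emph{unnormalized} sums: $\frac{1}{T}\bigl\Vert \sum_{s\leq T}\Lambda (c_s,c_t)(a_s-x_s)\bigr\Vert \leq \varepsilon _1$ for all $t$. But $\bar a_t^{\Lambda }-\bar x_t^{\Lambda }$ is this sum divided by $W_t=\sum_{s\leq T}\Lambda (c_s,c_t)$, and $W_t$ may be of order $1$ rather than of order $T$: if $c_t$ is isolated from the other forecasts, then $\bar a_t^{\Lambda }-\bar x_t^{\Lambda }\approx a_t-x_t$, which is not small at all. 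So the pointwise $o(1)$ claim is false, and even the time-averaged statement $\frac{1}{T}\sum_{t\leq T}\Vert \bar a_t^{\Lambda }-\bar x_t^{\Lambda }\Vert$ being small requires showing that periods with small $W_t$ are rare. This is precisely what Lemma \ref{l:kappa2K} supplies: partition into $M=\Theta (L_c^m)$ cubes of diameter $1/(2L_c)$, observe that if $c_t$ lies in a cube containing at least $\eta T$ of the forecasts then $W_t\geq \eta T/2$, bound the remaining periods trivially by $M\eta T$ times the diameter, and optimize over $\eta$; the result is a bound of the form $\gamma L_c^{m/2}\varepsilon _1^{1/2}$, i.e., a square-root loss with an $L_c^{m/2}$ factor. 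You must either invoke that lemma or reproduce such a counting argument (the finite forecast grid from Theorem \ref{th:smooth-c} would serve equally well: grid values used fewer than $\eta T$ times account for few periods, and otherwise $W_t\geq \eta T$ since $\Lambda (c_t,c_t)=1$). This loss then has to be propagated into the bookkeeping: $\varepsilon _1$ must be taken of order $\varepsilon _2^{2}/\bigl(L_c^{m}(1+\sqrt{m}L_c)\bigr)$, as in (\ref{eq:parameters}), rather than merely \textquotedblleft small\textquotedblright; as written, the $o(1)$ term you carry into step (iii) has not been established. Everything else in the proposal is sound.
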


\bigskip

The idea of the proof is as follows. First, assume that the forecasts $c_{t}$
are in fact calibrated (rather than just smoothly calibrated) and, moreover,
that they are calibrated with respect to the mixed plays $x_{t}$ (rather
than with respect to the actual plays $a_{t}).$ Because $x_{t}$ is given by
a fixed function of $c_{t},$ namely, $x_{t}=g(c_{t})\equiv
(g^{i}(c_{t}))_{i\in N},$ the sequence of mixed plays in those periods when
the forecast was a certain $c$ is the constant sequence $g(c),...,g(c),$
whose average is $g(c),$ and calibration then implies that $g(c)$ must be
close to $c$ (most of the time, i.e., for forecasts that appear with
positive frequency). But we have only smooth calibration; however, because $%
g $ is a continuous function, if $c$ and $g(c)$ are far from one another
then so are $c^{\prime }$ and $g(c^{\prime })$ for any $c^{\prime }$ close
to $c,$ and so the average of such $g(c^{\prime })$ is also far from $c,$
contradicting smooth calibration. Thus, most of the time $g(c_{t})$ is close
to $c_{t},$ and hence $g(g(c_{t}))$ is close to $g(c_{t})$ (because $g$ is
continuous)---which says that $g(c_{t})$ is close to an approximate best
reply to itself, i.e., $g(c_{t})$ is an approximate Nash equilibrium.
Finally, an appropriate use of a strong law of large numbers shows that if
the actual plays $a_{t}$ are (smoothly) calibrated then so are their
expectations, i.e., the mixed plays $x_{t}.$ Two crucial features of our
dynamic---which are needed to get Nash equilibria, and cannot be obtained
with standard, probabilistic, calibration---are, first, that all players
always have the same forecast, and second, that (smooth) calibration is
preserved despite the fact that the actions depend on the forecasts
(leakyness).

\bigskip

\begin{proof}
This proof goes along similar lines to the proof of Kakade and Foster (2004)
for publicly calibrated dynamics (which is the only other calibration-based
Nash dynamic to date\footnote{%
Recall footnote \ref{ftn:fixed-pt-calibration}.}).

Recall that $m^{i}:=|A^{i}|$ and $m:=\sum_{i\in N}m^{i},$ and so $X\subset
\lbrack 0,1]^{m}.$ Put $g(c):=(g^{i}(c))_{i\in N}$ for every $c\in X;$ thus $%
g:X\rightarrow X$ is a Lipschitz function with $\mathcal{L}(g)\leq nL_{g}$
(because $\mathcal{L}(g^{i})\leq L_{g}$ for each $i).$

Take $\Lambda $ to be the $L_{c}$-tent smoothing function: $\Lambda
(c^{\prime },c)=[1-L_{c}||c^{\prime }-c||]_{+}$ for all $c,c^{\prime }\in X.$

For each period $t,$ let $c_{t}\in X$ be the \emph{forecast}, $%
x_{t}=g(c_{t})\in X$ the \emph{behavior} (i.e., mixed actions), and $%
a_{t}\in A$ the realized pure \emph{actions} ($c_{t},x_{t},$ and $a_{t}$ all
depend on the history). We divide the proof into the following steps: (i)
smoothed average actions $\bar{a}_{t}^{\Lambda }$ and forecasts $\bar{c}%
_{t}^{\Lambda }$ are close (by smooth calibration); (ii) smoothed average
actions $\bar{a}_{t}^{\Lambda }$ and behaviors $\bar{x}_{t}^{\Lambda }$ are
close (by the law of large numbers); (iii) forecasts $c_{t}$ and behaviors $%
x_{t}$ are close (because smoothing had little effect there); (iv) behaviors 
$x_{t}$ are close to Nash equilibria. Finally, (v) shows how to tweak the
parameters to get the desired result.

\emph{(i) Smoothed average actions and smoothed forecasts are close.}

Let $T_{0}$ be such that the smooth calibration score $K_{T}^{\Lambda }\leq
\varepsilon _{c}$ for all $T>T_{0},$ i.e.,%
\begin{equation}
\frac{1}{T}\sum_{t=1}^{T}||\bar{a}_{t}^{\Lambda }-c_{t}^{\Lambda }||\leq
\varepsilon _{c}  \label{eq:a-c}
\end{equation}%
for all $T>T_{0}.$

\emph{(ii) Smoothed average actions and smoothed average behaviors are close.%
}

Let $D\subset X$ be a finite $\varepsilon _{1}$-grid of $X.$ For each $d\in
D $ we have $\mathbb{E}\left[ \Lambda (c_{s},d)a_{s}~|~h_{s-1}\right]
=\Lambda (c_{s},d)x_{s}$ (given $h_{s-1},$ only $a_{s}$ is random, and its
conditional expectation is $\mathbb{E}\left[ a_{s}~|~h_{s-1}\right]
=g(c_{s})=x_{s}$). By the Strong Law of Large Numbers for Dependent Random
Variables (see Lo\`{e}ve 1978, Theorem 32.1.E: $(1/T)\sum_{s=1}^{T}(X_{s}-%
\mathbb{E}\left[ X_{s}|h_{s-1}\right] )\rightarrow 0$ as $T\rightarrow
\infty $ a.s., for random variables $X_{s}$ that are, in particular,
uniformly bounded; note that there are finitely many $d\in D$) we get%
\begin{equation}
\lim_{T\rightarrow \infty }\frac{1}{T}\sum_{s=1}^{T}\Lambda
(c_{s},d)(a_{s}-x_{s})=0\text{\ \ for all }d\in D\text{\hspace{0.2in}\textrm{%
(a.s.)}}.  \label{eq:SLLN}
\end{equation}%
Thus, for each one of the (almost all) infinite histories $h_{\infty }$
where (\ref{eq:SLLN}) holds, there is a finite $T_{1}\equiv T_{1}(h_{\infty
})$ such that $(1/T)\left\Vert \sum_{t=1}^{T}\Lambda
(c_{s},d)(a_{s}-x_{s})\right\Vert \leq \varepsilon _{1}$ for all $T>T_{1}$
and all $d\in D.$ Now for every $c\in X$ there is $d\in D$ with $||d-x||\leq
\varepsilon _{1},$ and so $|\Lambda (c_{s},d)-\Lambda (c_{s},c)|\leq
L_{c}||c-d||\leq L_{c}\varepsilon _{1};$ together with\footnote{%
Because $a_{s},x_{s}\in \lbrack 0,1]^{m}$.} $||a_{s}-x_{s}||\leq \sqrt{m}$
it follows that%
\[
\frac{1}{T}\left\Vert \sum_{s=1}^{T}\Lambda
(c_{s},c)(a_{s}-x_{s})\right\Vert \leq (1+\sqrt{m}L_{c})\varepsilon _{1}%
\text{\ \ for all }T>T_{1}\text{ and all }c\in X. 
\]%
Taking in particular $c=c_{t}$ for all $t\leq T,$ and then applying Lemma %
\ref{l:kappa2K} to the set $[0,1]^{m}$ and $b_{s}=a_{s}-x_{s},$ yields%
\begin{equation}
\frac{1}{T}\sum_{t=1}^{T}||\bar{a}_{t}^{\Lambda }-\bar{x}_{t}^{\Lambda
}||\leq \gamma _{m}L_{c}^{m/2}\sqrt{(1+\sqrt{m}L_{c})\varepsilon _{1}}%
=:\varepsilon _{2},  \label{eq:a-x}
\end{equation}%
where the constant $\gamma _{m}$ depends only on $m.$

\emph{(iii) Behaviors and forecasts are close.}

Because $\Lambda (c_{s},c_{t})>0$ only when $||c_{s}-c_{t}||<1/L_{c},$ it
follows that $c_{t}^{\Lambda },$ as a weighted average of such $c_{s},$
satisfies $||c_{t}^{\Lambda }-c_{t}||<1/L_{c}.$ Moreover, $%
||x_{s}-x_{t}||=||g(c_{s})-g(c_{t})||\leq nL_{g}/L_{c},$ and so $||\bar{x}%
_{t}^{\Lambda }-x_{t}||\leq nL_{g}/L_{c},$ which together with (\ref{eq:a-c}%
) and (\ref{eq:a-x}) gives%
\begin{equation}
\frac{1}{T}\sum_{t=1}^{T}||x_{t}-c_{t}||\leq \varepsilon _{c}+\varepsilon
_{2}+\frac{1}{L_{c}}+\frac{nL_{g}}{L_{c}}=:\varepsilon _{3}
\label{eq:eps-eps1}
\end{equation}%
for almost every infinite history $h_{\infty }$ and for every $T>\max
\{T_{0},T_{1}(h_{\infty })\}.$

\emph{(iv) Behaviors are close to Nash equilibria. }From (\ref{eq:eps-eps1})
it immediately follows that, for every $\varepsilon _{4}>0,$ 
\begin{equation}
\frac{1}{T}\left\vert \left\{ t\leq T:\left\Vert g(c_{t})-c_{t}\right\Vert
>\varepsilon _{4}\right\} \right\vert \leq \frac{1}{\varepsilon _{4}}\frac{1%
}{T}\sum_{t=1}^{T}\left\Vert (g(c_{t})-c_{t})\right\Vert \leq \frac{%
\varepsilon _{3}}{\varepsilon _{4}}.  \label{eq:eps1}
\end{equation}%
If $\left\Vert g(c_{t})-c_{t}\right\Vert \leq \varepsilon _{4}$ then%
\[
\left\Vert g^{i}(x_{t})-x_{t}^{i}\right\Vert =\left\Vert
g^{i}(g(c_{t}))-g^{i}(c_{t})\right\Vert \leq L_{g}\varepsilon _{4}, 
\]%
and so%
\begin{eqnarray*}
u^{i}(x_{t}) &\geq &u^{i}(g^{i}(x_{t}),x_{t}^{-i})-\sqrt{m^{i}}U\left\Vert
g^{i}(x_{t})-x_{t}^{i}\right\Vert \\
&\geq &\max_{y^{i}\in \Delta (A^{i})}u^{i}(y^{i},x_{t}^{-i})-\varepsilon
_{g}-\sqrt{m^{i}}UL_{g}\varepsilon _{4}
\end{eqnarray*}%
(for the second inequality we have used $g^{i}(x)\in \mathrm{BR}%
_{\varepsilon _{g}}^{i}(x^{-i})$). Therefore $\left\Vert
g(c_{t})-c_{t}\right\Vert \nolinebreak \leq \nolinebreak \varepsilon _{4}$
implies that $x_{t}\in \mathrm{NE}(\varepsilon _{5}),$ where%
\begin{equation}
\varepsilon _{5}:=\varepsilon _{g}+\sqrt{m}UL_{g}\varepsilon _{4}
\label{eq:eps5}
\end{equation}%
(recall that $m=\sum_{i}m^{i}\geq m^{i}),$ and so, from (\ref{eq:eps-eps1})
and (\ref{eq:eps1}) we get%
\[
\frac{1}{T}\left\vert \left\{ t\leq T:x_{t}\notin \mathrm{NE}(\varepsilon
_{5})\right\} \right\vert \leq \frac{\varepsilon _{3}}{\varepsilon _{4}} 
\]%
for all large enough $T$, for almost every infinite history.

\emph{(v) Tweaking the parameters. }To bound both $\varepsilon
_{3}/\varepsilon _{4}$ and $\varepsilon _{5}$ by, say, $3\varepsilon ,$ one
may take, for instance (see (\ref{eq:a-x})--(\ref{eq:eps5}) and recall Lemma %
\ref{p:BR} in the Appendix),%
\begin{eqnarray}
\varepsilon _{g} &=&\varepsilon ,\;\;L_{g}=\nu _{m}\left( \frac{\sqrt{m}U}{%
\varepsilon }\right) ^{m+1},  \label{eq:parameters} \\
\varepsilon _{4} &=&\frac{2\varepsilon }{\sqrt{m}UL_{g}},  \nonumber \\
\varepsilon _{c} &=&\varepsilon \varepsilon _{4},\;\;L_{c}=\frac{1+nL_{g}}{%
\varepsilon \varepsilon _{4}},  \nonumber \\
\varepsilon _{2} &=&\varepsilon \varepsilon _{4},\;\;\varepsilon _{1}=\frac{%
\varepsilon _{2}^{2}}{\gamma _{m}^{2}L_{c}^{m}(1+\sqrt{m}L_{c}^{{}})}, 
\nonumber
\end{eqnarray}%
because we then get $\varepsilon _{5}=\varepsilon +2\varepsilon
=3\varepsilon $ and $\varepsilon _{3}=\varepsilon \varepsilon
_{4}+\varepsilon \varepsilon _{4}+\varepsilon \varepsilon _{4}=3\varepsilon
\varepsilon _{4}.$
\end{proof}

\bigskip

\noindent \textbf{Remarks. }\emph{(a) Nash dynamics. }Uncoupled dynamics
where Nash $\varepsilon $-equilibria are played $1-\varepsilon $ of the time
were first proposed by Foster and Young (2003), followed by Kakade and
Foster (2004), Foster and Young (2006), Hart and Mas-Colell (2006), Germano
and Lugosi (2007), Young (2009), Babichenko (2012), and others (see also
Remark (h)\ below).

\emph{(b) Coordination.} All players need to coordinate before playing the
game on the smoothly calibrated procedure that they will run; thus, at every
period $t$ they all generate the same forecast $t.$ By contrast, in the
original calibrated learning dynamic of Foster and Vohra (1997)---which
leads to correlated equilibria---every player may use his own calibrated
procedure.

This fits the so-called \emph{Conservation Coordination Law} for game
dynamics, which says that some form of \textquotedblleft coordination" must
be present, either in the limit static equilibrium concept (such as
correlated equilibrium) or in the dynamic leading to it (such as Nash
equilibrium dynamics). See Hart and Mas-Colell (2003, footnote 19) and Hart
(2005, footnote 19).

\emph{(c)} \emph{Deterministic calibration.} In order for all the players to
generate the same forecasts, it is not enough that they all use the same
procedure; in addition, the forecasts must be deterministic (otherwise the
randomizations, which are carried out independently by the players, may lead
to different actual forecasts). This is the reason that we use smoothly
calibrated procedures rather than fully calibrated ones (cf. Oakes 1985 and
Foster and Vohra 1998).

\emph{(d) Leaky calibration}. One may use a common \emph{randomized}
smoothly calibrated procedure, provided that the randomizations are carried
out publicly (i.e., they must be leaked!). Alternatively, a
\textquotedblleft central bureau of statistics" may be used each period to
provide the forecast to all the players.

\emph{(e) Finite memory. }In (D1) one may use a smoothly calibrated
procedure that has finite recall and is stationary (see Theorem \ref%
{th:smooth-c}). However, while in the calibration game of Section \ref%
{s:smooth calibration} both the actions $a_{t}$ and the forecasts $c_{t}$
are monitored and thus become part of the recall window, in the $n$-person
game only $a_{t}$ is monitored (while the forecast $c_{t}$ is computed by
each player separately, but is \emph{not} played). Therefore, in order to
run the calibrated procedure, in the $n$-person game each player needs to
remember at time $T$, in addition to the last $R$ action combinations $%
a_{T-R},...,a_{T-1},$ also the last $R$ forecasts $c_{T-R},...,c_{T-1}.$
\textquotedblleft Finite recall" of size $R$ in the calibration procedure
therefore becomes \textquotedblleft finite memory" of size $2R$ in the game
dynamic: the memory contains $R$ elements of $A$ and $R$ elements of%
\footnote{%
For a similar transition from finite recall to finite memory, see Theorem 7
in Hart and Mas-Colell (2006).} $C.$

Alternatively, to get finite recall rather than finite memory one may
introduce an artificial player, say, player $0,$ with action set $A^{0}:=X$
and constant payoff function $u^{0}\equiv 0,$ who plays at each period $t$
the forecast $c_{t},$ i.e., $a_{t}^{0}=c_{t};$ this way the forecasts become
part of the recall of all players.

\emph{(f) Forecasting joint play. }In (D1) one may use a procedure that
forecasts the joint play: the forecasts $c_{t}$ lie in $\Delta (A),$ rather
than in $\prod_{i}\Delta (A^{i}))$ (the dimension is then larger, $%
\prod_{i}m^{i}$ instead of $\sum_{i}m^{i}).$ The approximate best reply
functions $g^{i}$ can then be defined over $\Delta (A),$ and the proof
carries through essentially without change. Thus most of the time the play
is close to Nash equilibrium, despite the fact that the forecasts are
allowed to be correlated; in fact, the forecasts turn out to be close to
being independent (because $g(c_{t})\in X$ is independent, and $c_{t}$ is
close to $g(c_{t})).$

\emph{(g) Separate forecasts.} One cannot simplify (D1) by replacing the
forecasting procedure that yields $c_{t}=(c_{t}^{i})_{i\in N}\in X$ with $n$
separate forecasting procedures that yield $c_{t}^{i}\in X^{i}$ for each $%
i\in N,$ because then behaviors $x_{t}$ and forecasts $c_{t}$ need no longer
be close (in part (iii) of the proof, when $c_{s}^{i}$ is now close to $%
c_{t}^{i}$ for some $i,$ it does not follow that $c_{s}^{j}$ and $c_{t}^{j}$
for $j\neq i$ are also close, and so neither are $g(c_{s})$ and $g(c_{t})$).

\emph{(h) Continuous approximate best reply}. In (D2) one may take the
functions $g^{i}$ to be continuous rather than Lipschitz and carry the proof
with the modulus of continuity instead of the Lipschitz bound (for
uncoupledness one would need to require uniform equicontinuity).

\emph{(i) Exhaustive search. }Dynamics that perform exhaustive search can
also be used to get the result of Theorem\footnote{%
We thank Yakov Babichenko for suggesting this.} \ref{th:finite}. Take for
instance a finite grid on $X$, say, $D=\{d_{1},...,d_{M}\}\subset X,$ that
is fine enough so that there always is a pure Nash $\varepsilon $%
-equilibrium on the grid. Let the dynamic go over the points $%
d_{1},d_{2},... $ in sequence until the first time that $d_{T}^{i}\in 
\mathrm{BR}_{\varepsilon }^{i}(d_{T}^{-i})$ for all $i,$ following which $%
d_{T}$ is played forever. This is implemented by having for every player $i$
a distinct action $a_{0}^{i}\in A^{i}$ that is played at time $t$ only when $%
d_{t}^{i}\in \mathrm{BR}_{\varepsilon }^{i}(d_{t}^{-i})$ (otherwise a
different action is played); once the action combination $%
a_{0}=(a_{0}^{i})_{i\in N}\in A$ is played, say, at time $T,$ each player $i$
plays $d_{T}^{i}$ at all $t>T.$ This dynamic is uncoupled (each player only
considers $\mathrm{BR}_{\varepsilon }^{i})$ and has memory of size $2$
(i.e., $2$ elements of $X)$: for $t\leq T,$ it consists of $d_{t-1}$ and $%
a_{t-1}$ (the last checked point and the last played action combination);
for $t>T,$ it consists of $d_{T}$ and $a_{0}$. Of course, all players need
to coordinate before playing the game on the sequence $d_{1},d_{2},...,d_{M}$
and the action combination $a_{0}.$

\emph{(j) Continuous action spaces. }The result of Theorem \ref{th:finite}
easily extends to continuous action spaces and approximate \emph{pure} Nash
equilibria. Assume that for each player $i\in N$ the set of actions $A^{i}$
is a convex compact subset of some Euclidean space (such games arise, for
instance, from exchange economies where the actions are net trades; see,
e.g., Hart and Mas-Colell 2015). Thus $A=\prod {}_{i\in N}A^{i}$ is a
compact convex set in some Euclidean space, say, $\mathbb{R}^{m}.$

For every $\varepsilon \geq 0,$ the set of $\emph{pure}$ $\varepsilon $\emph{%
-best replies} of player $i$ to $a^{-i}\in a^{-i}$ is $\mathrm{PBR}%
_{\varepsilon }^{i}(a^{-i}):=\{a^{i}\in A^{i}:u^{i}(a^{i},a^{-i})\geq
\max_{b^{i}\in A^{i}}u^{i}(b^{i},a^{-i})-\varepsilon \}.$ An action
combination $a\in A$ is a \emph{pure} \emph{Nash }$\varepsilon $\emph{%
-equilibrium} if $a^{i}\in \mathrm{PBR}_{\varepsilon }^{i}(a^{-i})$ for
every $i\in N;$ let \textrm{PNE}$(\varepsilon )\subseteq A$ denote the set
of pure Nash $\varepsilon $-equilibria.

Smooth calibrated learning is defined as above, except that now the
approximate best replies are pure actions (the play is $a_{t}=g(c_{t}),$ and
it is monitored by all players). Our result here is:

\begin{theorem}
\label{th:cont}Fix the finite set of players $N,$ the convex compact action
spaces $A^{i}$ for all $i\in N,$ and the Lipschitz bound $L<\infty .$ For
every $\varepsilon >0,$ and any smooth calibrated learning dynamic with
appropriate parameters, there is $T_{0}\equiv T_{0}(\varepsilon ,L)$ such
that for every $T\geq T_{0},$ 
\[
\frac{1}{T}\left\vert \{t\leq T:a_{t}\in \mathrm{PNE}(\varepsilon
)\}\right\vert \geq 1-\varepsilon 
\]%
for every game with payoff functions $(u^{i})_{i\in N}$ that are $L$%
-Lipschitz (i.e., $\mathcal{L}(u^{i})\leq L)$ and quasi-concave in one's own
action (i.e., $u^{i}(a^{i},a^{-i})$ is quasi-concave in $a^{i}\in A^{i}$ for
every $a^{-i}\in A^{-i}),$ for all $i\in N.$
\end{theorem}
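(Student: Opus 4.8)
The plan is to follow the proof of Theorem~\ref{th:finite}, observing that in the continuous case the dynamic is \emph{deterministic}, so several steps simplify. Take the calibration set to be $C=A=\prod_{i\in N}A^{i}$, which is compact and convex in some $\mathbb{R}^{m}$, and have the C-player run the deterministic finite-recall smoothly calibrated procedure of Theorem~\ref{th:smooth-c} on $C$, producing forecasts $c_{t}\in A$; then all players play $a_{t}:=g(c_{t})$, where $g=(g^{i})_{i\in N}$ and $g^{i}:A\to A^{i}$ is an $L_{g}$-Lipschitz map with $g^{i}(a)\in\mathrm{PBR}^{i}_{\varepsilon_{g}}(a^{-i})$ for every $a\in A$. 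Since $c_{t}$ comes from a deterministic procedure and $a_{t}=g(c_{t})$, no randomization is involved, so the conclusion will be a sure statement with a uniform threshold $T_{0}(\varepsilon,L)$, and part (ii) of the earlier proof---the strong law of large numbers relating smoothed average actions to smoothed average behaviors---is vacuous here, because $a_{t}$ \emph{equals} the behavior $g(c_{t})$.

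With such $g^{i}$ in hand the remaining steps are essentially those of Theorem~\ref{th:finite}. First I would apply smooth calibration with $\Lambda$ the $L_{c}$-tent function, getting $\frac1T\sum_{t=1}^{T}\|\bar a_{t}^{\Lambda}-c_{t}^{\Lambda}\|\le\varepsilon_{c}$ for $T>T_{0}$. Next, since $\Lambda(c_{s},c_{t})>0$ forces $\|c_{s}-c_{t}\|<1/L_{c}$, I get $\|c_{t}^{\Lambda}-c_{t}\|<1/L_{c}$, and from $\|a_{s}-a_{t}\|=\|g(c_{s})-g(c_{t})\|\le nL_{g}\|c_{s}-c_{t}\|$ also $\|\bar a_{t}^{\Lambda}-a_{t}\|<nL_{g}/L_{c}$; combining gives $\frac1T\sum_{t=1}^{T}\|a_{t}-c_{t}\|\le\varepsilon_{c}+(1+nL_{g})/L_{c}=:\varepsilon_{3}$. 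A Markov-type inequality bounds the fraction of periods $t\le T$ with $\|a_{t}-c_{t}\|>\varepsilon_{4}$ by $\varepsilon_{3}/\varepsilon_{4}$. Finally, whenever $\|a_{t}-c_{t}\|\le\varepsilon_{4}$, the $L$-Lipschitz bound on $u^{i}$ and the fact that $a_{t}^{i}=g^{i}(c_{t})\in\mathrm{PBR}^{i}_{\varepsilon_{g}}(c_{t}^{-i})$ while $\|a_{t}^{-i}-c_{t}^{-i}\|\le\varepsilon_{4}$ yield $u^{i}(a_{t}^{i},a_{t}^{-i})\ge\max_{b^{i}\in A^{i}}u^{i}(b^{i},a_{t}^{-i})-\varepsilon_{g}-2L\varepsilon_{4}$, i.e.\ $a_{t}\in\mathrm{PNE}(\varepsilon_{g}+2L\varepsilon_{4})$; in contrast to the finite case, no second application of $g$ is needed because the monitored play is itself the approximate best reply. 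Tweaking the constants as in part (v) of the proof of Theorem~\ref{th:finite}---$\varepsilon_{g}$ of order $\varepsilon$, then $\varepsilon_{4}$ small relative to $\varepsilon/(LL_{g})$, then $\varepsilon_{c}$ and $L_{c}$ so that both $\varepsilon_{3}/\varepsilon_{4}$ and $\varepsilon_{g}+2L\varepsilon_{4}$ are at most $\varepsilon$---gives $\frac1T|\{t\le T:a_{t}\in\mathrm{PNE}(\varepsilon)\}|\ge1-\varepsilon$ for all $T\ge T_{0}(\varepsilon,L)$; uncoupledness holds because $g^{i}$ may be chosen depending only on $u^{i}$.

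The main obstacle is producing the Lipschitz $\varepsilon_{g}$-approximate pure best-reply map $g^{i}:A\to A^{i}$---the continuous analogue of Lemma~\ref{p:BR}---and this is exactly where quasi-concavity enters. I would build $g^{i}$ by mollification: take any measurable selection of the upper-hemicontinuous correspondence $a^{-i}\mapsto\mathrm{PBR}^{i}_{0}(a^{-i})$, convolve it with a smooth bump of small radius $\rho$ (dealing with the boundary of $A^{-i}$ in the standard way), and obtain a $C^{\infty}$---hence, on the compact set $A$, $L_{g}$-Lipschitz---map whose Lipschitz constant grows like an inverse power of $\rho$ as $\rho\to0$. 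The mollified value at $a^{-i}$ is an average of points $g^{i}_{\mathrm{sel}}(\tilde a^{-i})$ with $\|\tilde a^{-i}-a^{-i}\|\le\rho$; by uniform continuity of $u^{i}$ each such point lies in $\mathrm{PBR}^{i}_{\varepsilon_{g}}(a^{-i})$ once $\rho$ is small enough, and since $u^{i}(\cdot,a^{-i})$ is quasi-concave this set is \emph{convex}, so the average again lies in it. A minor point to verify along the way is that the smoothly calibrated procedure of Theorem~\ref{th:smooth-c}, stated for an arbitrary compact convex $C$ with $A\subseteq C$, indeed applies with the choice $C=A$.
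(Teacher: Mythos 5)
Your proposal is correct and its skeleton is exactly the paper's: the published proof of Theorem \ref{th:cont} consists of the single observation that $A^{i}=X^{i}$, $a_{t}=x_{t}=g(c_{t})$, everything is deterministic, and one proceeds as in Theorem \ref{th:finite} with part (ii) (the strong law of large numbers) skipped and $\varepsilon_{1}=0$, $T_{1}=0$ --- precisely the reduction you describe, including the deterministic uniform threshold $T_{0}(\varepsilon,L)$ and the Markov-type inequality. The one place you take a genuinely different route is the construction of the Lipschitz $\varepsilon_{g}$-approximate pure best-reply maps $g^{i}$. You build them by mollifying a measurable selection of the exact best-reply correspondence, using quasi-concavity (convexity of $\mathrm{PBR}^{i}_{\varepsilon_{g}}(a^{-i})$) to keep the convolution average inside the approximate best-reply set; this works --- the Lipschitz constant of the mollified map is of order $\mathrm{diam}(A^{i})/\rho$, and the boundary and measurable-selection points you flag are routine --- but it is not needed: Lemma \ref{p:BR} in the Appendix is stated so as to cover the continuous case directly (take $C=X=A$, identify $\Delta(A^{i})$ with $A^{i}$, and read $\mathrm{BR}^{i}_{\varepsilon}$ as $\mathrm{PBR}^{i}_{\varepsilon}$), via a Lipschitz partition of unity subordinate to a finite net with exact best replies chosen at the net points --- the same convexity-from-quasi-concavity argument, but with the explicit bound $\mathcal{L}(g^{i})\leq \nu_{m}(L/\varepsilon)^{m+1}$ and no selection or convolution machinery, which is what makes the parameter choices in part (v) explicit. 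A second, harmless deviation: in step (iv) you compare $a_{t}^{i}$'s payoff against $c_{t}^{-i}$ and then pass to $a_{t}^{-i}$, obtaining the error $\varepsilon_{g}+2L\varepsilon_{4}$, whereas the paper's argument applies $g$ a second time and pays $\varepsilon_{g}+\mathrm{O}(LL_{g})\varepsilon_{4}$; your version is slightly cleaner and only changes the bookkeeping of the constants.
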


\begin{proof}
We now have $A^{i}=X^{i}$ and $a_{t}=x_{t}=g(c_{t}),$ and everything is
deterministic. Proceed as in the proof of Theorem \ref{th:finite}, skipping
part (ii) (the use of the Law of Large Numbers) and taking $\varepsilon
_{1}=0$ and $T_{1}=0.$
\end{proof}

\bigskip

\emph{(k) Reaction function and fixed points. }The proof of Theorem \ref%
{th:finite} shows that in the leaky calibration game, if the A-player uses a
stationary strategy given by a Lipschitz \textquotedblleft reaction"
function $g$ (i.e., he plays $g(c_{t})$ at time $t)$, then smooth
calibration implies that the forecasts $c_{t}$ are close to fixed points of $%
g$ most of the time.

\appendix{}

\section{Appendix\label{s:appendix}}

Let $C$ be a compact subset of $\mathbb{R}^{m}$ and let $\varepsilon >0.$ A 
\emph{maximal }$2\varepsilon $\emph{-net} in $C$ is a maximal collection of
points $z_{1},...,z_{K}\in C$ such that $\left\Vert z_{k}-z_{j}\right\Vert
\geq 2\varepsilon $ for all $k\neq j;$ maximality implies $\cup
_{k=1}^{K}B(z_{k},2\varepsilon )\supseteq C.$ Let $\alpha
_{k}(x):=[3\varepsilon -\left\Vert x-z_{k}\right\Vert ]_{+},$ and put $\bar{%
\alpha}(x):=\sum_{k=1}^{K}\alpha _{k}(x).$ For every $x\in C$ we have $0\leq
\alpha _{k}(x)\leq 3\varepsilon $ and $\bar{\alpha}(x)\geq \varepsilon $
(since $\alpha _{k}(x)\geq \varepsilon $ when $x\in B(z_{k},2\varepsilon ),$
and the union of these balls covers $C).$ Finally, define $\beta
_{k}(x):=\alpha _{k}(x)/\bar{\alpha}(x).$

\begin{lemma}
\label{l:partition-of -unity}The functions $(\beta _{k})_{1\leq k\leq K}$
satisfy the following properties:

\begin{description}
\item[(i)] $\beta _{k}(x)\geq 0$ for all $x\in C\ $and all $k.$

\item[(ii)] $\sum_{k=1}^{K}\beta _{k}(x)=1$ for all $x\in C.$

\item[(iii)] $\beta _{k}(x)=0$ for all $x\notin B(z_{k},3\varepsilon ).$

\item[(iv)] For each $x\in C$ there are at most\footnote{%
We have not tried to get the best bounds in (iv) and (v); indeed, they may
be easily reduced.} $4^{m}$ indices $k$ such that $\beta _{k}(x)>0.$

\item[(v)] $\mathcal{L}(\beta _{k})\leq 4^{m+2}/\varepsilon $ for every $k.$
\end{description}
\end{lemma}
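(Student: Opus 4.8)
The plan is to verify the five items essentially straight from the definitions, the only substantive points being (iv) and (v). Items (i)--(iii) are immediate: since $\alpha_k(x)\geq 0$ and, as noted above, $\bar\alpha(x)\geq\varepsilon>0$ for every $x\in C$, the quotient $\beta_k(x)=\alpha_k(x)/\bar\alpha(x)$ is well defined and nonnegative, which is (i); summing over $k$ gives $\sum_{k}\beta_k(x)=\bar\alpha(x)/\bar\alpha(x)=1$, which is (ii); and if $\|x-z_k\|\geq 3\varepsilon$ then $\alpha_k(x)=[3\varepsilon-\|x-z_k\|]_+=0$, hence $\beta_k(x)=0$, which is (iii).

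For (iv) I would use a packing argument. Fix $x\in C$ and, using (iii), observe that $\beta_k(x)>0$ forces $\|x-z_k\|<3\varepsilon$. The balls $B(z_k,\varepsilon)$ over these indices $k$ are pairwise disjoint, since the net points are at pairwise distance $\geq 2\varepsilon$, and each such ball is contained in $B(x,4\varepsilon)$ by the triangle inequality. Comparing Lebesgue volumes, the number $N$ of such indices satisfies $N\,\omega_m\varepsilon^m\leq\omega_m(4\varepsilon)^m$, where $\omega_m$ is the volume of the unit ball in $\mathbb{R}^m$; hence $N\leq 4^m$.

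For (v), the main computation, I would first note that each $\alpha_k$ is $1$-Lipschitz on $\mathbb{R}^m$, being the composition of the $1$-Lipschitz map $x\mapsto\|x-z_k\|$ with the $1$-Lipschitz map $t\mapsto[3\varepsilon-t]_+$. Next I would argue that $\bar\alpha=\sum_k\alpha_k$ has Lipschitz constant at most $4^m$: near any given point only the $\alpha_k$ with $\|x-z_k\|<3\varepsilon$ can be non-constant --- each $\alpha_k$ is identically zero on the open set $\{y:\|y-z_k\|>3\varepsilon\}$ --- and by the count in (iv) there are at most $4^m$ of these. Finally I would combine these facts with the elementary quotient estimate
\[
\left|\frac{\alpha_k(x)}{\bar\alpha(x)}-\frac{\alpha_k(y)}{\bar\alpha(y)}\right|\leq\frac{|\alpha_k(x)-\alpha_k(y)|}{\bar\alpha(x)}+\alpha_k(y)\,\frac{|\bar\alpha(x)-\bar\alpha(y)|}{\bar\alpha(x)\,\bar\alpha(y)},
\]
together with $\bar\alpha\geq\varepsilon$, $\alpha_k\leq 3\varepsilon$, $\mathcal{L}(\alpha_k)\leq 1$ and $\mathcal{L}(\bar\alpha)\leq 4^m$, to obtain $|\beta_k(x)-\beta_k(y)|\leq(1+3\cdot 4^m)\|x-y\|/\varepsilon\leq(4^{m+2}/\varepsilon)\|x-y\|$, since $1+3\cdot 4^m\leq 4^{m+1}\leq 4^{m+2}$. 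I do not expect any genuine obstacle; the one point worth a line of care is the passage in (v) from the pointwise bound of (iv) to a global Lipschitz constant for $\bar\alpha$, which is legitimate precisely because each summand $\alpha_k$ is locally constant (indeed zero) off $\overline{B(z_k,3\varepsilon)}$, so at most $4^m$ of the $\alpha_k$ are non-constant on a small enough neighborhood of any point.
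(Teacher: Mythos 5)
Your argument follows the paper's proof almost verbatim: (i)--(iii) from the definitions, (iv) by the same packing comparison of the disjoint balls $B(z_k,\varepsilon)$ inside $B(x,4\varepsilon)$, and (v) by the same two-term quotient estimate combined with $\bar\alpha\geq\varepsilon$ and $\alpha_k\leq 3\varepsilon$. The one place you deviate is the intermediate claim $\mathcal{L}(\bar\alpha)\leq 4^m$, and there your justification is not quite right as stated: for an index $k$ with $\left\Vert x-z_k\right\Vert =3\varepsilon$ (or only slightly larger), the function $\alpha_k$ is \emph{not} constant on any neighborhood of $x$ even though $\beta_k(x)=0$, so the count in (iv) does not directly bound the number of non-constant summands near $x$. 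This is easily repaired in either of two ways: (a) for a sufficiently short segment the relevant centers lie in $B(x,3\varepsilon+\delta)$, and the packing bound gives a count at most $(4+\delta/\varepsilon)^m<4^m+1$, hence at most $4^m$ since the count is an integer, after which you sum over subdivided segments; or (b) do what the paper does and, for arbitrary $x,y$, bound the number of indices $j$ with $\alpha_j(x)\neq\alpha_j(y)$ by $2\cdot 4^m$ (any such $z_j$ is within $3\varepsilon$ of $x$ or of $y$), which yields $\left\vert \beta_k(x)-\beta_k(y)\right\vert \leq (1+6\cdot 4^m)\left\Vert x-y\right\Vert /\varepsilon\leq 4^{m+2}\left\Vert x-y\right\Vert /\varepsilon$, still within the stated constant. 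With that one-line patch your proof is complete and coincides with the paper's.
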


\begin{proof}
(i) and (ii) are immediate. For (iii), we have $\beta _{k}(x)>0$ iff $\alpha
_{k}(x)>0$ iff $\left\Vert x-z_{k}\right\Vert <3\varepsilon .$ This implies
that $B(z_{k},\varepsilon )\subseteq B(x,4\varepsilon ).$ The open balls of
radius $\varepsilon $ with centers at $z_{k}$ are disjoint (because $%
\left\Vert z_{k}-z_{j}\right\Vert \geq 2\varepsilon $ for $k\neq j),$ and so
there can be at most $4^{m}$ such balls included in $B(x,4\varepsilon )$
whose volume is $4^{m}$ times larger; this proves (iv). For every $x,y\in C$:%
\begin{eqnarray*}
\left\vert \beta _{k}(x)-\beta _{k}(y)\right\vert &\leq &\left\vert \frac{%
\alpha _{k}(x)}{\bar{\alpha}(x)}-\frac{\alpha _{k}(y)}{\bar{\alpha}(x)}%
\right\vert +\left\vert \frac{\alpha _{k}(y)}{\bar{\alpha}(x)}-\frac{\alpha
_{k}(y)}{\bar{\alpha}(y)}\right\vert \\
&\leq &\frac{1}{\bar{\alpha}(x)}\left\vert \alpha _{k}(x)-\alpha
_{k}(y)\right\vert +\frac{\alpha _{k}(y)}{\bar{\alpha}(x)\bar{\alpha}(y)}%
\sum_{j=1}^{K}\left\vert \alpha _{j}(x)-\alpha _{j}(y)\right\vert \\
&\leq &\frac{1}{\varepsilon }\left\Vert x-y\right\Vert +\frac{3\varepsilon }{%
\varepsilon \cdot \varepsilon }2\cdot 4^{m}\left\Vert x-y\right\Vert \leq 
\frac{4^{m+2}}{\varepsilon }\left\Vert x-y\right\Vert
\end{eqnarray*}%
(since $\bar{\alpha}(x)\geq \varepsilon $, $\alpha _{k}(x)\leq 3\varepsilon
, $ and there are at most $2\cdot 4^{m}$ indices $j$ where neither $\alpha
_{j}(x)$ nor $\alpha _{j}(y)$ vanishes); this proves (v).
\end{proof}

\bigskip

Thus, the functions $(\beta _{k})_{1\leq k\leq K}$ constitute a \emph{%
Lipschitz partition of unity} that is subordinate to the maximal $%
2\varepsilon $-net $z_{1},...,z_{K}$. Next, we obtain a basis for the
Lipschitz functions on $C.$

\begin{lemma}
\label{l:lipschitz-basis}Let $W_{L}$ be the set of functions $w:C\rightarrow
\lbrack 0,1]$ with $\mathcal{L}(w)\leq L.$ Then for every $\varepsilon >0$
there exist $d$ functions $f_{1},...,f_{d}\in W_{L}$ such that for every $%
w\in W_{L}$ there is a vector $\varpi \equiv \varpi _{w}\in \lbrack 0,1]^{d}$
satisfying 
\[
\max_{x\in C}\left\vert w(x)-\sum_{i=1}^{d}\varpi _{i}f_{i}(x)\right\vert
<\varepsilon . 
\]%
Moreover, one can take $d=\mathrm{O}(L^{m}/\varepsilon ^{m+1}).$
\end{lemma}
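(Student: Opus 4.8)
The plan is to discretize both the domain $C$ and the range of functions in $W_{L}$, and then express an arbitrary $w\in W_{L}$ approximately as a nonnegative combination of finitely many fixed ``building block'' functions, each living in $W_{L}$ itself. First I would fix a scale $\delta$ (to be chosen of order $\varepsilon/L$) and take a maximal $2\delta$-net $z_{1},\dots,z_{K}$ in $C$, together with the associated Lipschitz partition of unity $(\beta_{k})_{1\le k\le K}$ from Lemma \ref{l:partition-of -unity}; recall that $\mathcal{L}(\beta_{k})\le 4^{m+2}/\delta$, the $\beta_{k}$ are nonnegative, sum to $1$, and are supported on balls of radius $3\delta$. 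The idea is that any $w\in W_{L}$ is well approximated by $\sum_{k}w(z_{k})\beta_{k}$: indeed on the support of $\beta_{k}$ one has $|w(x)-w(z_{k})|\le L\cdot 3\delta$, so $|w(x)-\sum_{k}w(z_{k})\beta_{k}(x)|\le 3L\delta$ for every $x\in C$. Choosing $\delta$ so that $3L\delta<\varepsilon/2$ makes this error less than $\varepsilon/2$.

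The remaining issue is that the coefficients $w(z_{k})$ range over the continuum $[0,1]$, so $\sum_{k}w(z_{k})\beta_{k}$ is not yet a combination of \emph{finitely many fixed} functions. To fix this I would further discretize the coefficient values: let $\rho:=\varepsilon/(2K)$ (or any value small enough that $K\rho<\varepsilon/2$) and round each $w(z_{k})$ down to the nearest multiple of $\rho$ in $[0,1]$, say $w(z_{k})\approx \sum_{\ell=1}^{1/\rho}\rho\,\mathbf{1}\{w(z_{k})\ge \ell\rho\}$. Then $\sum_{k}w(z_{k})\beta_{k}$ is within $K\rho<\varepsilon/2$ (in sup norm) of $\sum_{k}\sum_{\ell}\rho\,\mathbf{1}\{w(z_{k})\ge\ell\rho\}\,\beta_{k}$. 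Grouping terms, the candidate basis functions are, for each subset $S\subseteq\{1,\dots,K\}$ appearing as a level set $\{k: w(z_{k})\ge\ell\rho\}$, the function $\rho\sum_{k\in S}\beta_{k}$; but more simply, since the level sets are nested, I can take as basis functions $f_{S}:=\rho\sum_{k\in S}\beta_{k}$ only for the finitely many $S$ that actually arise. The cleanest bookkeeping is: the building blocks are $\{\rho\beta_{k}\}_{1\le k\le K}$ scaled appropriately to lie in $[0,1]$ and multiplied by bounded integer coefficients --- but to respect the requirement $f_{i}\in W_{L}$ and $\varpi_{i}\in[0,1]$, one groups the $\beta_{k}$'s so that each $f_{i}$ is a partial sum of $\beta_{k}$'s over a block of size roughly $\delta^{2}L/4^{m+2}$, guaranteeing $\mathcal{L}(f_{i})\le L$, while the coefficients $\varpi_{i}$ encode the (quantized) values $w(z_{k})$ and lie in $[0,1]$. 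Tallying: the number of net points is $K=\mathrm{O}((L/\varepsilon)^{m})$ up to dimensional constants, the number of coefficient levels is $\mathrm{O}(1/\rho)=\mathrm{O}(K/\varepsilon)$, and an honest accounting of how many distinct fixed functions $f_{i}$ are needed gives $d=\mathrm{O}(L^{m}/\varepsilon^{m+1})$.

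The main obstacle I anticipate is not the approximation estimate --- that is just the triangle inequality with the two halves $3L\delta$ and $K\rho$ --- but the combinatorial packaging: arranging the construction so that simultaneously (a) each $f_{i}$ is genuinely $L$-Lipschitz, not merely Lipschitz with a larger constant (this forces each $f_{i}$ to be a sum of sufficiently many $\beta_{k}$'s, since a single $\beta_{k}$ has Lipschitz constant $4^{m+2}/\delta\gg L$, so one sums $\Theta(4^{-m-2}\delta L)$-sized blocks of them); (b) the coefficients $\varpi_{i}$ land in $[0,1]$; and (c) the total count $d$ comes out to $\mathrm{O}(L^{m}/\varepsilon^{m+1})$ rather than something larger. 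Getting all three to hold at once requires choosing $\delta$, $\rho$, and the block size consistently; I would set $\delta\asymp\varepsilon/L$, block size $\asymp \varepsilon/(L\cdot 4^{m+2})$ so each $f_{i}$ aggregates enough $\beta_{k}$'s to be $L$-Lipschitz while the number of blocks stays $\mathrm{O}(L^{m}/\varepsilon^{m})$, and $\rho\asymp\varepsilon/K$ so the coefficient quantization contributes the right number of additional functions, landing at the stated bound.
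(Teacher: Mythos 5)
There is a genuine gap, and it sits exactly at the step you flag as ``combinatorial packaging.'' Your first half is fine and is the same as the paper's: with a maximal $2\delta$-net and the partition of unity $(\beta_{k})$, one has $|w(x)-\sum_{k}w(z_{k})\beta_{k}(x)|\leq 3L\delta$ uniformly. But your repair of the Lipschitz problem does not work: summing $\beta_{k}$'s over a block does \emph{not} reduce the Lipschitz constant. A partial sum $\sum_{k\in S}\beta_{k}$ equals $1$ in the interior of the region covered by $S$ and $0$ outside it, and it transitions between these values over a distance of order $\delta$, so its Lipschitz constant is of order $1/\delta\asymp L/\varepsilon\gg L$, no matter how many $\beta_{k}$'s you aggregate. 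So condition (a) in your plan fails for every choice of block size, and the construction as described does not produce functions in $W_{L}$. A second, independent problem is that your ``level set'' functions $f_{S}=\rho\sum_{k\in S}\beta_{k}$ are indexed by the sets $\{k:w(z_{k})\geq\ell\rho\}$, which depend on $w$; the lemma requires the $f_{i}$ to be fixed in advance, with only the coefficients $\varpi_{w}$ allowed to depend on $w$. (If instead you keep the fixed functions $\rho\beta_{k}$ and absorb the values $w(z_{k})$ into integer multiplicities, the count becomes of order $K/\rho\asymp L^{2m}/\varepsilon^{2m+1}$, overshooting the stated bound.)

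The missing observation is that no quantization of the coefficients is needed at all: the lemma permits $w$-dependent coefficients in $[0,1]$, and the natural coefficients $w(z_{k})$ already lie in $[0,1]$. The only obstruction is that $\beta_{k}\notin W_{L}$, and the correct fix is to scale \emph{down}, not to aggregate: replace each $\beta_{k}$ by $Q:=\left\lceil 4^{m+2}/(\varepsilon_{1}L)\right\rceil$ identical copies of $(1/Q)\beta_{k}$ (with $\varepsilon_{1}=\varepsilon/(3L)$), each of which is $[0,1]$-valued and $L$-Lipschitz since $\mathcal{L}\bigl((1/Q)\beta_{k}\bigr)\leq(1/Q)\cdot 4^{m+2}/\varepsilon_{1}\leq L$; assign coefficient $w(z_{k})$ to every copy, so that the copies sum back to $w(z_{k})\beta_{k}$ and the approximation error remains $3\varepsilon_{1}L=\varepsilon$. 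This gives $d=KQ$ fixed functions with $K=\mathrm{O}(\varepsilon_{1}^{-m})$ and $Q=\mathrm{O}(1/(\varepsilon_{1}L))$, hence $d=\mathrm{O}(\varepsilon_{1}^{-m-1}L^{-1})=\mathrm{O}(L^{m}/\varepsilon^{m+1})$, which is the paper's argument and the bound claimed in the lemma.
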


\begin{proof}
Put $\varepsilon _{1}:=\varepsilon /(3L).$ Let $z_{1},...,z_{K}$ be a
maximal $2\varepsilon _{1}$-net on $C,$ and let $\beta _{1},...,\beta _{K}$
be the corresponding Lipschitz partition of unity given by Lemma \ref%
{l:partition-of -unity} (for $\varepsilon _{1}).$

Given $w\in W_{L},$ let $v(x):=\sum_{k=1}^{N}w(z_{k})\beta _{k}(x);$ then $%
w(z_{k})\in \lbrack 0,1]$ and we have%
\begin{eqnarray*}
\left\vert w(x)-v(x)\right\vert &=&\left\vert \sum_{k=1}^{N}\left(
w(x)-w(z_{k})\right) \beta _{k}(x)\right\vert \leq \sum_{k:\beta
_{k}(x)>0}\beta _{k}(x)\left\vert w(x)-w(z_{k})\right\vert \\
&\leq &\sum_{k:\beta _{k}(x)>0}\beta _{k}(x)3\varepsilon _{1}L=3\varepsilon
_{1}L,
\end{eqnarray*}%
since $\beta _{k}(x)>0$ implies $\left\Vert x-z_{k}\right\Vert <3\varepsilon
_{1}$ and thus $|w(x)-w(z_{k})|\leq L\left\Vert x-z_{k}\right\Vert \leq
L\cdot 3\varepsilon _{1}$ (because $\mathcal{L(}w)\leq L).$

Now $\mathcal{L}(\beta _{k})\leq 4^{m+2}/\varepsilon _{1}$ by (v) of Lemma %
\ref{l:partition-of -unity}; we thus replace each $\beta _{k}$ by the sum of 
$Q=\left\lceil 4^{m+2}/(\varepsilon _{1}L)\right\rceil $ identical copies of 
$(1/Q)\beta _{k}$---denote them $f_{k,1},...,f_{k,Q}$---which thus satisfy $%
\mathcal{L}(f_{k,q})=(1/Q)\mathcal{L}(\beta _{k})\leq L,$ and so%
\[
\left\vert w(x)-\sum_{k=1}^{K}\sum_{q=1}^{Q}w(z_{k})f_{k,q}(x)\right\vert
=\left\vert w(x)-v(x)\right\vert \leq 3\varepsilon _{1}L=\varepsilon . 
\]%
The $d=KQ$ functions $(f_{k,q})_{1\leq k\leq K,1\leq q\leq Q}$ yield our
result.

Finally, $K=\mathrm{O}(\varepsilon _{1}^{-m})$ (because $C$ contains the $K$
disjoint open balls of radius $\varepsilon _{1}$ centered at the $z_{k})$
and $Q\leq 4^{m+2}/(\varepsilon _{1}L)+1,$ and so $d=KQ=\mathrm{O}%
(\varepsilon _{1}^{-m-1}L^{-1})=\mathrm{O}(\varepsilon ^{-m-1}L^{m})$.
\end{proof}

\bigskip

In the game setup we construct $\varepsilon $-best reply functions that are
Lipschitz. The following lemma applies when the action spaces are finite (as
in Theorem \ref{th:finite}), and also when they are continuous (as in
Theorem \ref{th:cont}). In the former $C=X=\prod_{i\in N}X^{i}$ where $%
X^{i}=\Delta (A^{i}),$ and in the latter $C=X=A=\prod_{i\in N}A^{i},$ and
the set $\Delta (A^{i})$ is identified with $A^{i}$; also, $\mathrm{BR}%
_{\varepsilon }^{i}$ stands for $\mathrm{PBR}_{\varepsilon }^{i},$ the set
of \emph{pure} $\varepsilon $-best replies.

\begin{lemma}
\label{p:BR}Assume that for each player $i\in N$ the function $%
u^{i}:X\rightarrow \mathbb{R}$ is a Lipschitz function with $\mathcal{L}%
(u^{i})\leq L,$ and $u^{i}(\cdot ,c^{-i})$ is quasi-concave on $X^{i}$ for
every fixed $c^{-i}\in X^{-i}.$ Then for every $\varepsilon >0$ there is a
Lipschitz function $g^{i}:X\rightarrow X^{i}$ such that $g^{i}(c)\in \mathrm{%
BR}_{\varepsilon }^{i}(c^{-i})$ for all $c\in X,$ and $\mathcal{L}%
(g^{i})\leq \nu _{m}(L/\varepsilon )^{m+1}$ where the constant $\nu _{m}$
depends only on the dimension $m.$
\end{lemma}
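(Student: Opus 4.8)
The plan is to build $g^{i}$ as a Lipschitz interpolation, over a fine net, of arbitrarily chosen \emph{exact} best replies, and to invoke the quasi-concavity hypothesis to certify that the interpolant is still an \emph{approximate} best reply. Since the best-reply requirement on $g^{i}(c)$ constrains only the dependence on $c^{-i}$, I would make $g^{i}$ a function of $c^{-i}$ alone; because $c\mapsto c^{-i}$ is $1$-Lipschitz, the Lipschitz constant of $g^{i}$ as a map on $X$ is then no larger than as a map on $X^{-i}$. Concretely, put $\delta:=\varepsilon/(6L)$, let $z_{1},\dots,z_{K}$ be a maximal $2\delta$-net in the compact set $X^{-i}$, and let $\beta_{1},\dots,\beta_{K}$ be the subordinate Lipschitz partition of unity from Lemma~\ref{l:partition-of -unity} (applied in $X^{-i}$ with net parameter $\delta$). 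For each $k$ choose some $b_{k}\in X^{i}$ with $u^{i}(b_{k},z_{k})=\max_{y\in X^{i}}u^{i}(y,z_{k})$, and set $g^{i}(c):=\sum_{k=1}^{K}\beta_{k}(c^{-i})\,b_{k}$, which lies in $X^{i}$ because $X^{i}$ is convex and the weights $\beta_{k}(c^{-i})\ge 0$ sum to $1$.

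To check $g^{i}(c)\in\mathrm{BR}_{\varepsilon}^{i}(c^{-i})$, fix $c$, put $\mu:=\max_{y\in X^{i}}u^{i}(y,c^{-i})$, and observe that if $\beta_{k}(c^{-i})>0$ then $\|c^{-i}-z_{k}\|<3\delta$ by the support property of the partition of unity, so by $L$-Lipschitzness of $u^{i}$,
\[
u^{i}(b_{k},c^{-i})\ \ge\ u^{i}(b_{k},z_{k})-3L\delta\ =\ \max_{y}u^{i}(y,z_{k})-3L\delta\ \ge\ \mu-6L\delta\ =\ \mu-\varepsilon .
\]
Hence every $b_{k}$ carrying positive weight lies in the superlevel set $S:=\{x^{i}\in X^{i}:u^{i}(x^{i},c^{-i})\ge\mu-\varepsilon\}$, which is \emph{convex} because $u^{i}(\cdot,c^{-i})$ is quasi-concave; therefore the convex combination $g^{i}(c)$ lies in $S$ as well, i.e.\ $g^{i}(c)$ is an $\varepsilon$-best reply to $c^{-i}$.

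For the Lipschitz estimate, $g^{i}(c)-g^{i}(\hat c)=\sum_{k}\bigl(\beta_{k}(c^{-i})-\beta_{k}(\hat c^{-i})\bigr)b_{k}$ has at most $2\cdot 4^{m}$ nonzero terms (property (iv) of Lemma~\ref{l:partition-of -unity}), each $\beta_{k}$ is $(4^{m+2}/\delta)$-Lipschitz (property (v)), and $\|b_{k}\|\le\sqrt{m}$ (as $X^{i}\subseteq[0,1]^{m^{i}}$), so $\mathcal{L}(g^{i})\le 2\cdot 4^{2m+2}\sqrt{m}\,\delta^{-1}=\nu_{m}L/\varepsilon$ for a constant $\nu_{m}$ depending only on $m$. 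This already beats the stated bound $\nu_{m}(L/\varepsilon)^{m+1}$: if $\varepsilon\ge L\sqrt{m}$ (more than the payoff range on $X^{i}$) one may simply take $g^{i}$ constant, and otherwise $L/\varepsilon\ge 1/\sqrt{m}$, so enlarging $\nu_{m}$ by an $m$-dependent factor converts $\nu_{m}L/\varepsilon$ into $\nu_{m}(L/\varepsilon)^{m+1}$. In the continuous setting of Theorem~\ref{th:cont} the same construction works verbatim, with $A^{i}=X^{i}$ convex and $\mathrm{PBR}_{\varepsilon}^{i}$ replacing $\mathrm{BR}_{\varepsilon}^{i}$ throughout.

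The one step beyond bookkeeping is the passage from ``each $b_{k}$ carrying positive weight is an $\varepsilon$-best reply'' to ``$g^{i}(c)$ is an $\varepsilon$-best reply'': this is exactly where the quasi-concavity hypothesis is indispensable---it makes the approximate-best-reply set convex, hence stable under interpolation---and it is the reason a Lipschitz, or even merely continuous, approximate-best-reply selection exists at all. Everything else reuses the partition-of-unity machinery already assembled in Lemma~\ref{l:partition-of -unity}.
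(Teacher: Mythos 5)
Your construction is essentially the paper's own: a maximal $2\delta$-net with $\delta=\varepsilon/(6L)$, the subordinate Lipschitz partition of unity of Lemma~\ref{l:partition-of -unity}, exact best replies chosen at the net points, and quasi-concavity (convexity of the $\varepsilon$-best-reply set) to keep the convex combination inside $\mathrm{BR}_{\varepsilon}^{i}(c^{-i})$; the argument is correct. The only real difference is bookkeeping: by exploiting the local finiteness property (iv) you get the sharper estimate $\mathcal{L}(g^{i})=\mathrm{O}_{m}(L/\varepsilon)$, whereas the paper sums $\left\Vert x_{k}\right\Vert \mathcal{L}(\beta _{k})$ over all $K$ net points and so only gets $\nu _{m}(L/\varepsilon )^{m+1}$, and your reduction of the sharper bound to the stated one (constant $g^{i}$ when $\varepsilon \geq L\sqrt{m}$, otherwise $L/\varepsilon \geq 1/\sqrt{m}$) is fine.
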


\begin{proof}
Put $\varepsilon _{1}:=\varepsilon /(6L).$ Let $z_{1},...,z_{K}\in C$ be a
maximal $2\varepsilon _{1}$-net on $C$, and let $\beta _{1},...,\beta _{K}$
be the subordinated Lipschitz partition of unity given by Lemma \ref%
{l:partition-of -unity}. For each $i\in N$ and $1\leq k\leq K$ take $%
x_{k}^{i}\in \mathrm{BR}_{0}^{i}(z_{k}^{-i}),$ and define $%
g^{i}(c):=\sum_{k=1}^{K}\beta _{k}(c)x_{k}^{i}.$ Because $\beta _{k}(c)>0$
if and only if $\left\Vert c-z_{k}\right\Vert <3\varepsilon _{1},$ it
follows that $x_{k}^{i}\in \mathrm{BR}_{\varepsilon }^{i}(c^{-i})$ (indeed,
for every $y^{i}\in \Delta (A^{i})$ we have $%
u^{i}(x_{k}^{i},c^{-i})>u^{i}(x_{k}^{i},z_{k}^{-i})-3L\varepsilon _{1}\geq
u^{i}(y^{i},z_{k}^{-i})-3L\varepsilon _{1}>u^{i}(y^{i},c^{-i})-6L\varepsilon
_{1}=\varepsilon ,$ where we have used $\mathcal{L}(u^{i})\leq L$ twice, and 
$x_{k}^{i}\in \mathrm{BR}_{0}^{i}(z_{k}^{-i})).$ The set $\mathrm{BR}%
_{\varepsilon }^{i}(c^{-i})$ is convex by the quasi-concavity assumption,
and so $g^{i}(c),$ as an average of such $x_{k}^{i},$ belongs to $\mathrm{BR}%
_{\varepsilon }^{i}(c^{-i}).$

Now $\max_{c\in C}||c||\leq \sqrt{m}$ (because $C\subseteq \lbrack
0,1]^{m}), $ and so $\left\Vert x_{k}\right\Vert \leq \sqrt{m}$ (where $%
x_{k}=(x_{k}^{i})_{i\in N})$ for all $k,$ and $K\leq (\sqrt{m}/\varepsilon
_{1})^{m}$ (because $C\subseteq B(0,\sqrt{m})$ contains the $K$ disjoint
open balls of radius $\varepsilon _{1}$ centered at the points $z_{k}).$
Therefore the Lipschitz constant of $g(c)=\sum_{k=1}^{K}\beta _{k}(c)x_{k}$
satisfies, by Lemma \ref{l:partition-of -unity} (v), $\mathcal{L}(g)\leq
\sum_{k=1}^{K}\left\Vert x_{k}\right\Vert \mathcal{L(}\beta _{k})\leq (\sqrt{%
m}/\varepsilon _{1})^{m}\,\sqrt{m}\,4^{m+2}/\varepsilon _{1}=\nu
_{m}\varepsilon ^{-m-1}L^{m+1}$ for $\nu _{m}=\sqrt{m}^{m+1}4^{m+2}6^{m+1}$.
\end{proof}

\end{document}